\documentclass[amxtex]{amsart}
\usepackage{graphicx}
\usepackage{color}
\usepackage{amsmath}
\usepackage{amsfonts}
\usepackage{amssymb}
\usepackage{amscd}
\usepackage{amsthm}
\usepackage{hyperref}

\usepackage[normalem]{ulem}

\usepackage{bm}

\newtheorem{thm}{Theorem}
\newtheorem*{thm*}{Theorem}
\newtheorem{lem}[thm]{Lemma}

\newtheorem{assum}{Assumption}[section]
\newtheorem{rem}[thm]{Remark}
\newtheorem{defn}{Definition}
\newcommand{\lV}{\left\Vert}
\newcommand{\rV}{\right\Vert}

\newcommand{\lmk}{\left(}
\newcommand{\rmk}{\right)}

\newcommand{\Aut}{\mathop{\mathrm{Aut}}\nolimits}
\newcommand{\Tr}{\mathop{\mathrm{Tr}}\nolimits}

\newcommand{\Ad}{\mathop{\mathrm{Ad}}\nolimits}

\DeclareMathOperator*{\slim}{s-lim}

\newcommand{\norm}[1]{\lV #1 \rV}
\newcommand{\abs}[1]{\left\vert#1\right\vert}

\newcommand{\todo}[1]{{\color{red} To Do : #1}}

\newcommand{\caA}{{\mathcal A}}
\newcommand{\caB}{{\mathcal B}}
\newcommand{\caC}{{\mathcal C}}

\newcommand{\caF}{{\mathcal F}}

\newcommand{\caH}{{\mathcal H}}

\newcommand{\caM}{{\mathcal M}}

\newcommand{\caS}{{\mathcal S}}

\newcommand{\caU}{{\mathcal U}}
\newcommand{\caV}{{\mathcal V}}

\newcommand{\bbC}{{\mathbb C}}

\newcommand{\bbE}{{\mathbb E}}

\newcommand{\bbN}{{\mathbb N}}

\newcommand{\bbR}{{\mathbb R}}

\newcommand{\bbZ}{{\mathbb Z}}

\newcommand{\loc}{\mathrm{loc}}

\newcommand{\id}{\mathop{\mathrm{id}}\nolimits}
\newcommand{\unit}{\mathbb I}

\newcommand{\ep}[1]{\mathrm{e}^{#1}}

\newcommand{\ld}{\Lambda}

\title[Tensor categories and the quantum Hall effect]{Tensor category describing anyons in the quantum Hall effect and quantization of conductance}

\author{Sven Bachmann}
\address{Department of Mathematics \\ The University of British Columbia \\ Vancouver, BC V6T 1Z2 \\ Canada}
\email{sbach@math.ubc.ca}

\author{Matthew Corbelli}
\address{Department of Mathematics \\ UC Davis \\ Davis, 95616, CA \\ U.S.}
\email{mdcorbelli@ucdavis.edu}

\author{Martin Fraas}
\address{Department of Mathematics \\ UC Davis \\ Davis, 95616, CA \\ U.S.}
\email{mfraas@ucdavis.edu}

\author{Yoshiko Ogata}
\address{Research Institute for Mathematical Sciences\\ Kyoto University\\ Kyoto 606-8502\\ JAPAN}
\email{yoshiko@kurims.kyoto-u.ac.jp}

\date{\today}

\begin{document}
\maketitle
\begin{center}
Dedicated to the memory of Professor Huzihiro Araki
\end{center}
\begin{abstract}
In this study, we examine the quantization of Hall conductance in an infinite plane geometry. We consider a microscopic charge-conserving system with a pure, gapped infinite-volume ground state. While Hall conductance is well-defined in this scenario, existing proofs of its quantization have relied on assumptions of either weak interactions, or properties of finite volume ground state spaces, or invertibility. Here, we assume that the conditions necessary to construct the braided $C^*$-tensor category which describes anyonic excitations are satisfied, and we demonstrate that the Hall conductance is rational if the tensor category is finite.
\end{abstract}


\section{Introduction}

For an effectively two-dimensional system, such as a metal plate or a single graphene layer, the applied electric field and the induced current are two-component vectors. According to Ohm's law, for small fields, the current is proportional to the applied field. The matrix that relates them is called the conductance matrix. In an insulator, the current can only flow in the direction transversal to the applied field. The corresponding conductance matrix is antisymmetric, and Ohm's law takes the form 
\begin{equation}\label{Off diagonal Conductance}
\vec{J} = \begin{pmatrix}
0 & \kappa \\
-\kappa & 0
\end{pmatrix}
\vec{V}
\end{equation}
where we call the off-diagonal conductance $\kappa$ the Hall conductance.

The quantum Hall effect refers to the behaviour of $\kappa$ at low temperatures. As observed by Kitzling \cite{Kitzling} and Tsui, St\"{o}rmer and Gossard~\cite{Tsui}, whenever the material is insulating, i.e., the  conductance matrix is as in~(\ref{Off diagonal Conductance}), the Hall conductance is a fractional multiple of a universal constant.\footnote{We will use units in which this constant is equal to $(2 \pi)^{-1}$, and consequently, $2 \pi \kappa$ is a rational number.} The effect is called integer quantum Hall effect if the Hall conductance is a whole number and fractional quantum Hall effect if the Hall conductance is a non-integer rational number.

The integer quantum Hall effect is well modelled by non-interacting electrons in disordered media. The fact that $\kappa$ is integer-valued in this case is now reasonably well understood, and it is beyond the scope of this article to review the extensive body of literature on this topic. Let us mention that integer quantization remains true in the case of weak interactions~\cite{GMP} and under the additional assumption that the ground state is invertible~\cite{kapustin2020hall}. As a consequence, electron-electron interactions must be included to obtain a non-integer Hall conductance, which introduces significant analytical challenges. Consequently, the fractional quantum Hall effect is mathematically much less understood. A microscopic framework for a finite number of interacting electrons was already developed by Avron and Seiler in~\cite{AvronSeiler}, resulting in a possibly rational Hall conductance \cite{KleinSeiler}. A topological field theory of quantum Hall fluids in the bulk, which yields fractional quantization and anyonic excitations, was developed in th early 90's by Fr\"{o}hlich and collaborators, \cite{FrohlichKerler,FrohlichReview} and again later~\cite{FrohlichSchweigert}. An interacting microscopic framework with a well-defined thermodynamic limit was only provided twenty years later in the work of Hastings and Michalakis \cite{HastingsMichalakis}.

The setting of Hastings and Michalakis and of subsequent works \cite{RationalIndex, MBIndex,MonacoTeufel} involves a gapped Hamiltonian for interacting particles with a $U(1)$ symmetry on a finite torus of linear size $L$. Assuming that the Hamiltonian has $p$ locally indistinguishable ground states (along with some further technical assumptions), it is proved that
$$
2 \pi\kappa = \frac{q}{p} + O(L^{-\infty}),
$$ 
i.e. there exists $q \in \mathbb{Z}$ such that $|2 \pi \kappa - q/p|$ vanishes faster than any inverse power of $L$ as $L\to\infty$. This implies, see \cite{RationalIndex}, quantization of conductance in the plane, provided we assume that the ground state in the plane is a limit of ground states of embedded tori. Since they are locally indistinguishable it does not matter in the limit which torus ground states are used. This plausible assumption, often referred to as LTQO for Local Topological Quantum Order and introduced in \cite{TQO1,TQO2}, is likely satisfied in all standard quantum Hall models (in fact, it was forseen already in~\cite{WenNiu}) but it is currently difficult to prove, see however~\cite{YoungTQO} for recent progress in this direction.

In this article, we will show that Hall conductance is quantized in the infinite plane geometry without assuming LTQO. We want to avoid this assumption not due to the lack of proof -- we will anyway have to assume analytical properties we can't prove in any concrete model -- but because not having it leads to an intriguing intellectual puzzle: What replaces the ground states degeneracy on the finite torus in the denominator $p$ of the quantum Hall conductance fraction? We will show here that $p$ is upper bounded by the rank of the braided $C^*$-tensor category associated with the ground state \cite{MTC}, which describes the anyonic excitations in the system. A parallel approach was taken in~\cite{kapustin2020hall, SopenkoThesis}, where the infinite volume assumption is the invertibility of the state.

The connection between rational Hall conductance and the properties of low-energy excitations was first described in the works of Laughlin \cite{Laughlin,LaughlinWavefunction}, and Arovas, Schrieffer, Wilczek \cite{Arovas}. Laughlin demonstrated that  insertion of a $2 \pi$ flux produces an excitation with a fractional charge $2 \pi \kappa$ at the point of insertion. Arovas, Schrieffer, and Wilczek then showed that if a second excitation is adiabatically moved around the first, it acquires phase $e^{{i} (2 \pi)^2 \kappa}$. This means that the excitation is an Abelian anyon. In a finite volume setting that is very close to the present one, this was proved in~\cite{BBDFanyon}, and was extended to the infinite volume in~\cite{kapustin2020hall}. The connection exemplifies the interplay between macroscopic properties of a system, such as Hall conductance, and its microscopic properties, like the statistics of elementary excitations.

In this work, we use the theory developed by Doplicher, Haag and Roberts \cite{DHR,Haag} for relativistic quantum field theories, recently adapted to lattice systems \cite{N2015}, to describe anyon excitations. See the review \cite{LundholmReview} for other approaches to describing anyons. The DHR approach uses a superselection criterion to define excitation sectors, and proceeds to show that there is a natural braided $C^*$-tensor category structure associated with these sectors. In particular, physical elementary excitations correspond to objects in this category, and the physical braiding of two excitations corresponds to the braiding structure $\epsilon$ in the category. A complete mathematical setting in the context of quantum lattice systems was first described by Ogata \cite{MTC}, and we will use this particular framework here.

As mentioned above, the way how to construct Abelian anyons in fractional quantum Hall effect was introduced in \cite{BBDFanyon} and later expanded on and used to prove quantization for invertible systems in~\cite{kapustin2020hall}.  Neither of these works construct the anyons as objects of a braided $C^*$-tensor category. Firstly no exact framework existed at that time, and secondly (speaking for authors of \cite{BBDFanyon}) it seemed at the time that technical details associated with the precise construction might obscure the relatively simple idea behind the construction. We now feel that this has changed and that there is a need for uniform setting and precise definitions. The main technical part of this work, see Section~\ref{sec:anyon}, is the construction of some objects in the braided $C^*$-tensor category~$\mathcal{M}$ associated with the ground state. Echoing~\cite{Arovas} and~\cite{FrohlichKerler}, the braiding properties of these objects will be connected to the Hall conductance. In Section~\ref{sec:quantization}, we then prove that under the assumption that there is finite number of superselection sectors, Hall conductance $\kappa$ is indeed a rational number.

\section{Setting and results}
We follow the setting and notation of \cite{MTC}, which expands on the usual framework of 2-dimensional lattice spin systems. We consider a lattice $\mathbb{Z}^2$ and to each point $x \in \mathbb{Z}^2$ we associate an algebra $\mathcal{A}_{\{x\}}$ isomorphic to the algebra of $d \times d$ matrices for some fixed $d >1$. For a finite subset $\Gamma$ of $\mathbb{Z}^2$ we define $\mathcal{A}_\Gamma = \otimes_{x \in \Gamma} \mathcal{A}_{\{x\}}$. For $\Gamma_1 \subset \Gamma_2$, the algebra $\mathcal{A}_{\Gamma_1}$ is canonically embedded in $\mathcal{A}_{\Gamma_2}$ by tensoring operators in $\mathcal{A}_{\Gamma_1}$ with the identity. For infinite $\Gamma \subset \mathbb{Z}^2$, the algebra $\mathcal{A}_\Gamma$ is defined as an inductive limit of algebras associated with finite subsets of $\Gamma$. We denote $\mathcal{A} = \mathcal{A}_{\mathbb{Z}^2}$. For each $\Gamma\subset \bbZ^2$, we fix the conditional expectation 
$\bbE_\Gamma
: \caA\to \caA_\Gamma$ 
onto $\caA_\Gamma$
preserving the trace.
The algebra of local observables is denoted by $\caA_{\mathrm{loc}}$.

We will use notation, definitions and some results about interactions and dynamics that are summarized in Appendix~\ref{app:interactions}. While most of what we use should be standard for an expert in the field, the notion of an anchored interaction which was introduced in~\cite{BDFJ} might be an exception.

We consider an interaction $h \in \mathcal{J}$, here $\mathcal{J}$ is a class of interactions that are sufficiently local and uniformly bounded (see the appendix for the exact definition), and assume that it has a finite range, i.e. there exists $r >0$ such that $\mathrm{diam}(S) > r$ implies $h_S = 0$. We denote $\{\tau_t^h:t\in\bbR\}$ the dynamics, namely the one parameter group of automorphisms, generated by $h$.

\begin{assum}
\label{assum:1}
The dynamics $\tau^h$ has a unique gapped ground state $\omega$.
\end{assum}

Precisely, this means that there is a unique state $\omega$ satisfying
\begin{equation}\label{algebraic gap}
\frac{\omega(A^*[h,A])}{\omega(A^* A)}\geq g > 0
\end{equation}
for all local $A$ such that $\omega(A) = 0$. 
It is then automatically a ground state, i.e., $\omega(A^*[h,A])\geq 0$ for all local observables~$A$, and is pure
\cite{T}.
We denote the GNS representation of $\omega$ by $(\mathcal{H}, \pi, \Omega)$.

Note that we do not assume that $\omega$ is the unique state satisfying the condition $\omega(A^*[h,A])\geq 0$, namely there may in general be other such `algebraic ground states'.

\subsection{Braided $C^*$-tensor category associated with $\pi$}\label{sub:category}
In this section we recall, to the extent that we will need in this work, the construction of braided $C^*$-tensor category described in \cite{MTC}. It requires the approximate Haag duality. We do not present the full definition here and refer reader to \cite[Definition~1.1]{MTC}.

\begin{assum}
\label{assum:2}
The GNS representation $(\mathcal{H}, \pi,\Omega)$ of $\omega$ satisfies the approximate Haag duality.
\end{assum}

We denote $\bm e_\beta:=(\cos\beta,\sin\beta)$ and set
\begin{align}
\label{eq:cones}
\ld_{\bm a,\theta,\varphi}:=\{\bm a+t \bm e_{\beta}\mid t > 0,\: \beta\in (\theta-\varphi,\theta+\varphi)\}
\end{align}
for $\theta\in\bbR$, $\bm a \in \mathbb{R}^2$, and $\varphi\in (0,\pi)$.
We call a subset of this shape a cone and use the same notation for the subset $\ld_{\bm a,\theta,\varphi}\cap\bbZ^2$ of the lattice. It is important that the empty set and $\mathbb{R}^2$ are not cones. Strict Haag duality is the statement that $\pi(\caA_{\Lambda^c})' = \pi(\caA_\Lambda)''$ for all cones $\Lambda$ while the approximate version allows for `tails' on the outside of the cones.  

We now define superselection sectors with respect to the GNS representation $(\mathcal{H}, \pi,\Omega)$ of the gapped ground state
$\omega$, see Assumption~\ref{assum:1}. We note that the representation is irreducible because the ground state $\omega$ is pure.

\begin{defn}
We say that a representation $\sigma$ of $\mathcal{A}$ on $\mathcal{H}$ satisfies the \emph{superselection criterion} with respect to $\pi$ if 
$$
\sigma |_{\mathcal{A}_{\Lambda^c}} \simeq \pi |_{\mathcal{A}_{\Lambda^c}},
$$
for any cone $\Lambda$.
Here, $\simeq$ denotes unitary equivalence.
\end{defn}
We denote by $O$ all representations of $\caA$ on $\caH$ that satisfy the superselection criterion. Equivalence of representations splits $O$ into equivalence classes, which are called superselection sectors.

\begin{thm*}[\cite{MTC}, Theorems 5.2 \& 6.1] 
Given Assumptions~\ref{assum:1}, \ref{assum:2}, the superselection sectors form a braided $C^*$-tensor category.
\end{thm*}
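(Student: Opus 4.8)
The plan is to run the Doplicher--Haag--Roberts reconstruction \cite{DHR} in the cone-localized form appropriate to two spatial dimensions (adapted to lattice systems in \cite{N2015}), while carrying along the proviso that Assumption~\ref{assum:2} gives only \emph{approximate} Haag duality rather than the strict identity $\pi(\caA_{\Lambda^c})'=\pi(\caA_\Lambda)''$. The first move is to trade representations for endomorphisms. Given $\sigma\in O$ and a cone $\Lambda$, the superselection criterion supplies a unitary $V$ with $\Ad(V)\circ\sigma|_{\caA_{\Lambda^c}}=\pi|_{\caA_{\Lambda^c}}$; passing to $\Ad(V)\circ\sigma$ and identifying $\caA$ with $\pi(\caA)$, the operators $\sigma(A)$ with $A\in\caA_\Lambda$ commute with $\pi(\caA_{\Lambda^c})$, hence lie in $\pi(\caA_{\Lambda^c})'$, which approximate Haag duality controls by a slightly fattened cone algebra $\caR(\widehat\Lambda)\supseteq\pi(\caA_\Lambda)''$ (the ``tail''). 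One thus realizes $\sigma$ as an endomorphism $\rho$ of the auxiliary net $\{\caR(\widehat\Lambda)\}$ acting trivially away from $\Lambda$, and running the same argument for two cones at once yields transportability: for every cone $\Lambda'$ there is an equivalent endomorphism localized in $\Lambda'$, implemented by a unitary ``charge transporter''.

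\textbf{The $C^*$-tensor structure.} I would then take $\mathcal{M}$ to have as objects the cone-localized transportable endomorphisms produced above, and as morphisms $\mathrm{Hom}(\rho,\rho')=\{T : T\rho(A)=\rho'(A)T \text{ for all } A\}$, with $T$ in the relevant auxiliary algebra. The operator norm makes this a $C^*$-category; direct sums come from isometries with orthogonal ranges inside a common $\caR(\widehat\Lambda)$, and subobjects split because any projection in $\mathrm{End}(\rho)$ is cut out by an isometry in the ambient von Neumann algebra, giving a subrepresentation that again lies in $O$. The tensor product is composition, $\rho_1\otimes\rho_2:=\rho_1\circ\rho_2$ and $S\otimes T:=S\,\rho_1(T)$ (equivalently $\rho_1'(T)\,S$, with $\rho_1'$ the range of $S$); choosing a cone containing both localization regions shows the composite is again cone-localized and transportable, and the domain issue for $\rho_1(T)$ is handled by enlarging cones. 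Associativity is strict, the unit object is the identity endomorphism $\id$ (corresponding to $\pi$ itself), the interchange law is a direct check, and one obtains a strict $C^*$-tensor category.

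\textbf{The braiding.} The geometric ingredient is that if the cones $\Lambda_1,\Lambda_2$ can be separated by a line, then endomorphisms $\rho_i$ localized in $\Lambda_i$ commute, $\rho_1\rho_2=\rho_2\rho_1$: modulo the controlled tails, $\rho_i$ alters only operators localized near $\Lambda_i$, and operators in disjoint cones commute. For $\rho_1$ localized in $\Lambda_1$ and arbitrary $\rho_2$, choose a unitary transporter $u$ carrying $\rho_2$ to an endomorphism $\widetilde\rho_2$ localized in a cone that is separated from $\Lambda_1$ and lies, say, counterclockwise of it, and set
\[
\epsilon(\rho_1,\rho_2):= u^*\,\rho_1(u),
\]
which one checks intertwines $\rho_1\otimes\rho_2$ and $\rho_2\otimes\rho_1$ using $\rho_1\widetilde\rho_2=\widetilde\rho_2\rho_1$. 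Independence of the auxiliary choices follows because the set of admissible separated cone configurations in the plane is path-connected within each of the two handedness classes; the opposite (clockwise) choice produces $\epsilon^{-1}$, so the structure is genuinely braided and not in general symmetric. Naturality in both slots and the two hexagon identities are then verified by transporter manipulations.

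\textbf{Main obstacle.} The heart of the matter is that strict Haag duality fails: the $\rho$'s are not endomorphisms of $\caA$ but of an auxiliary family of fattened cone algebras, and one must show that the tails permitted by Assumption~\ref{assum:2} are small enough that (i) composites stay inside a controlled such family, (ii) the intertwiner spaces are unaffected by the fattening, and (iii) the key commutation $\rho_1\rho_2=\rho_2\rho_1$ for separable cones survives the tails. Quantifying these errors and propagating them through every categorical construction --- this is exactly where approximate Haag duality does its work --- is the genuinely hard, technically heavy part; the rest is a careful but essentially formal transcription of the DHR machinery.
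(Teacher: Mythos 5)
Your proposal correctly reproduces the strategy of the cited result (\cite{MTC}, recalled in Section~3 of the paper): pass from representations satisfying the superselection criterion to endomorphism-like extensions $T_\sigma$ of an auxiliary cone algebra $\caB$ built with a forbidden direction, define the tensor product by composition, and define the braiding via charge transporters to asymptotically distant cones with a fixed handedness, with the technical burden concentrated in controlling the tails allowed by approximate Haag duality. This is essentially the same approach as the paper's source, up to packaging (one global auxiliary algebra $\caB$ and a limit $s\to\infty$ in the braiding, rather than a net of fattened cone algebras and a single transporter).
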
  
We call this category $\caM$ and refer to \cite{MTC} for precise definitions. We will recall the construction in Section~\ref{sec:MTC}. For the moment we only note that objects in the category are representations satisfying the superselection criteria, and morphisms are their intertwiners. The braiding, $\epsilon(\rho, \sigma)$ of objects $\rho, \sigma$, encodes the exchange statistics of the anyons corresponding to $\rho, \sigma$. We will also introduce a braiding statistics $\theta(\rho, \sigma)$ which will be the phase obtained by moving $\sigma$ counterclockwise around $\rho$, see~(\ref{eq:theta}).

\subsection{Charge conservation}
We consider an on-site $U(1)$ symmetry generated by an interaction $q \in \mathcal{J}$ such that operators $q_{\{x\}} \in \mathcal{A}_{ \{x\}}$ have integer spectrum for all $x \in \mathbb{Z}^2$, and $q_S = 0$ if $S$ is not a singleton. The operator $q_{\{x\}}$ encodes physical charge at site $x$, and for any finite region $\Gamma$ we denote
\begin{equation*}
Q_\Gamma := \sum_{x \in \Gamma} q_{\{x\}},
\end{equation*}
and refer to it as the \emph{charge} in the set $\Gamma$. By assumption,
$\mathrm{Spec}(Q_\Gamma)\subset\bbZ$.
For any (finite or not) subset $\Gamma$, let $\delta^q_\Gamma$ be the derivation associated with $q|_\Gamma$, the restriction of $q$ to $\Gamma$ --- see Appendix~\ref{app:restrictions} for the notion of restriction of an interaction --- and let $\alpha^\Gamma$ be the corresponding family of automorphisms. Note that $\alpha_{2\pi}^\Gamma = \mathrm{id}$, justifying the name $U(1)$ symmetry. We denote $\delta^q = \delta^q_{\mathbb{Z}^2}$, and $\alpha = \alpha^{\mathbb{Z}^2}$.

We assume that our system is $U(1)$ invariant in the following sense.
\begin{assum}
\label{assum:3}
For any finite  $S,\Gamma \subset \bbZ^2$ such that $S\subset \Gamma$,
\begin{equation}\label{charge conservation commutators}
[h_S,Q_\Gamma] = 0.
\end{equation}
\end{assum}

We immediately note that in conjunction with Assumption~\ref{assum:1}, this implies the $U(1)$-invariance of the state, namely $\omega \circ \alpha_\phi = \omega$ for all $\phi \in \mathbb{R}$.

Assumptions~\ref{assum:1}, \ref{assum:3} allow to construct a self-adjoint operator $J \in \mathcal{A}$ whose expectation value
\begin{equation}
\label{eq:Hall}
\kappa := \omega(J)
\end{equation}
is the Hall conductance of the system \cite{HastingsMichalakis, QHE1}. We provide details of this construction in Section~\ref{sec:Hall}. An alternative construction of an observable corresponding to Hall conductance is given in~\cite{SopenkoThesis} using the framework of higher Berry curvature~\cite{HigherBerry}.

\subsection{Results}
The first theorem that we will prove makes an explicit connection between the braided $C^*$-tensor category, specifically the braiding statistics $\theta(\rho,\rho)$ briefly introduced at the end of Section~\ref{sub:category} and defined in~(\ref{eq:theta}) below, and the Hall conductance $\kappa$. As discussed in the introduction, versions of this theorem are in \cite{BBDFanyon, kapustin2020hall}.

\begin{thm}[Existence of Anyons]
\label{thm:anyon}
Given Assumptions~\ref{assum:1} -- \ref{assum:3}, there exists a simple object $\rho \in \mathcal{M}$ such that 
\begin{equation*}
\theta(\rho, \rho) = e^{-{i} (2 \pi)^2 \kappa}.
\end{equation*}
\end{thm}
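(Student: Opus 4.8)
The plan is to realise $\rho$ as the Laughlin quasiparticle obtained by adiabatically threading a $2\pi$ flux through the apex of a cone, and then to identify the double braiding $\theta(\rho,\rho)$ with the adiabatic curvature that defines the Hall conductance $\kappa$. This implements the physical reasoning of \cite{Laughlin,Arovas}, and parallels the finite-volume constructions of \cite{BBDFanyon,kapustin2020hall}, but must be run entirely in the infinite-volume, unique-ground-state setting, which is where all the work lies.

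First I would construct the anyon. Fix a cone $\ld=\ld_{\bm 0,\theta,\varphi}$ and, for $\phi\in[0,2\pi]$, consider the gauge rotation $\alpha^\ld_\phi$ and the path of Hamiltonians $h^{(\phi)}_\ld:=\alpha^\ld_\phi(h)$. By Assumption~\ref{assum:3} the interaction $h$ is rotation-invariant inside $\ld$, and $\alpha^\ld_\phi$ acts trivially outside $\ld$, so $h^{(\phi)}_\ld$ differs from $h$ only in a neighbourhood of the two rays and the apex of $\partial\ld$. Using the gap $g$ of \eqref{algebraic gap} I would run the quasi-adiabatic (spectral-flow) cocycle $V_\ld(\phi)$ along this path; the Lieb--Robinson bounds for $\tau^h$ confine its tails to a slightly fattened cone. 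Combining the spectral flow with the gauge rotation and passing to the limit $\phi\uparrow2\pi$ produces a cone-localised $*$-endomorphism $\gamma_\ld$ of $\caA$, carrying the Laughlin charge $2\pi\kappa$ at the apex; it is the presence of that leftover charge that makes $\rho:=\pi\circ\gamma_\ld$ a potentially nontrivial superselection sector. By the approximate Haag duality of Assumption~\ref{assum:2}, $\rho$ satisfies the superselection criterion, hence $\rho\in\caM$; since the spectral flow preserves the (unique) ground state away from the cone, $\gamma_\ld$ is irreducible and $\mathrm{End}(\rho)=\bbC\,\unit$, so $\rho$ is simple. A comparison of the constructions for two overlapping cones shows that the class $[\rho]$ does not depend on the choice of $\ld$.

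Next I would compute $\theta(\rho,\rho)$. Choose a second cone $\ld'$ well separated from $\ld$ and build $\rho':=\pi\circ\gamma_{\ld'}\cong\rho$. Unwinding the braiding of \cite{MTC}, $\epsilon(\rho,\rho')$ is the image under $\pi$ of a charge transporter moving $\rho'$ into an auxiliary cone disjoint from $\ld$, and because $\rho$ is simple the monodromy $\epsilon(\rho',\rho)\epsilon(\rho,\rho')=\theta(\rho,\rho)\,\unit$ is the scalar of~(\ref{eq:theta}). Tracking the transporters rewrites this scalar as the vacuum expectation $\omega$ of a closed product of gauge rotations and spectral-flow unitaries in which the Dirac string of $\rho'$ is dragged once around the apex of $\ld$. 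Expanding the two crossed flux insertions, the first-order terms drop out by the $U(1)$-invariance $\omega\circ\alpha_\phi=\omega$, and the surviving quadratic contribution is precisely the adiabatic response encoded by the current observable $J$ of \eqref{eq:Hall}: the loop expectation equals $\exp(-\iu(2\pi)^2\,\omega(J))$, whence $\theta(\rho,\rho)=e^{-\iu(2\pi)^2\kappa}$.

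The main obstacle is twofold. First, the construction in Section~\ref{sec:anyon}: giving rigorous meaning to the $\phi\uparrow2\pi$ limit defining $\gamma_\ld$ (because the fractional charge escapes to infinity along the cone, the naive limit of automorphisms need not exist and one must work with truncations), verifying that $\gamma_\ld$ is cone-localised in the sense demanded by \cite{MTC}, that $\pi\circ\gamma_\ld$ lies in $\caM$, and that it is simple. Second, the final identification of the braiding phase with $\kappa$: one must regroup the gauge rotations $\alpha^\ld_\phi$, $\alpha^{\ld'}_\phi$ past the spectral-flow cocycles while simultaneously controlling the approximate-Haag-duality tails, the Lieb--Robinson tails, and the overlaps near the two apices, and match the half-plane cuts used in Section~\ref{sec:Hall} to define $J$ with the cones localising $\rho,\rho'$, so that the loop expectation is exactly $\exp(\mp\iu(2\pi)^2\kappa)$ with no residual error.
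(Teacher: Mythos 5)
Your proposal follows essentially the same route as the paper's: the anyon is the $2\pi$ flux insertion (gauge rotation corrected by the quasi-adiabatic generator built from the gap via $W$) restricted to a cone, and the winding phase is identified with $e^{-\iu(2\pi)^2\kappa}$ through the commutator of two crossed flux insertions, exactly as in Sections~\ref{sec:Hall}--\ref{sec:anyon}. The one obstacle you single out as central --- making sense of the $\phi\uparrow 2\pi$ limit --- is dissolved in the paper by Lemma~\ref{lem:twist}: since the $q_{\{x\}}$ have integer spectrum, $\beta^{\Gamma}_{2\pi}=\beta^{\Gamma}_{2\pi}\circ(\alpha^{\Gamma}_{2\pi})^{-1}$ is generated by a TDI anchored on $\partial\Gamma$, whose restriction to the cone directly yields the automorphism $\rho^\Lambda$ with no limiting procedure, so the real technical weight falls instead on the charge-transport regularization (Lemma~\ref{lem:J_0_reg}) and the approximate-Haag-duality estimates underlying the definition of $\theta$.
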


\noindent The second theorem that we prove addresses quantization of the Hall conductance.

\begin{thm}[Quantization of Hall conductance]
\label{thm:hall}
Suppose Assumptions~\ref{assum:1} -- \ref{assum:3} hold, and assume that there is a finite number $p'$, of equivalence class of simple objects in $\caM$. Then there exists an integer $p \leq p'$  such that 
$$
2\pi \kappa \in \mathbb{Z}/p.
$$
\end{thm}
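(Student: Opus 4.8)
The plan is to combine Theorem~\ref{thm:anyon} with the finiteness hypothesis via the multiplicative behaviour of the braiding statistics under fusion powers. By Theorem~\ref{thm:anyon} there is a simple object $\rho \in \caM$ with $\theta(\rho,\rho) = e^{-\iu(2\pi)^2\kappa}$. First I would recall the ribbon/twist identities available in a braided $C^*$-tensor category: for a simple object $\rho$ the twist $\theta(\rho,\rho)$ is a scalar, and for tensor powers $\rho^{\otimes n}$ the monodromy/twist satisfies a relation of the form $\theta(\rho^{\otimes n},\rho^{\otimes n}) = \theta(\rho,\rho)^{n^2}$ (up to the factors coming from $\theta(\rho,\rho^{\otimes k})$ for $k<n$, which for the self-braiding combine to the square). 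The precise bookkeeping here is routine from the hexagon axioms and the definition~(\ref{eq:theta}); the point is that passing from $\rho$ to $\rho^{\otimes n}$ multiplies the relevant phase in a controlled way.

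Next I would use finiteness. With only $p'$ equivalence classes of simple objects, consider the sequence of objects $\rho^{\otimes n}$, $n = 1,2,\dots$. Each decomposes into simple summands, and the set of simple objects appearing in $\{\rho^{\otimes n}: n \geq 1\}$ is a finite set closed (up to subobjects and the unit) under tensoring with $\rho$. Restricting attention to this finite fusion subcategory generated by $\rho$, one gets a handle on the order of $\rho$ in an appropriate sense: either some $\rho^{\otimes n}$ contains the tensor unit $\mathbf{1}$ as a subobject, or — using that there are finitely many simples — two distinct powers $\rho^{\otimes m}$, $\rho^{\otimes n}$ share a simple summand and one can extract a relation forcing a power of $\rho$ to contain $\mathbf{1}$. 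In either case let $p$ be the least positive integer with $\mathbf{1} \hookrightarrow \rho^{\otimes p}$; then $p \leq p'$ since the simple summands of $\mathbf{1}, \rho, \dots, \rho^{\otimes(p-1)}$ must include at least $p$ distinct simples (this is where I expect the main combinatorial care to go: one must argue these are genuinely distinct, using minimality of $p$).

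Finally I would extract the arithmetic conclusion. If $\mathbf{1}$ is a subobject of $\rho^{\otimes p}$, then the twist of $\rho^{\otimes p}$ restricted to that summand is $\theta(\mathbf{1},\mathbf{1}) = 1$; combining with the power law $\theta(\rho^{\otimes p}, \rho^{\otimes p})$ being governed by $\theta(\rho,\rho)^{p^2} = e^{-\iu(2\pi)^2 p^2 \kappa}$, and using that the twist is a well-defined invariant of a simple summand (so the $\mathbf{1}$-component contributes the trivial phase), one gets $e^{-\iu (2\pi)^2 p^2 \kappa \cdot (1/p)} $-type constraint, i.e.\ $e^{\iu (2\pi)^2 p \kappa} = 1$, hence $(2\pi) \cdot p\kappa \in \bbZ$, which is exactly $2\pi\kappa \in \bbZ/p$.

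The hard part, and the step most likely to need genuine work rather than formal manipulation, is the middle one: cleanly deducing from ``finitely many simple objects'' that a power of $\rho$ contains the unit, \emph{and} doing so with the optimal bound $p \leq p'$. The subtlety is that $\rho$ need not be invertible, so the naive ``$\rho$ has finite order in a group'' argument does not directly apply; one needs the structure of the fusion ring (or the unitary dual/Frobenius--Perron considerations, or a pigeonhole on the finitely many simples in $\{\rho^{\otimes n}\}$ together with conjugates $\bar\rho$) to force the unit into some tensor power, and then a careful count to keep $p \leq p'$. I would also need to double-check that the twist-power identity holds on the nose in Ogata's framework with the normalization of $\theta$ used in~(\ref{eq:theta}), since a misplaced sign or factor of two there would change the final congruence.
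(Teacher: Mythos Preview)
Your overall strategy --- take the simple object $\rho$ from Theorem~\ref{thm:anyon}, use finiteness to relate a tensor power of $\rho$ to the unit, and read off a constraint on the phase --- is the right one and matches the paper. But you miss the one observation that makes the argument short, and as a result both your ``hard part'' and your final arithmetic go astray.

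The point you overlook is that the $\rho$ produced in Theorem~\ref{thm:anyon} is not an arbitrary simple object: it is of the form $\rho = \pi\circ\rho^{\Lambda_1}$ with $\rho^{\Lambda_1}$ an \emph{automorphism} of $\caA$ (see~(\ref{eq:rhobarq})). Consequently every tensor power $\rho^{\otimes n}$ is again irreducible --- $\rho$ is an invertible (Abelian) object. So your worry that ``$\rho$ need not be invertible'' is unfounded, and the whole fusion-ring/Frobenius--Perron detour is unnecessary: the simple objects $\pi,\rho,\rho^{\otimes 2},\ldots$ cannot all be inequivalent, and a pigeonhole plus invertibility gives $\rho^{\otimes p}\simeq\pi$ for some $p\le p'$ directly, with equality rather than mere containment of $\mathbf{1}$.

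Second, the paper does not use the quadratic relation $\theta(\rho^{\otimes n},\rho^{\otimes n})=\theta(\rho,\rho)^{n^2}$ at all. It uses the linear multiplicativity in the \emph{first} slot, Lemma~\ref{lem:theta}(v), which gives $\theta(\rho^{\otimes p},\rho)=\theta(\rho,\rho)^p$. Together with the preceding lemma ($\rho'\simeq\pi\Rightarrow\theta(\rho',\sigma)=1$) and Lemma~\ref{lem:theta}(iv), one immediately gets $\theta(\rho,\rho)^p=1$, i.e.\ $e^{\iu(2\pi)^2 p\kappa}=1$, hence $2\pi\kappa\in\bbZ/p$. Your route through $\theta(\rho^{\otimes p},\rho^{\otimes p})=\theta(\rho,\rho)^{p^2}$ --- even granting that identity, which is not among the stated properties of $\theta$ here --- would only yield $2\pi\kappa\in\bbZ/p^2$; the ad hoc ``$(1/p)$'' you insert to recover the correct exponent is not justified.
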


\subsection{Outline} In the following section we provide details about construction of the braided $C^*$-tensor category. In Section~\ref{sec:Hall} we define the Hall conductance. In Section~\ref{sec:anyon} we prove Theorem~\ref{thm:anyon}, and in Section~\ref{sec:quantization} we prove Theorem~\ref{thm:hall}. Finally, Appendix~A contains all we need about interactions and their associated objects, and Appendix~B has some technical parts related to the definition of the braiding statistics $\theta$ on the braided tensor category.

\section{Construction of braided $C^*$-tensor category}
\label{sec:MTC}
The idea to use superselection sectors to describe anyon ground state excitations was first described in the context of algebraic quantum field theory in \cite{AnyonDHR}. It was recently adapted to quantum spin systems \cite{NaaijkensEndomorphisms, N2015, CNN, MTC}. A representation $\sigma$ that is quasi-equivalent to $\pi$, without any restriction, corresponds to local excitations of the ground state. A representation $\sigma$ that satisfies the superselection criteria but is not quasi-equivalent to $\pi$ corresponds to anyon excitation: We often visualize them as excitations created by an endomorphism acting along a string going from the point of the excitations to infinity, which is, in particular, localized inside a cone. This is the case in some exactly solvable models \cite{QDouble,KitaevHoneycomb}, see~\cite{NaaijkensEndomorphisms,N2015}.

In order to construct the braided $C^*$-tensor category, we shall now make various choices but the resulting category is independent of these choices \cite{MTC}. Let $\caC$ be the set of all cones~(\ref{eq:cones}) such that $[\theta-\varphi,\theta+\varphi]\cap [\frac {3\pi}2-\frac\pi 4, \frac {3\pi}2+\frac\pi 4]=\emptyset$ $\mod 2\pi$. This makes a choice for what is called the forbidden  direction, see Figure~\ref{Fig:Cones}. Let
$$
\mathcal{B}
:=\overline{\cup_{\ld\in \caC}\pi\lmk\caA_{\ld}\rmk'' },
$$
where the overline indicates the norm closure.
For each cone $\ld$ and $\sigma \in O$, we 
set
\begin{align*}
\caV_{\sigma,\ld}:=\{ V_{\sigma,\ld}\in \caU(\caH)\mid \Ad (V_{\sigma,\ld}) \circ \sigma\vert_{\caA_{\ld^c}}
=\pi\vert_{\caA_{\ld^c}}\},
\end{align*}
which is a nonempty set by the very definition of~$O$. We also denote by $O_\ld$ the set of all $\sigma\in O$ with $\unit\in \caV_{\sigma,\ld}$. $O_\ld$ represents anyonic excitations supported in $\Lambda$.

We fix a cone $\ld_0:=\ld_{\bm 0,\frac{\pi }2,\frac{5\pi} 8}\in \caC$, the objects in the category $\mathcal{M}$ are the elements of $O_{\ld_0}$. In order to introduce a tensor product of objects (and later braiding), we first pull $\sigma\in O_{\ld_0}$ to a map on the algebra $\caB$. There exists a unique *-homomorphism $T_\sigma$ of $\caB$ such that
\begin{equation*}
T_{\sigma}\circ\pi=\sigma
\end{equation*}
and $T_{\sigma}$ is weakly continuous on $\pi\lmk\caA_{\ld}\rmk'' $, for every $\ld\in \caC$.

For two objects $\sigma_1, \sigma_2 \in O_{\ld_0}$, their tensor product is defined as
$$\sigma_1\otimes \sigma_2:=T_{\sigma_1} \circ T_{\sigma_2} \circ \pi.$$
The morphisms of $\caM$ are the intertwiners
$$
\mathrm{Hom}(\sigma_1, \sigma_2) := \{V \in B(\mathcal{H}) \mid V \sigma_1 = \sigma_2 V \}.
$$
To define braiding we fix the following two cones $\ld_2:=\ld_{\bm 0, \pi, \frac\pi 8}, \ld_1:=\ld_{\bm 0,\frac\pi 2, \frac\pi 8}$.
For $\rho \in O_{\Lambda_1}$ and $\sigma \in O_{\Lambda_0}$ the braiding $\epsilon(\rho,\sigma)$, of $\rho, \sigma$ is defined as the norm limit
$$
\epsilon(\rho,\sigma) := \lim_{s\to\infty }V_{\sigma,\ld_2(s)}^* T_{\rho}
\lmk V_{\sigma,\ld_2(s)}\rmk.
$$
Here and later, we use a notation $\ld_{\bm a,\theta,\varphi}(s) = \ld_{\bm a,\theta,\varphi} + s{\bm e}_\theta$.
The braiding is independent of the choice of unitaries $V_{\sigma,\ld_2(s)} \in \caV_{\sigma,\ld_2(s)} $, and it intertwines $\rho \otimes \sigma$ with $\sigma \otimes \rho$, i.e.  $\epsilon(\rho, \sigma) \in \mathrm{Hom}(\rho \otimes \sigma, \sigma \otimes \rho)$. If $\rho = \pi$, then $T_\rho = \mathrm{id}$ and hence $\epsilon(\pi,\sigma) = 1$ for all $\sigma$. 

\begin{figure}
\includegraphics[width = 0.6\textwidth]{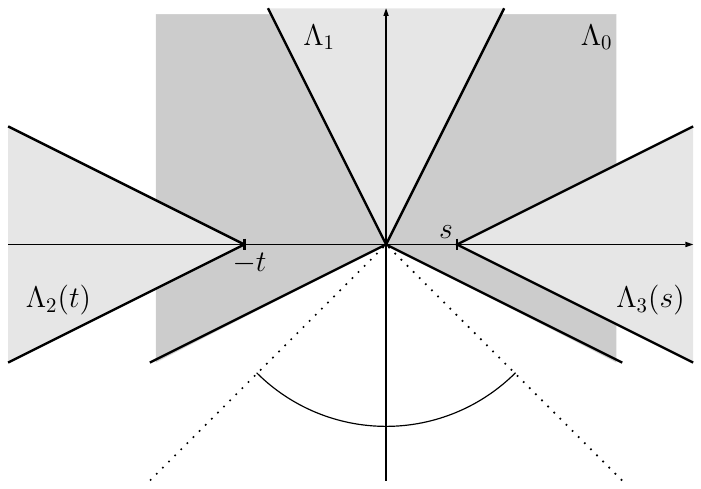}
\caption{The various cones used in the construction of the category $\caM$. Forbidden directions are represented by the arc in the lower half plane.}
\label{Fig:Cones}
\end{figure}

We will also need the braiding statistics, $\theta(\rho, \sigma)$, associated with winding of the anyon $\sigma$ around $\rho$. We fix another cone $\Lambda_3 =\Lambda_{\bm 0,0,\frac\pi 8}$, and for $\rho \in O_{\Lambda_1}$ and $\sigma \in O_{\Lambda_0}$, we define
\begin{equation}
\label{eq:theta}
\theta(\rho, \sigma)  = \lim_{t \to \infty} \epsilon(\rho, \Ad V_{\sigma, \Lambda_3(t)} \circ \sigma ).
\end{equation}
The limit is well defined, see (i) of the next lemma. In this article we will only encounter Abelian anyons in which case the braiding statistics is proportional to identity: This is reflected in the assumptions and statements of the following lemma. While a `braiding statistics' or `statistical phase' has been defined in many different ways in the mathematical literature and expresses the same phenomenology (among the close analogs, see Section~2 in~\cite{FRS}, Section~2 in~\cite{Rehren} or Section~8.5 in~\cite{HalvorsonMueger}), the authors are not aware of Definition~(\ref{eq:theta}) having appeared before.
\begin{lem}
\label{lem:theta}
Suppose that Assumptions \ref{assum:1}, \ref{assum:2} hold. Let $\rho, \sigma \in O_{\Lambda_1}$. Suppose that $\sigma$ is of the form $\sigma = \pi \circ \tilde{\sigma}$ for some $\tilde{\sigma} \in \Aut(\caA)$ such that $\tilde{\sigma}|_{\caA_{\Lambda_1^c}} = \id_{\caA_{\Lambda_1^c}}$. Then 

\begin{enumerate}
\item $\theta(\rho, \sigma)$ is well defined, and independent of the choice of $V_{\sigma, \Lambda_3(t)}$,

\item $\theta(\rho, \sigma) \in \mathrm{Hom}(\rho, \rho)$.
\end{enumerate}
Suppose in addition that $\rho = \pi \circ \tilde{\rho}$ for some automorphism $\tilde{\rho}$. Then

\begin{enumerate}
\item[(iii)]  $\theta(\rho, \sigma) = e^{i \theta} \id$, for some $\theta \in \mathbb{R}$,

\item[(iv)] For $\rho' = \mathrm{Ad}_V \circ \rho \in O_{\Lambda_1}$ and $\sigma' = \mathrm{Ad}_W \circ \sigma \in O_{\Lambda_0}$
$$
\theta(\rho', \sigma') = \theta(\rho, \sigma),
$$

\item[(v)] $\theta( \rho_1 \otimes \rho_2, \sigma) = \theta(\rho_1, \sigma) \theta(\rho_2, \sigma)$.

\end{enumerate}
\end{lem}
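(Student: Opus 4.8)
The plan is to reduce every assertion to the already-established properties of the braiding $\epsilon$ on $\caM$, using the special form $\sigma = \pi \circ \tilde\sigma$ and the explicit transport by the unitaries $V_{\sigma,\Lambda_3(t)}$. First I would address (i): the key point is that for $s > t$ large, $\Ad V_{\sigma,\Lambda_3(t)} \circ \sigma$ is an object in $O_{\Lambda'}$ for a cone $\Lambda'$ obtained by translating $\Lambda_0$ along $\bm e_0$ by $t$, and one must check that $\epsilon(\rho, \Ad V_{\sigma,\Lambda_3(t)} \circ \sigma)$ stabilizes (or converges) as $t\to\infty$. Here I would use the defining limit of $\epsilon$ together with the locality/transportability estimates from \cite{MTC}: moving $\sigma$ far along the forbidden-adjacent direction $\bm e_0$ eventually places its localizing cone "behind" $\Lambda_2(s)$ relative to $\rho \in O_{\Lambda_1}$, so the transporter $V_{\sigma,\Lambda_2(s)}$ can be chosen to commute (in the weak-closure sense, after applying $T_\rho$) with the excitation of $\sigma$, giving a stable value. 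Independence of the choice of $V_{\sigma,\Lambda_3(t)}$ follows because any two choices differ by a unitary in $\sigma(\caA_{\Lambda_3(t)})' \cap \pi(\caA_{\Lambda_3(t)})'$, which by approximate Haag duality lies in $\caB$ localized in a cone disjoint from the relevant transport region, and hence is not seen by the limit; this is the same argument that shows $\epsilon$ itself is well defined.

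For (ii), I would observe that $\epsilon(\rho, \tau) \in \mathrm{Hom}(\rho \otimes \tau, \tau \otimes \rho)$ for every object $\tau$, and that $\tau_t := \Ad V_{\sigma,\Lambda_3(t)} \circ \sigma$ is unitarily equivalent to $\sigma$ via $V_{\sigma,\Lambda_3(t)} \in \mathrm{Hom}(\sigma, \tau_t)$. Taking $t\to\infty$ in the naturality relation for $\epsilon$ under this morphism, the $\tau_t$-legs cancel in the limit (this is precisely the content that makes $\theta$ a "self-winding" quantity), leaving $\theta(\rho,\sigma) \in \mathrm{Hom}(\rho,\rho)$. Statement (iii) is then immediate from (ii): under the extra hypothesis $\rho = \pi \circ \tilde\rho$ with $\tilde\rho \in \Aut(\caA)$, the object $\rho$ is \emph{simple} — $\mathrm{Hom}(\rho,\rho) = \bbC\,\id$ because $\pi$ is irreducible and conjugating by an automorphism preserves irreducibility — so any element of $\mathrm{Hom}(\rho,\rho)$ is a scalar, and since $\theta(\rho,\sigma)$ is a norm limit of unitaries (each $\epsilon(\rho,\tau_t)$ is unitary as $\rho,\tau_t$ are invertible, i.e.\ automorphism-type, objects) the scalar has modulus one, giving $e^{i\theta}\id$.

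For (iv) and (v) I would argue as follows. Property (iv): replacing $\rho$ by $\Ad_V \circ \rho$ and $\sigma$ by $\Ad_W \circ \sigma$ changes the relevant transporters and $T_\rho$ in a way that conjugates $\theta(\rho,\sigma)$ by a fixed unitary built from $V, W$; but $\theta(\rho,\sigma)$ is already a scalar by (iii), so conjugation leaves it unchanged. (One must be slightly careful that $\Ad_W\circ\sigma$ still has the form $\pi\circ(\text{automorphism trivial off }\Lambda_1)$ only up to the cone-localization, which is exactly the hypothesis $\sigma' \in O_{\Lambda_0}$; the functoriality of $\epsilon$ under the morphisms $V, W$ does the rest.) Property (v): here I would use the tensor-product formula $\sigma_1\otimes\sigma_2 = T_{\sigma_1}\circ T_{\sigma_2}\circ\pi$ together with the standard braiding identity $\epsilon(\rho_1\otimes\rho_2, \tau) = (\epsilon(\rho_1,\tau)\otimes\id_{\rho_2})(\id_{\rho_1}\otimes\epsilon(\rho_2,\tau))$ from the braided tensor category axioms, apply it with $\tau = \tau_t$, and pass to the limit $t\to\infty$; since all the factors are scalars by (iii), the tensor-multiplicativity collapses to ordinary multiplication $\theta(\rho_1,\sigma)\theta(\rho_2,\sigma)$.

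The main obstacle I anticipate is (i) — establishing that the limit in $t$ actually exists (not merely that it is eventually constant, which requires a genuine "going behind the cone" geometric argument) and is choice-independent. This is where the approximate (rather than strict) Haag duality forces care: the tails permitted outside cones mean the transporters $V_{\sigma,\Lambda_3(t)}$ only \emph{approximately} commute with the relevant algebras, so one needs the quantitative almost-localization bounds of \cite{MTC} to control the error as $t\to\infty$ and conclude convergence. Everything after (i)–(ii) is essentially formal manipulation inside the braided $C^*$-tensor category, using simplicity of automorphism-type objects to reduce morphisms to scalars.
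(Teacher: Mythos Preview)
Your overall strategy matches the paper's: parts (i) and (ii) are the genuine technical work (done in Appendix~\ref{app:theta} via Cauchy-sequence arguments using the approximate Haag duality bounds of \cite{MTC}, exactly as you anticipate), and (iii), (v) follow formally from (ii) together with simplicity of automorphism-type objects and the hexagon identity $\epsilon(\sigma_1\otimes\sigma_2,\sigma_3)=\epsilon(\sigma_1,\sigma_3)T_{\sigma_1}(\epsilon(\sigma_2,\sigma_3))$, just as you outline.

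There is, however, a gap in your treatment of (iv). Replacing $\rho$ by $\rho'=\Ad(V)\circ\rho$ does \emph{not} merely conjugate $\theta(\rho,\sigma)$: the computation (Lemma~\ref{epsilonRhoPrimeSigma}) gives
\[
\epsilon(\rho',\sigma'_s)=\lim_{t\to\infty}[[V_{\sigma'_s,\Lambda_2(t)}^*,V]]\cdot\Ad(V)\bigl(\epsilon(\rho,\sigma'_s)\bigr),
\]
so in addition to the conjugation there is a multiplicative group-commutator prefactor $[[V_{\sigma,\Lambda_3(s)}V_{\sigma,\Lambda_2(t)}^*,V]]$. This prefactor is not a priori trivial and does not disappear just because the final answer is a scalar. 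The paper handles it by first observing that $V\in\pi(\caA_{\Lambda_1^c})'$ (since both $\rho,\rho'\in O_{\Lambda_1}$ agree with $\pi$ on $\caA_{\Lambda_1^c}$), and then invoking a separate lemma (Lemma~\ref{VSigmaLambdaThreeTwoCommutatorLambda1CompPrimeLimit}) showing that $V_{\sigma,\Lambda_3(s)}V_{\sigma,\Lambda_2(t)}^*$ asymptotically commutes with every element of $\pi(\caA_{\Lambda_1^c})'$; this again requires the approximate Haag duality estimates, not just formal category manipulations. Only after the prefactor is shown to converge to $\unit$ does one obtain $\theta(\rho',\sigma)=\Ad(V)(\theta(\rho,\sigma))$, at which point your scalar argument applies.
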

\begin{proof}
The proof of (i) is quite technical and similar to the proofs of existence of $\epsilon(\rho, \sigma)$ in \cite{MTC}. We postpone it to Appendix~\ref{app:theta}. Since similar techniques are required for the proof of part (ii), we similarly postpone it, see  Lemma~\ref{thetaRhoSigmaIsHomRhoRho}. 

Since $\rho$ is irreducible by the additional assumption, the point (ii) implies that $\theta(\rho, \sigma)$ is proportional to identity. Because $T_\rho$ is a unital $*$-endomorphism, $\theta(\rho,\sigma)$ is a unitary as the norm limit of a family of unitaries. It follows that $\theta(\rho, \sigma)$ is a phase. This proves (iii).

Manifestly, $\theta(\rho, \sigma') = \theta(\rho, \sigma)$. So to prove (iv), it remains to compute $\theta(\rho', \sigma)$. Let $\sigma'_s = \Ad(V_{\sigma,\Lambda_3(s)})\circ \sigma$. Pick $V_{\sigma'_s,\Lambda_2(t)}=V_{\sigma,\Lambda_2(t)}V_{\sigma,\Lambda_3(s)}^*$.
    \begin{align*}
        \theta(\rho',\sigma)&=\lim_{s\to\infty}\epsilon(\Ad(V)\circ\rho,\Ad(V_{\sigma,\Lambda_3(s)})\circ\sigma)\\
        &=\lim_{s\to\infty}\lim_{t\to\infty}[[V_{\sigma'_s,\Lambda_2(t)}^*,V]]\Ad(V)(\epsilon(\rho,\sigma'_s)) \\
        &=\lim_{s\to\infty}\lim_{t\to\infty}[[V_{\sigma,\Lambda_3(s)}V_{\sigma,\Lambda_2(t)}^*,V]]\Ad(V)(\epsilon(\rho,\sigma'_s)).
    \end{align*}
where we used Lemma~\ref{epsilonRhoPrimeSigma} in the second equality and denote $[[U_1,U_2]] = U_1U_2U_1^* U_2^*$ for the commutator of unitaries. For $A \in \caA_{\Lambda_1^c}$, we have that
\begin{equation*}
\pi(A) = \rho'(A) = \Ad(V)(\rho(A))=\Ad(V)(\pi(A)),
\end{equation*}
namely $V \in \pi(\caA_{\Lambda_1^c})'$.
Hence $\lim_{s\to\infty}\lim_{t\to\infty}[[V_{\sigma,\Lambda_3(s)}V_{\sigma,\Lambda_2(t)}^*,V]] = 1$ by Lemma~\ref{VSigmaLambdaThreeTwoCommutatorLambda1CompPrimeLimit} and we get
 $$
 \theta(\rho', \sigma) = \Ad(V) (\theta(\rho, \sigma)).
 $$
With this, Part (iv) follows from (iii).
 
To prove (v), we recall \cite{MTC} that for $\sigma_1,\sigma_2,\sigma_3\in O_{\ld_0}$,
$$
\epsilon\lmk\sigma_1\otimes \sigma_2,\sigma_3\rmk
=\epsilon(\sigma_1,\sigma_3)T_{\sigma_1}\lmk\epsilon(\sigma_2,\sigma_3)\rmk.
$$
From the definition of $\theta$, we then get  
$$
\theta(\rho_1 \otimes \rho_2, \sigma) = \theta(\rho_1, \sigma) T_{\rho_1} \lmk \theta(\rho_2, \sigma) \rmk,
$$
and claim then follows again from (iii).
\end{proof}

\section{Definition of Hall conductance}
\label{sec:Hall}
There are various, equivalent, formulas for Hall conductance. These formulas fall into two classes, the first class expresses Hall conductance as the adiabatic curvature of a certain ground state bundle. The second class expresses Hall conductance as a charge pumped upon insertion of a $2 \pi $ flux. The formulas can be proved from the Kubo formula \cite{ExactLR}, so the starting point is a matter of taste. We decided to start with a formula from the first class because it is most naturally formulated in the infinite volume limit. However, in the process of proving our main theorems we will need a formula from the second class which we will establish as a lemma below.

To define Hall conductance, we use a partition of space in four quadrants,
\begin{align*}
\begin{split}
&A:=\{(x,y)\in \bbZ^2\mid 0\le x, 0\le y\},\\
&B:=\{(x,y)\in\bbZ^2\mid x\le -1, 0\le y\},\\
&C:=\{(x,y)\in\bbZ^2\mid x \leq -1, y\le -1,\},\\
&D:= \{(x,y)\in\bbZ^2\mid 0 \le x, y\le -1\},\\
\end{split}
\end{align*}
see Figure~\ref{fig:quadrants}. We will use the notation $\Gamma_1 \Gamma_2 = \Gamma_1 \cup \Gamma_2$ for any two sets $\Gamma_1, \Gamma_2$. For example, $AB$ is the upper half plane. In addition, for a set $\Gamma$, we set $\Gamma_N:=\Gamma\cap[-N,N]^{\times 2}$.
\begin{figure}
\includegraphics[width=0.4\textwidth]{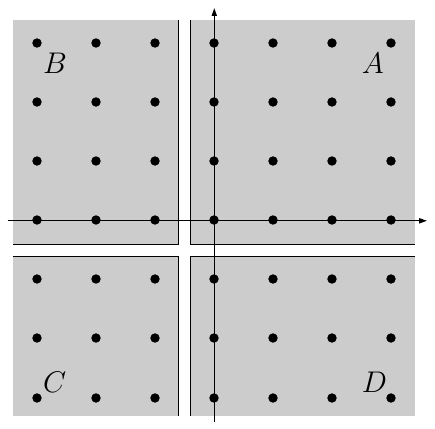}
\caption{The four quadrants used to define the Hall conductance}
\label{fig:quadrants}
\end{figure}

For a region $\Gamma \subset \mathbb{Z}^2$, we define
\begin{equation}\label{Def of k}
k^\Gamma = - \int dt W(t) \tau^h_t(\delta_\Gamma^q(h)),
\end{equation}
with $W(t)$ a  super-polynomially decaying function such that $i \sqrt{2 \pi} \hat{W}(k) = 1/k$ for $|k| \geq g$. Here, we use the specific definition of an interaction given in~(\ref{eq:interaction_evolution}), and so $k^\Gamma$ is a bonafide interaction. Next lemma gives basic properties of this interaction. We recall that the Hamiltonian has finite range~$r$, and let $\partial \Gamma := \{x \in \mathbb{Z}_2 : \mathrm{dist}(x, \Gamma) \leq r, \mathrm{dist}(x, \Gamma^c) \leq r\}$.

\begin{lem}
\label{lem:k}
Suppose that Assumptions~\ref{assum:1} and \ref{assum:3} hold. Then for any $\Gamma \subset \mathbb{Z}^2$,
\begin{enumerate}
\item $k^\Gamma$ is anchored in $\partial \Gamma$,
\item $\delta^q(k^\Gamma_S) = 0$,
\item $k^\Gamma = - k^{\Gamma^c}$.
\end{enumerate}
\end{lem}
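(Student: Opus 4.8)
The plan is to unwind the definition~(\ref{Def of k}) of $k^\Gamma$ and use the two structural facts we have at hand: that the interaction $h$ has finite range $r$, and that $h$ commutes with local charges (Assumption~\ref{assum:3}). For (ii), the key observation is that the derivation $\delta^q$ commutes with the dynamics $\tau^h$ and with $\delta^q_\Gamma$ (all are built from the on-site charge interaction $q$, which is itself $\tau^h$-invariant by Assumption~\ref{assum:3}, and restrictions of $q$ to different regions mutually commute because $q_S=0$ unless $S$ is a singleton). Hence $\delta^q(k^\Gamma) = -\int dt\, W(t)\, \tau^h_t\big(\delta^q(\delta^q_\Gamma(h))\big) = -\int dt\, W(t)\, \tau^h_t\big(\delta^q_\Gamma(\delta^q(h))\big)$, and $\delta^q(h) = 0$ because $h$ is $U(1)$-invariant; one then has to check that this passes to the level of the interaction (not just the generated derivation), which is where the precise definition~(\ref{eq:interaction_evolution}) of the evolved interaction is used, term by term on each finite $S$.

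For (iii), I would use linearity of $\Gamma \mapsto \delta^q_\Gamma$ and of $\Gamma \mapsto k^\Gamma$ in the ``indicator'' sense: $\delta^q_\Gamma + \delta^q_{\Gamma^c}$ corresponds to the restriction of $q$ to all of $\bbZ^2$, i.e.\ $\delta^q_\Gamma(h) + \delta^q_{\Gamma^c}(h) = \delta^q(h) = 0$ by $U(1)$-invariance again. Feeding this into~(\ref{Def of k}) gives $k^\Gamma + k^{\Gamma^c} = -\int dt\, W(t)\, \tau^h_t(\delta^q(h)) = 0$ as interactions, so $k^\Gamma = -k^{\Gamma^c}$. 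Again the only care needed is that the additivity holds at the level of interactions, which should be immediate from how restriction of an interaction is defined in Appendix~\ref{app:restrictions}.

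For (i), the claim is that $k^\Gamma$ is anchored in $\partial\Gamma$ in the sense of~\cite{BDFJ} recalled in Appendix~\ref{app:interactions}. The heuristic is that $\delta^q_\Gamma(h) = \sum_S [q|_\Gamma, h_S]$ only receives contributions from terms $h_S$ with $S$ meeting both $\Gamma$ and $\Gamma^c$ (if $S \subset \Gamma$ then $[Q_\Gamma, h_S] = 0$ by Assumption~\ref{assum:3}, and if $S \subset \Gamma^c$ the restricted charge acts trivially); since $h$ has range $r$, such $S$ live within distance $r$ of $\partial\Gamma$, so $\delta^q_\Gamma(h)$ is an interaction supported near $\partial\Gamma$. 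Applying the Heisenberg evolution $\tau^h_t$ and integrating against the super-polynomially decaying $W(t)$ spreads this out, but with super-polynomially decaying tails controlled by Lieb--Robinson bounds, which is exactly the content of being anchored in $\partial\Gamma$ (rather than strictly supported there).

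The main obstacle I anticipate is part (i): one must verify that the Heisenberg-evolved, $W$-smeared interaction still satisfies the precise quantitative decay estimates defining membership in the relevant interaction class $\caJ$ and the anchoring condition of~\cite{BDFJ}, which requires combining Lieb--Robinson bounds for $\tau^h_t$ with the super-polynomial decay of $W$, and then organizing the resulting operator as a genuine interaction anchored at $\partial\Gamma$. Parts (ii) and (iii) are essentially algebraic once the bookkeeping of ``restriction of an interaction'' and ``derivation generated by an interaction'' is set up, and I would expect them to be short given the appendix.
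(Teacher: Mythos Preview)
Your proposal is correct and matches the paper's approach in all three parts: (i) rests on observing that $\delta^q_\Gamma(h)$ is anchored in $\partial\Gamma$ (exactly for the reasons you give) and that time evolution preserves anchoring, while (ii) and (iii) are the algebraic consequences of charge conservation you describe. The obstacle you anticipate for (i) is already packaged as Lemma~\ref{lem:anchoring}, which the paper simply invokes; you do not need to redo the Lieb--Robinson/$W$-decay analysis by hand.
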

\begin{proof}
The statement (i) follows from Lemma~\ref{lem:anchoring}. Note the non-trivial definition of time evolution of an interaction given in (\ref{eq:interaction_evolution}). Charge conservation, Assumption~\ref{assum:3}, implies (ii,iii).
\end{proof}

A consequence of (i) and Lemma~\ref{lem:inner_commutator} is that $i [k^{AB}, k^{AD}]$ is summable. With this, we can define the Hall conductance via (\ref{eq:Hall}) with
\begin{equation}\label{def:J}
J = \sum_S i [k^{AB}, k^{AD}]_S.
\end{equation}
It is in no way apparent that expectation value of $J$ is adiabatic curvature of some ground state bundle, we refer reader to \cite{QHE1} for the details about this bundle.

As announced in the first paragraph, we will need to connect this definition to a different formula that we will use later. We are going to do this in the remaining part of this section.

 For a region $\Gamma$ we define an interaction $\bar{q}^\Gamma = q|_\Gamma - k^\Gamma$, and denote $\beta^\Gamma$ the associated family of automorphisms. $\beta_{\phi}^\Gamma$ corresponds to threading flux $\phi$ through the boundary of $\Gamma$, see \cite{BBDFanyon}. The function $W$ in~(\ref{Def of k}) is chosen so that the state $\omega$ is invariant, namely
\begin{equation}\label{qbar invariance}
\omega\circ\delta^{\bar{q}^\Gamma} = 0,\qquad \omega\circ\beta^\Gamma_\phi = \omega
\end{equation}
for all regions $\Gamma$, see~\cite{Adiabatic}. For finite $\Gamma$, the operator $K_\Gamma = \sum_{S} k^{\Gamma}_S$ is well-defined and the invariance above can be phrased as
\begin{equation}\label{qbar finite}
\omega([\bar{Q}_\Gamma,A]) = 0
\end{equation}
for all $A\in\caA$, where $\bar{Q}_\Gamma := Q_\Gamma - K_\Gamma$.

We claim that the interaction $i [\bar{q}^{AB}, k^{AD}]$ is also summable and that
$$
\omega(\sum_S i [\bar{q}^{AB}, k^{AD}]_S) = 0.
$$
To establish this, we start by recalling that $k^{AD}$ is anchored in $\partial (AD)$, see Lemma~\ref{lem:k}(i). We now split the sum to two parts. First of all, if $S \subset AB$, we have that
$$
[\bar{q}^{AB}, k^{AD}]_S = \sum_{S_1 \cup S_2 = S} [\bar{q}^{AB}_{S_1}, k^{AD}_{S_2}] =- \sum_{S_1 \cup S_2 = S} [k^{AB}_{S_1}, k^{AD}_{S_2}],
$$
by Lemma~\ref{lem:k}(ii). By Lemma~\ref{lem:inner_commutator}, the sum $
\sum_{S_1 \cup S_2 \subset AB} [k^{AB}_{S_1}, k^{AD}_{S_2}]
$
is absolutely convergent, in particular, we can write it as 
$$
-\sum_{S_2 \subset AB} \sum_{S_1 \subset AB} [k^{AB}_{S_1}, k^{AD}_{S_2}] = \sum_{S_2 \subset AB} \sum_{S_1 \subset AB} [\bar{q}^{AB}_{S_1}, k^{AD}_{S_2}],
$$
where we used that for all $S_2 \subset AB$, $\sum_{S_1 \subset AB} [q_{S_1}, k^{AD}_{S_2}] = \delta^q(k^{AD}_{S_2})=0$. It might be worth noting that the double sum on the RHS is not absolutely convergent anymore. Second of all, we consider those sets $S$ that intersect $(AB)^c$. In fact, the anchoring of $k^{AD}$ implies that we are considering only those that intersect both $\partial(AD)$ and $(AB)^c$. On the one hand, the sum
\begin{equation*}
\sum_{S_1, S_2 : (S_1 \cup S_2) \cap \partial(AD) \cap (AB)^c \neq \emptyset} [k^{AB}_{S_1}, k^{AD}_{S_2}]
\end{equation*}
is absolutely convergent by Lemma~\ref{lem:inner_commutator}. On the other hand, since the interaction $q$ is strictly on site, 
$$
\sum_{S : S \cap \partial(AD) \cap (AB)^c \neq \emptyset} [q^{AB}, k^{AD}]_S = \sum_{S_1,S_2:S_2 \cap \partial(AD) \cap (AB)^c  \neq \emptyset} [q^{AB}_{S_1}, k^{AD}_{S_2}]
$$
and the sum on the RHS is absolutely convergent. Altogether, we have now established that the commutator is summable. The above argument also yields that if the two sums are put added to each other, we obtain a convergent sum,
\begin{align}\label{panda}
\sum_S i[\bar{q}^{AB}, k^{AD}]_S = \sum_{S_2} \delta^{\bar{q}^{AB}}(k^{AD}_{S_2}).
\end{align}
It follows that the expectation vanishes since $\omega(\delta^{\bar{q}^{AB}}(k^{AD}_{S_2})) = 0$ for every $S_2$. By the same argument, the equality also holds with $AB$ and $AD$ exchanged. So we established two new expressions for Hall conductance,
$$
\omega(\sum_{S} [k^{AB}, k^{AD}]_S) = \omega(\sum_S [q^{AB}, k^{AD}]_S) = \omega(\sum_S [k^{AB}, q^{AD}]_S).
$$
Adding the last two, and subtracting the first and a zero $\sum_S [q^{AB}, q^{AD}]_S$ we then get
\begin{equation}
\label{eq:barq_hall}
\omega(J) =  \omega(-i \sum_{S} [\bar{q}^{AB}, \bar{q}^{AD}]_S),
\end{equation}
with $J$ defined in~(\ref{def:J}). To avoid any confusion, we remark that the expectation on the RHS looks formally zero by (\ref{qbar invariance}).  However, the double sum $\sum_{S_1, S_2} [\bar{q}^{AB}_{S_1}, \bar{q}^{AD}_{S_2}]$ is not convergent so (\ref{qbar invariance}) is not applicable.

So far, the Hall conductance has been connected to adiabatic curvature. We now show that the definition above can also be related to charge transport. We start with a formal calculation (which, to be clear, is wrong!). By differentiating under the integral,
$$
(\beta_{2 \pi}^{AD})^{-1}(\bar{q}^{AB}) - \bar{q}^{AB} = -\int_0^{2 \pi} (\beta_\phi^{AD})^{-1} \delta_{\bar{q}^{AD}}(\bar{q}^{AB}) d\phi,
$$
however LHS and RHS are not equal as interactions based on our definitions~(\ref{def of commutator},\ref{eq:interaction_evolution}) of manipulating interactions. Continuing with formal calculations (which ignore that sums are not absolutely convergent), we conclude that
$$
\sum_S\Big((\beta_{2 \pi}^{AD})^{-1}(\bar{q}^{AB}) - \bar{q}^{AB}\Big)_S = \int_{0}^{2 \pi} (\beta_\phi^{AD})^{-1} \Big(\sum_Si[\bar{q}^{AB}, \bar{q}^{AD}]_S\Big) d\phi,
$$
and the expectation of the RHS is $-2 \pi \kappa$ by (\ref{eq:barq_hall}). This way, we obtained a formal connection between change of charge under the action of $\beta^{AD}_{2 \pi}$ and Hall conductance.

It likely won't be any surprise to the reader that to make the calculation correctly we need to regularize the expression. There are many ways how to do that, our regularization resembles \cite{BBDFanyon} (see also Lemma \ref{inu}). To this end, for $r>0$, we decompose 
\begin{equation}\label{beta decomposition}
(\beta_{2 \pi}^{AD})^{-1} =  \gamma^{1,r} \gamma^{0,r},
\end{equation}
where $\gamma$ are automorphisms such that
\begin{enumerate}
\item $\gamma^{0,r}$ (resp. $\gamma^{1,r}$) is generated by TDI $g_{0,r}$ (resp. $g_{1,r}$) anchored in $\partial(AD) \cap \{x_2 \leq r\}$ (resp. $\partial(AD) \cap \{x_2 \geq r\}$), moreover $(g_{1,r})_S = 0$ unless $S \subset AB$,
\item there exists function $f \in \mathcal{F}$ and a constant $C$ such that $\| g_{j,r}\|_f \leq C$ holds for $j=0,1$ and all $r \geq 0$, 
\item the TDIs are charge conserving, i.e. $[(g_{j,r})_S, Q_S] = 0$ for $j=0,1$, $r \geq 0$ and all finite $S$.
\end{enumerate}
For conceptual clarity, the existence of this decomposition is assumed here, with a choice of $\gamma^{0,r}$ being given explicitly when the lemma will be used in the proof of Theorem~\ref{thm:anyon}.
\begin{lem}
\label{lem:J_0_reg}
Let 
\begin{equation*}
J_0 = \int_{0}^{2\pi}(\beta^{AD}_\phi)^{-1} \Big(i\sum_{S} [\bar{q}^{AB}, \bar{q}^{AD}]_S\Big) d\phi.
\end{equation*}
Then
$$
\omega(J_0) = -2 \pi \omega(J),
$$
and 
$$
\lim_{r \to \infty} \lim_{N \to \infty} \gamma^{0,r} (\bar{Q}_{(AB)_N}) - \bar{Q}_{(AB)_N}  = J_0,
$$
where the limits are in the uniform topology of the C*-algebra.
\end{lem}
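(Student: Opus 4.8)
The plan is to establish the two assertions separately, treating the identity $\omega(J_0) = -2\pi\,\omega(J)$ by the computation already sketched in the text, and treating the convergence statement by unwinding the definition of $\beta^{AD}_\phi$ as the flow of a time-dependent interaction and carefully accounting for the non-absolutely-convergent sums. For the first assertion, note that $J_0 = \int_0^{2\pi}(\beta^{AD}_\phi)^{-1}\big(i\sum_S[\bar q^{AB},\bar q^{AD}]_S\big)\,d\phi$ is a genuine integral of a uniformly bounded, $\phi$-continuous family of self-adjoint operators, and that $\omega$ is $\beta^{AD}$-invariant by~(\ref{qbar invariance}); hence $\omega(J_0) = \int_0^{2\pi}\omega\big(i\sum_S[\bar q^{AB},\bar q^{AD}]_S\big)\,d\phi = 2\pi\,\omega\big(i\sum_S[\bar q^{AB},\bar q^{AD}]_S\big)$, and this equals $-2\pi\,\omega(J)$ by~(\ref{eq:barq_hall}). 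So the first displayed equation is essentially immediate once the summability established above~(\ref{eq:barq_hall}) is in hand.

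For the convergence statement, I would first fix $r>0$ and analyze $\gamma^{0,r}(\bar Q_{(AB)_N}) - \bar Q_{(AB)_N}$ for finite $N$. Since $\gamma^{0,r}$ is generated by the TDI $g_{0,r}$ anchored in $\partial(AD)\cap\{x_2\le r\}$, and $\bar Q_{(AB)_N} = Q_{(AB)_N} - K_{(AB)_N}$ is local, one has $\gamma^{0,r}(\bar Q_{(AB)_N}) - \bar Q_{(AB)_N} = \int_0^1 \frac{d}{d\phi}\gamma^{0,r}_\phi(\bar Q_{(AB)_N})\,d\phi = \int_0^1 \gamma^{0,r}_\phi\big(i[G_{0,r}(\phi),\bar Q_{(AB)_N}]\big)\,d\phi$ for the appropriate generator; using charge conservation of $g_{0,r}$ (property (iii)) and the anchoring, the commutator $i[G_{0,r}(\phi),\bar Q_{(AB)_N}]$ localizes near $\partial(AD)\cap\{x_2\le r\}\cap\partial(AB)_N$ and, as $N\to\infty$, converges in norm to a self-adjoint element supported near $\partial(AD)\cap\{x_2\le r\}\cap\partial(AB)$, i.e. near the origin. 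This is where Lieb–Robinson-type decay bounds (via the $\|\cdot\|_f$ control in property (ii)) do the work: the boundary-of-$(AB)_N$ contribution far from the anchoring region decays super-polynomially. Taking $N\to\infty$ thus gives a well-defined limit $\ell_r := \lim_N \gamma^{0,r}(\bar Q_{(AB)_N}) - \bar Q_{(AB)_N}$.

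The remaining, and I expect main, obstacle is identifying $\lim_{r\to\infty}\ell_r$ with $J_0$. The strategy is to recognize the decomposition~(\ref{beta decomposition}): since $(\beta^{AD}_{2\pi})^{-1} = \gamma^{1,r}\gamma^{0,r}$ and $\gamma^{1,r}$ acts by a TDI supported inside $AB$ away from the origin, one expects $\gamma^{1,r}$ to act trivially on the relevant local limit as $r\to\infty$, so that $\lim_{r\to\infty}\ell_r$ records the full action of $(\beta^{AD}_{2\pi})^{-1}$ on $\bar Q_{AB}$ "localized at the corner." One then matches this to $J_0$ by carrying out the formal computation in the text — $(\beta^{AD}_{2\pi})^{-1}(\bar q^{AB}) - \bar q^{AB} = -\int_0^{2\pi}(\beta^{AD}_\phi)^{-1}\delta_{\bar q^{AD}}(\bar q^{AB})\,d\phi$ — but at the level of the regularized finite-$N$ operators, where all sums are finite, and only then taking $N\to\infty$ and $r\to\infty$. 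The delicate point is that the naive sum $\sum_{S_1,S_2}[\bar q^{AB}_{S_1},\bar q^{AD}_{S_2}]$ is not absolutely convergent, so one must keep the two summation orders (over $S_1\subset AB$ first, as in~(\ref{panda})) synchronized with the geometric truncation $[-N,N]^{\times 2}$ throughout, and verify that the error terms produced by the mismatch between "anchored in $\partial(AD)$" and "supported in $(AB)_N$" vanish in the iterated limit. Once the bookkeeping is done, $J_0 = \int_0^{2\pi}(\beta^{AD}_\phi)^{-1}\big(i\sum_S[\bar q^{AB},\bar q^{AD}]_S\big)\,d\phi$ emerges exactly as the $r\to\infty$, $N\to\infty$ limit, completing the proof.
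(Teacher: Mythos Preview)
Your handling of the first identity $\omega(J_0) = -2\pi\,\omega(J)$ matches the paper exactly.

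For the convergence statement, your overall strategy --- regularize, use the decomposition~(\ref{beta decomposition}), take iterated limits --- is right, but the execution differs from the paper's and leaves a real gap. The paper does \emph{not} differentiate along $\gamma^{0,r}_\phi$; instead it uses the decomposition as an algebraic identity $\gamma^{0,r} = (\gamma^{1,r})^{-1}\circ(\beta^{AD}_{2\pi})^{-1}$ and differentiates along the full flow $\beta^{AD}_\phi$, which is what makes the integrand of $J_0$ appear directly. Your route via $\int_0^1\gamma^{0,r}_\phi\big(i[G_{0,r}(\phi),\bar Q_{(AB)_N}]\big)\,d\phi$ produces an expression in the unspecified generator $g_{0,r}$, and it is not clear how to pass from this to the $\beta^{AD}_\phi$-integral defining $J_0$ as $r\to\infty$: the decomposition relates only the time-$2\pi$ maps, not the intermediate flows, so there is no reason $g_{0,r}$ should approximate $-\bar q^{AD}$ in any useful sense.

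More importantly, the paper splits $\bar Q_{(AB)_N} = Q_{(AB)_N} - K_{(AB)_N}$ and treats the two pieces by genuinely different arguments. For the $K$-part, anchoring of $k^{AB}$ in $\partial(AB)$ allows one to take $N\to\infty$ and then $r\to\infty$ term by term. For the $Q$-part, after rewriting via the decomposition and differentiating along $\beta^{AD}_\phi$, a remainder $J_N$ appears, collecting the contribution $(\gamma^{1,r})^{-1}(Q_{(AB)_N}) - Q_{(AB)_N}$ together with the $(AB)_N^c$-tail of $\delta_{\bar q^{AD}}(Q_{(AB)_N})$. This $J_N$ is anchored in $(AB)_N^c$; the paper argues that its limit exists (since everything else in the identity converges), is a multiple of the identity (being anchored outside every fixed ball), and is traceless for each finite $N$ --- hence the limit is zero. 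This traceless-scalar argument is the key mechanism that kills the ``boundary at infinity'' contribution, and it is absent from your sketch; without it the ``error terms produced by the mismatch'' you allude to cannot be disposed of.
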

\begin{proof}
Since $[\bar{q}^{AB}, \bar{q}^{AD}]$ is summable, the equality $\omega(J_0) = -2 \pi \omega(J)$ follows immediately from~(\ref{eq:barq_hall}) and the invariance of $\omega$ under the action of~$\beta_s^{AD}$.

It remains to prove the last statement. We split the limit into two parts using $\bar{Q}_{(AB)_N} = Q_{(AB)_N} - K_{(AB)_N}$. We first consider the charge contribution. We fix $r>0$, and we are going to show that the limit of $\gamma^{0,r} (Q_{(AB)_N}) - Q_{(AB)_N}$ as $N\to\infty$ exists. For $M > N$, 
$$
{Q}_{(AB)_M} - {Q}_{(AB)_N} = \sum_{x \in (AB)_M \setminus (AB)_N} q_x.
$$ 
Using Lemma~\ref{lem:anchored_auto} we have for $|x| \gg r$,
$$
\|\gamma^{0,r}(q_x) - q_x\| \leq f(|x|/2), 
$$
and 
$$
\| (\gamma^{0,r} (Q_{(AB)_M}) - Q_{(AB)_M}) - (\gamma^{0,r} (Q_{(AB)_N}) - Q_{(AB)_N}) \| \leq \sum_{|x| \geq N} f(|x|/2).
$$
Since $f(|x|/2)$ is summable, the sum is going to zero as $N \to \infty$. Hence the sequence is Cauchy and therefore it has a limit.

The decomposition~(\ref{beta decomposition}) yields that
\begin{align*}
\gamma^{0,r} ({Q}_{(AB)_N}) - {Q}_{(AB)_N} 
      &=   (\gamma^{1,r})^{-1} ( (\beta^{AD}_{2 \pi})^{-1}({Q}_{(AB)_N}) - {Q}_{(AB)_N}) \\ &\quad+ (\gamma^{1,r})^{-1}({Q}_{(AB)_N}) - {Q}_{(AB)_N}.
\end{align*}
Since ${Q}_{(AB)_N}$ is a bonafide element of the algebra, we can differentiate under the integral sign to get
$$
(\beta^{AD}_{2 \pi})^{-1}({Q}_{(AB)_N}) - {Q}_{(AB)_N} =- \int_{0}^{2 \pi} (\beta^{AD}_\phi)^{-1}  \delta_{{\bar{q}}^{AD}}({Q}_{(AB)_N}) d\phi.
$$
Now
\begin{align*}
\delta_{\bar{q}^{AD}}(Q_{(AB)_N}) &= - \sum_{S} i[q^{(AB)_N}, \bar{q}^{AD}]_S \\
						     & =   - \sum_{S} i[q^{AB}, \bar{q}^{AD}]_S + \sum_S i [q^{(AB)_N^c}, \bar{q}^{AD}]_S,
\end{align*}
where the convergence of these sums was established in the paragraphs preceding the lemma. Hence,
\begin{equation*}
\gamma^{0,r} ({Q}_{(AB)_N}) - {Q}_{(AB)_N}  = (\gamma^{1,r})^{-1} \bigg( \int_{0}^{2 \pi} (\beta^{AD}_\phi)^{-1} \big(\sum_{S} i[q^{AB}, \bar{q}^{AD}]_S\Big) d\phi\bigg) + J_N,
\end{equation*}
where
$$
J_N = - \int_{0}^{2 \pi} (\beta^{AD}_s)^{-1} \sum_S i [q^{(AB)_N^c}, \bar{q}^{AD}]_S ds + (\gamma^{1,r})^{-1}({Q}_{(AB)_N}) - {Q}_{(AB)_N}.
$$
The automorphism $(\gamma^{1,r})^{-1}$ is generated by a TDI, let's call it $g$, that is charge conserving and supported in $AB$. Then we can write the last term as 
$$
(\gamma^{1,r})^{-1}({Q}_{(AB)_N}) - {Q}_{(AB)_N} = \sum_{S : S \cap (AB)_N^c \neq \emptyset} \int_0^1 \tau^g_s( i[(g_s)_S, {Q}_{(AB)_N}]) ds,
$$
which gives a decomposition $J_N = \sum_S (j_N)_S$ with $j_N$ anchored in $(AB)_N^c$. We established above that $J_N$ has a limit, and since it is anchored on the complement of a square that eventually covers all of $\bbZ^2$, the limit is a multiple of the identity. But $J_N$ is traceless for all $N$ and hence the limit is zero. In conclusion, we obtained
$$
\lim_{N \to \infty} \gamma^{0,r} (Q_{(AB)_N}) - Q_{(AB)_N} = (\gamma^{1,r})^{-1} \bigg( \int_{0}^{2 \pi} (\beta^{AD}_\phi)^{-1} \big(\sum_{S} i[q^{AB}, \bar{q}^{AD}]_S \big)d\phi\bigg).
$$
As $(\gamma^{1,r})^{-1}(A) \to A$ for all $A \in \caA$ we get
\begin{equation}\label{Q part}
\lim_{r \to \infty} \lim_{N \to \infty} \gamma^{0,r} (Q_{(AB)_N}) - Q_{(AB)_N} =  \int_{0}^{2 \pi} (\beta^{AD}_\phi)^{-1} \big(\sum_{S} i[q^{AB}, \bar{q}^{AD}]_S\big) d\phi.
\end{equation}

Regarding the second part associated with $K_{(AB)_N}$, Lemma~\ref{lem:anchored_auto} gives that the contribution of $(k^{(AB)_N})_S$ to $
 \gamma^{0,r} (K_{(AB)_N}) - K_{(AB)_N}$ decays with the distance of $S$ from the origin. This means that we can directly take the limit to get
$$
\lim_{N \to \infty}(\gamma^{0,r} (K_{(AB)_N}) - K_{(AB)_N}) = \sum_S \gamma^{0,r} (k^{AB}_S) - k^{AB}_S.
$$
Using the same lemma, we have that the sum on the RHS is uniformly convergent in $r$ (we assumed that TDIs $g^{0,r}$ are uniformly bounded) so we can also take the limit in $r$ to get
$$
\lim_{r \to \infty} \lim_{N \to \infty}(\gamma^{0,r} (K_{(AB)_N}) - K_{(AB)_N}) = \sum_S (\beta^{AD}_{2 \pi})^{-1} (k^{AB}_S) - k^{AB}_S.
$$
Finally, we can now differentiate term by term under the integral to get
\begin{align}\label{K part}
\lim_{r \to \infty} \lim_{N \to \infty}(\gamma^{0,r} (K_{(AB)_N}) - K_{(AB)_N}) 
&= -\sum_S \int_0^{2 \pi} (\beta_\phi^{AD})^{-1} i [\bar{q}^{AD}, k^{AB}_S] d\phi \nonumber\\
&= - \int_0^{2 \pi} (\beta_\phi^{AD})^{-1} \big(i \sum_S[\bar{q}^{AD}, k^{AB}]_S\big) d\phi, 
\end{align}
where in the second line we used that $[\bar{q}^{AD}, k^{AB}]_S$ is absolutely summable so even though the lines are not equal for each $S$, they have the same sum. (Recall (\ref{panda}).)

Adding now (\ref{Q part},\ref{K part}) back together finishes the proof.
\end{proof}

We end this section by proving that $J_0$ commutes with the ground state. We recall an elementary lemma.
\begin{lem} Let $M \in \caA$ be such that $\omega([M,A]) =0$ for all $A \in \caA$. Then 
\label{lem:comm_elementary}
\begin{equation*}
\omega(M A) = \omega(M) \omega(A)
\end{equation*}
 for any $A\in\caA$.
\end{lem}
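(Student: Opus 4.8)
The plan is to pass to the GNS representation $(\caH,\pi,\Omega)$ of $\omega$ and exploit that $\omega$ is pure, so $\pi$ is irreducible; whence $\pi(\caA)''=B(\caH)$, and $\pi(\caA)$ is weak-operator dense in $B(\caH)$. The hypothesis $\omega([M,A])=0$ for all $A\in\caA$ is nothing but the statement that the linear functional $B\mapsto\braket{\Omega}{[\pi(M),B]\,\Omega}$ vanishes on $\pi(\caA)$, since $\braket{\Omega}{[\pi(M),\pi(A)]\Omega}=\omega(MA)-\omega(AM)$. This functional is a difference of two maps of the form $B\mapsto\braket{\psi_1}{B\psi_2}$, hence weak-operator continuous, so it must in fact vanish on all of $B(\caH)$.

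Once that is established the rest is a two-line computation. I would test the identity against the rank-one operators $B=\ket{x}\bra{\Omega}$, $x\in\caH$: using $\norm{\Omega}=1$ one gets $[\pi(M),B]\Omega=\pi(M)x-\omega(M)x$, so $0=\braket{\Omega}{[\pi(M),B]\Omega}=\braket{\pi(M)^*\Omega-\overline{\omega(M)}\,\Omega}{x}$ for every $x$, i.e. $\pi(M)^*\Omega=\overline{\omega(M)}\,\Omega$, so that $\Omega$ is an eigenvector of $\pi(M)^*$. Then for any $A\in\caA$, $\omega(MA)=\braket{\Omega}{\pi(M)\pi(A)\Omega}=\braket{\pi(M)^*\Omega}{\pi(A)\Omega}=\omega(M)\braket{\Omega}{\pi(A)\Omega}=\omega(M)\,\omega(A)$, which is the assertion.

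The one step that genuinely uses purity, and the one to phrase carefully, is the passage from ``vanishes on $\pi(\caA)$'' to ``vanishes on $B(\caH)$'': it relies on $\pi$ being nondegenerate, so that $\pi(\caA)$ is weak-operator dense in its bicommutant, together with irreducibility giving $\pi(\caA)''=B(\caH)$. If one wished to avoid invoking the bicommutant theorem, a longer but more hands-on alternative is available: reduce first to $M=M^*$, and then, after shifting and rescaling, to $0\le M\le\unit$; prove by induction that $\omega([M^n,A])=0$ for all $n$ (each step a short rearrangement using the hypothesis together with the inductive hypothesis), hence $\omega([f(M),A])=0$ for every $f\in C([0,1])$; and finally decompose $\omega=\omega(M)\,\omega_1+(1-\omega(M))\,\omega_2$ with $\omega_1(A)=\omega(MA)/\omega(M)$ and $\omega_2(A)=\omega((\unit-M)A)/(1-\omega(M))$ (treating $\omega(M)\in\{0,1\}$ separately via Cauchy--Schwarz), where positivity of $\omega_1$ follows from $\omega(MA^*A)=\omega(M^{1/2}A^*AM^{1/2})\ge 0$ --- precisely the point where one needs $\omega([M^{1/2},\cdot])=0$, since $M$ does not commute with $A$ --- and then $\omega_1=\omega$ by purity. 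I would present the first route.
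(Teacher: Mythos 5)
Your primary argument is correct and is essentially the paper's own proof: both pass to the GNS representation, use purity to get irreducibility and hence weak-operator density of $\pi(\caA)$ in $B(\caH)$, extend the vanishing of the commutator functional $B\mapsto\langle\Omega,[\pi(M),B]\Omega\rangle$ to all of $B(\caH)$ by weak continuity, and deduce that $\Omega$ is an eigenvector of $\pi(M)^*$ (the paper phrases this as $[P,\pi(M)]=0$ for the rank-one projection $P$ onto $\bbC\Omega$, which is the same information you extract by testing against $\lv x\ra\la\Omega\rv$). No gaps; the concluding two-line computation matches the paper's.
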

\begin{proof}
In the GNS representation $(\caH,\pi,\Omega)$ of $\omega$, we let $P$ be the orthogonal projection onto the space spanned by $\Omega$. Since $\omega$ is pure, $\pi$ is irreducible and so $\pi(\caA)'' = (\bbC\cdot1)' = B(\caH)$, namely $\pi(\caA)$ is weakly dense in $B(\caH)$. For $B\in B(\caH)$, let $(A_\alpha)_{\alpha}$ be a net
 in $\caA$ converging weakly to $B$. Then
\begin{align*}
\Tr([P,\pi(M)]B) &= \langle\Omega,\pi(M)B\Omega \rangle - \langle\Omega,B\pi(M)\Omega \rangle \\
&=\lim_{\alpha\to\infty}\langle\Omega,\pi(M A_\alpha)\Omega \rangle - \langle\Omega,\pi(A_\alpha M)\Omega \rangle \\
&=\lim_{\alpha\to\infty} \omega([M,A_\alpha]) = 0
\end{align*}
by the assumption. Since this is true for any $B\in B(\caH)$, we conclude that $[P,\pi(M)] = 0$, and hence
\begin{equation*}
\omega(M A) = \Tr(P\pi(M) \pi(A)) = \Tr(P\pi(M)P \pi(A)) = \omega(M)\omega(A)
\end{equation*}
because $P$ is a one-dimensional projection.
\end{proof}

For any finite $\Gamma$, the conditions of the lemma are satisfied for $\bar{Q}_\Gamma$ by (\ref{qbar finite}) and we get
\begin{equation}
\label{eq:barQ}
\omega(\bar{Q}_\Gamma A) = \omega(\bar{Q}_\Gamma) \omega(A) = \omega(Q_\Gamma) \omega(A).
\end{equation}
To get the second equality we noted that because $\Gamma$ is finite,
\begin{equation*}
K_\Gamma = \int W(t) \tau_t^h(\delta_h(Q_\Gamma))dt.
\end{equation*}
Since $\omega$ is a ground state of $\tau_t^h$, it is in particular invariant and $\omega\circ\delta_h = 0$ so that $\omega(K_\Gamma) = 0$ by the formula above. It then follows from the definition of $\bar Q_\Gamma$ that $\omega(\bar Q_\Gamma) = \omega(Q_\Gamma)$.

We note for later purposes that 
\begin{equation}\label{Qbar eigenvector}
\pi(\bar Q_\Gamma)\Omega = \omega(Q_\Gamma)\Omega.
\end{equation}
This follows immediately by applying both sides of the identity $\pi(\bar{Q}_\Gamma) P = P \pi(\bar{Q}_\Gamma)$ to $\Omega$: 
\begin{equation*}
 \pi(\bar{Q}_\Gamma)\Omega = \langle \Omega,\pi(\bar{Q}_\Gamma)\Omega\rangle \Omega .
\end{equation*}

\begin{lem}
\label{lem:Invariance}
Suppose that Assumptions~\ref{assum:1} and \ref{assum:3} hold. Then
\begin{equation}
\label{eq:J0}
\omega(J_0 A) = \omega(J_0) \omega(A)
\end{equation}
holds for all $A\in\caA$.
\end{lem}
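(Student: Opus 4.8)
The plan is to verify that $J_0$ satisfies the hypothesis of Lemma~\ref{lem:comm_elementary}, i.e. that $\omega([J_0,A])=0$ for all $A\in\caA$; identity~(\ref{eq:J0}) then follows immediately (note that $J_0\in\caA$, since it is a norm-convergent integral of elements of $\caA$). As $A\mapsto\omega([J_0,A])$ is norm-continuous and $\caA_\loc$ is norm-dense, it suffices to treat local $A$, and we fix $L$ with $\supp A\subset[-L,L]^{\times 2}$. By Lemma~\ref{lem:J_0_reg} we have $J_0=\lim_{r\to\infty}\lim_{N\to\infty}D_{r,N}$ in norm, with $D_{r,N}:=\gamma^{0,r}(\bar Q_{(AB)_N})-\bar Q_{(AB)_N}$, so $\omega([J_0,A])=\lim_{r\to\infty}\lim_{N\to\infty}\omega([D_{r,N},A])$. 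Using the decomposition~(\ref{beta decomposition}), that is $\gamma^{0,r}=(\gamma^{1,r})^{-1}\circ(\beta^{AD}_{2\pi})^{-1}$, and writing $\widetilde Q_N:=(\beta^{AD}_{2\pi})^{-1}(\bar Q_{(AB)_N})$, I split
\begin{equation*}
D_{r,N}=\big(\widetilde Q_N-\bar Q_{(AB)_N}\big)+E_{r,N},\qquad E_{r,N}:=(\gamma^{1,r})^{-1}(\widetilde Q_N)-\widetilde Q_N .
\end{equation*}
The point is that the first summand is ``transparent'' to $\omega$ in the sense of Lemma~\ref{lem:comm_elementary}, while the second is anchored far from $\supp A$.

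For the first summand: $\bar Q_{(AB)_N}$ is transparent by~(\ref{qbar finite}), and $\widetilde Q_N$ is transparent as well, because $(\beta^{AD}_{2\pi})^{-1}$ is an automorphism so that $[\widetilde Q_N,B]=(\beta^{AD}_{2\pi})^{-1}\big([\bar Q_{(AB)_N},\beta^{AD}_{2\pi}(B)]\big)$ for all $B\in\caA$, whence, using $\omega\circ\beta^{AD}_\phi=\omega$ from~(\ref{qbar invariance}) together with~(\ref{qbar finite}),
\begin{equation*}
\omega([\widetilde Q_N,B])=\omega\big([\bar Q_{(AB)_N},\beta^{AD}_{2\pi}(B)]\big)=0 .
\end{equation*}
Transparency being a linear condition, $\widetilde Q_N-\bar Q_{(AB)_N}$ is transparent, and therefore $\omega([D_{r,N},A])=\omega([E_{r,N},A])$.

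It remains to estimate $E_{r,N}$. The automorphism $(\gamma^{1,r})^{-1}$ is generated by a TDI anchored in $\partial(AD)\cap\{x_2\geq r\}$ whose $f$-norm is bounded uniformly in $r$ by~(ii) of the decomposition~(\ref{beta decomposition}), while $\widetilde Q_N$ is the operator associated with a bonafide interaction whose per-site norm is bounded uniformly in $N$. Hence, by the anchored-interaction estimates of Appendix~\ref{app:interactions} (in particular Lemma~\ref{lem:anchored_auto}), $E_{r,N}$ is the operator associated with an interaction anchored in $\partial(AD)\cap\{x_2\geq r\}$ whose $f'$-norm is bounded by a constant independent of $r$ and $N$, for some fixed $f'\in\caF$. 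Since $\partial(AD)\cap\{x_2\geq r\}$ lies at distance at least $r-L$ from $\supp A$, only sets of diameter at least $r-L$ contribute to $[E_{r,N},A]$, and the decay of $f'$ yields $\|[E_{r,N},A]\|\leq C_A\,\varepsilon(r-L)$ with $C_A$ depending only on $A$ and $\varepsilon(\ell)\to0$ as $\ell\to\infty$, uniformly in $N$. Therefore
\begin{equation*}
\big|\omega([J_0,A])\big|=\Big|\lim_{r\to\infty}\lim_{N\to\infty}\omega([E_{r,N},A])\Big|\leq\lim_{r\to\infty}C_A\,\varepsilon(r-L)=0,
\end{equation*}
so $\omega([J_0,A])=0$ for all local, hence all, $A\in\caA$, and Lemma~\ref{lem:comm_elementary} gives~(\ref{eq:J0}).

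I expect the last paragraph to be the main obstacle: one must show that $E_{r,N}$ --- morally the part of the flux insertion $(\beta^{AD}_{2\pi})^{-1}$ happening above height $r$, conjugated away from the origin --- is a genuine anchored interaction in $\partial(AD)\cap\{x_2\geq r\}$ with norm bounds uniform in both $r$ and $N$, so that its commutator with the fixed local $A$ is $O(\varepsilon(r-L))$ uniformly in $N$. This is precisely what the anchored-interaction calculus of~\cite{BDFJ} (summarized in Appendix~\ref{app:interactions}) provides, once combined with the uniform bounds postulated in~(ii) of~(\ref{beta decomposition}); the remaining steps are the routine algebra of automorphisms and quasi-local sums.
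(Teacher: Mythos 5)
Your proof is correct, but it takes a genuinely different route from the paper's. The paper works with the first characterization of $J_0$ in Lemma~\ref{lem:J_0_reg}, namely $J_0=\int_0^{2\pi}(\beta^{AD}_\phi)^{-1}\big(i\sum_S[\bar q^{AB},\bar q^{AD}]_S\big)\,d\phi$, and disposes of the commutator in two lines: by~(\ref{eq:commderivative}), $\omega\big(\big[\sum_S[\bar q^{AB},\bar q^{AD}]_S,A\big]\big)=\omega\big(\delta_{\bar q^{AB}}\delta_{\bar q^{AD}}(A)-\delta_{\bar q^{AD}}\delta_{\bar q^{AB}}(A)\big)=0$ by the first part of~(\ref{qbar invariance}), and the second part of~(\ref{qbar invariance}) lets one push this through the $\phi$-integral exactly as you do for $\widetilde Q_N$. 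You instead start from the second (charge-transport) characterization, split off the $\omega$-transparent piece $\widetilde Q_N-\bar Q_{(AB)_N}$, and kill the remainder $E_{r,N}$ by a locality estimate; your transparency computation for $\widetilde Q_N$ is the same conjugation trick the paper applies directly to the summable commutator. The trade-offs: your argument is physically transparent and the technical step you flag is genuinely doable with Appendix~\ref{app:interactions} (it mirrors the treatment of $J_N$ inside the paper's proof of Lemma~\ref{lem:J_0_reg}), but it makes Lemma~\ref{lem:Invariance} depend on the existence of the decomposition~(\ref{beta decomposition}) --- which the paper only postulates and constructs later, in the proof of Theorem~\ref{thm:anyon} --- and on the full regularization machinery of Lemma~\ref{lem:J_0_reg}, whereas the paper's proof needs only the summability of $[\bar q^{AB},\bar q^{AD}]$ established in Section~\ref{sec:Hall} and the algebraic identity~(\ref{eq:commderivative}). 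If you keep your route, the anchoring of $E_{r,N}$ is most cleanly obtained via the Duhamel formula $E_{r,N}=\sum_S\int_0^1\tau^g_s\big(i[(g_s)_S,\widetilde Q_N]\big)\,ds$ with $g$ the TDI generating $(\gamma^{1,r})^{-1}$, combined with Lemma~\ref{lem:anchored_commutator}.
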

\begin{proof}
By (\ref{eq:commderivative}), we have 
$$
\omega\Big([\sum_S [\bar{q}^{AB}, \bar{q}^{AD}]_S,A]\Big) = \omega(\delta_{\bar{q}^{AB}} \delta_{\bar{q}^{AD}}(A) - \delta_{\bar{q}^{AD}} \delta_{\bar{q}^{AB}}(A)),
$$
for any local $A\in\caA$
and the RHS is equal to zero by (\ref{qbar invariance}). Using the second part of (\ref{qbar invariance}) we then get 
$$
\omega([J_0,A]) = 0,
$$
for all $A \in \caA$ and the statement follows from Lemma~\ref{lem:comm_elementary}.
\end{proof}

\section{Construction of an object in $\mathcal{M}$ associated with the $U(1)$ symmetry}
\label{sec:anyon}
Having defined the current observable $J_0$, we now turn to the explicit construction of the representation $\rho \in \mathcal{M}$ whose existence was announced in Theorem~\ref{thm:anyon} and prove that it has statistical properties stated therein. 

First of all, we note that while the TDI $\bar q^\Gamma$ that generates $\beta^\Gamma_\phi$ is anchored in $\Gamma$, the automorphism $\beta^\Gamma_{2\pi}$ has a trivial action far away from $\partial \Gamma$ because $q_x$ have integer spectrum and $k^\Gamma$ is anchored in $\partial\Gamma$. Concretely, $\beta^{\Gamma}_{2 \pi}$ can be obtained from a TDI that is anchored in $\partial\Gamma$:
\begin{lem}\label{inu}
\label{lem:twist}
Fix $\Gamma \subset \mathbb{Z}^2$. There exists a TDI, $\tilde{k}^\Gamma$, anchored in $\partial \Gamma$ such that 
$$
\beta^{\Gamma}_{2 \pi}  = \tau_{2\pi}^{\tilde{k}^\Gamma}
$$
\end{lem}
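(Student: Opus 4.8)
The plan is to exhibit $\tilde k^\Gamma$ explicitly by tracking how the ``phase'' $\phi$ enters the interaction $\bar q^\Gamma = q|_\Gamma - k^\Gamma$. Write $\bar q^\Gamma = q|_\Gamma - k^\Gamma$; the automorphism $\beta^\Gamma_\phi = \tau^{\bar q^\Gamma}_\phi$ is, by the conventions of Appendix~A, the time-$\phi$ evolution generated by the (time-independent) interaction $\bar q^\Gamma$. The first step is to use charge conservation, Assumption~\ref{assum:3} together with Lemma~\ref{lem:k}, to observe that $q|_\Gamma$ and $k^\Gamma$ ``do not talk to each other'' in the relevant sense: by Lemma~\ref{lem:k}(ii) we have $\delta^q(k^\Gamma_S) = 0$, so the derivations $\delta^{q|_\Gamma}$ and $\delta^{k^\Gamma}$ commute on local observables. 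Consequently $\beta^\Gamma_\phi$ factors as a product $\alpha^\Gamma_\phi \circ \tau^{k^\Gamma}_{-\phi}$ (in either order), where $\alpha^\Gamma$ is the $U(1)$ automorphism group generated by $q|_\Gamma$ introduced in Section~2.3.

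The second step is to evaluate this factorization at $\phi = 2\pi$. Since the on-site charges $q_{\{x\}}$ have integer spectrum, $\alpha^\Gamma_{2\pi} = \mathrm{id}$ — this is precisely the remark in Section~2.3 that $\alpha^\Gamma$ is a $U(1)$-action, applied now to the restricted interaction $q|_\Gamma$, whose on-site terms still have integer spectrum. Therefore
$$
\beta^\Gamma_{2\pi} = \tau^{k^\Gamma}_{-2\pi}.
$$
Now $k^\Gamma$ is anchored in $\partial\Gamma$ by Lemma~\ref{lem:k}(i), and $\partial\Gamma$ has bounded width $2r$ around the boundary of $\Gamma$. The final step is to reparametrize time: the TDI $\tilde k^\Gamma$ defined by $(\tilde k^\Gamma(t))_S := -(k^\Gamma)_S$ (a constant-in-time, i.e. autonomous, TDI, possibly with a trivial rescaling of the time interval so that evolving it over $[0,2\pi]$ reproduces evolving $k^\Gamma$ backwards over time $2\pi$) satisfies $\tau^{\tilde k^\Gamma}_{2\pi} = \tau^{k^\Gamma}_{-2\pi} = \beta^\Gamma_{2\pi}$, and it is anchored in $\partial\Gamma$ because $k^\Gamma$ is; the relevant $\|\cdot\|_f$-bound is inherited from the one for $k^\Gamma$, which belongs to the admissible class by construction in~(\ref{Def of k}) together with the anchoring Lemma~\ref{lem:anchoring}.

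The main obstacle, and the only genuinely nontrivial point, is the bookkeeping around the definition of time evolution of an interaction versus of a TDI: the paper has warned several times (e.g. around~(\ref{eq:interaction_evolution}) and in the proof of Lemma~\ref{lem:J_0_reg}) that naive manipulations of interactions under $\tau_t$ are not literally valid. So the factorization $\beta^\Gamma_\phi = \alpha^\Gamma_\phi \circ \tau^{k^\Gamma}_{-\phi}$ must be justified at the level of the \emph{generated automorphism groups} — by checking that both sides solve the same Heisenberg equation $\frac{d}{d\phi}X = \iu[\delta^{q|_\Gamma} - \delta^{k^\Gamma}](X)$ on a core of local observables, using that $\delta^{q|_\Gamma}$ and $\delta^{k^\Gamma}$ commute there — rather than by manipulating the interactions themselves. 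Once one works with automorphisms and derivations (and invokes the standard uniqueness for the associated Lieb–Robinson dynamics, as packaged in Appendix~A), the argument closes with no further analytic input.
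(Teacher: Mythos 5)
Your overall strategy (peel off the pure gauge part $\alpha^\Gamma_{2\pi}=\id$ and blame everything on $k^\Gamma$) is the same as the paper's, but the step you rely on to make it work is false. You claim that Lemma~\ref{lem:k}(ii), $\delta^q(k^\Gamma_S)=0$, implies that $\delta_{q|_\Gamma}$ and $\delta_{k^\Gamma}$ commute, and hence that $\beta^\Gamma_\phi=\alpha^\Gamma_\phi\circ\tau^{k^\Gamma}_{-\phi}$. By~(\ref{eq:commderivative}) the commutator of these two derivations is governed by the interaction whose terms are $[q|_\Gamma,k^\Gamma]_S=\sum_{x\in S\cap\Gamma}[q_{\{x\}},k^\Gamma_S]=[Q_{S\cap\Gamma},k^\Gamma_S]$. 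Lemma~\ref{lem:k}(ii) only says that $k^\Gamma_S$ commutes with the \emph{total} charge on its support, $[Q_S,k^\Gamma_S]=0$; it says nothing about the \emph{partial} charge $Q_{S\cap\Gamma}$. Since $k^\Gamma$ is anchored precisely on $\partial\Gamma$, its terms straddle the boundary, so $S\cap\Gamma\neq S$ for the terms that matter and $[Q_{S\cap\Gamma},k^\Gamma_S]\neq 0$ in general (already $[Q_\Gamma,[Q_\Gamma,h_S]]\neq 0$ generically). So the two derivations do not commute, the product formula fails, and with it your conclusion $\beta^\Gamma_{2\pi}=\tau^{k^\Gamma}_{-2\pi}$: the constant TDI $-k^\Gamma$ does not generate $\beta^\Gamma_{2\pi}$.

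The repair is the interaction picture rather than a commuting factorization, and it is what the paper does: differentiate $\beta^\Gamma_\phi\circ(\alpha^\Gamma_\phi)^{-1}$ to find that it is generated by the genuinely time-dependent interaction $\tilde k^\Gamma(\phi)=-\alpha^\Gamma_\phi(k^\Gamma)$ (the conjugation of $\delta_{-k^\Gamma}$ by $\alpha^\Gamma_\phi$), and then set $\phi=2\pi$ using $\alpha^\Gamma_{2\pi}=\id$. This TDI is still anchored in $\partial\Gamma$ because $\alpha$ acts on-site — which is exactly why the paper makes its stated exception to~(\ref{eq:interaction_evolution}) for $U(1)$ conjugation. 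Your instinct that the delicate point is the interaction-versus-automorphism bookkeeping was right, but the resolution is not to verify a commuting splitting on a core (it does not hold); it is to accept the $\phi$-dependence of the generator. Note also that the later use of this lemma (the identity $\rho^\Lambda=\exp(2\pi\delta_{\bar q^{\Gamma_\Lambda}|_\Lambda})$ in Section~\ref{sec:anyon}) depends on this specific form of $\tilde k^\Gamma$, so the distinction is not cosmetic.
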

\begin{proof}
Recall that $\alpha^\Gamma_\phi$ is the family of automorphisms associated with the charge $q\vert_\Gamma$. Since $\alpha^\Gamma_{2 \pi} = \id$, we have
$$
\beta^\Gamma_{2 \pi} = \beta^{\Gamma}_{2 \pi} \circ (\alpha^{\Gamma}_{2 \pi})^{-1} = \id + \int_0^{2 \pi} \partial_\phi\left(  \beta^\Gamma_\phi \circ(\alpha^\Gamma_\phi)^{-1}\right) d\phi.
$$
Computing the derivative, we have
$$
\partial_\phi\left(  \beta^\Gamma_\phi \circ(\alpha^\Gamma_\phi)^{-1}\right)  = \beta^\Gamma_\phi \circ \left(\delta_{\bar{q}^\Gamma} - \delta_{q^\Gamma}\right)\circ (\alpha^\Gamma_\phi)^{-1}.
$$
Using $\bar{q}^\Gamma - q^\Gamma = -k^\Gamma$ we get
$$
\partial_\phi\left(  \beta^\Gamma_\phi \circ(\alpha^\Gamma_\phi)^{-1}\right) = \left(\beta^\Gamma_\phi \circ (\alpha^\Gamma_\phi)^{-1}\right) \circ \alpha^\Gamma_\phi\circ \delta_{-k^\Gamma}\circ(\alpha^\Gamma_\phi)^{-1},
$$
and the lemma holds with TDI $\tilde{k}^\Gamma(\phi) = -\alpha^\Gamma_\phi(k^\Gamma)$. The TDI is anchored in $\partial \Gamma$ by Lemma~\ref{lem:k} and Lemma~\ref{lem:anchoring}.
\end{proof}

When an arbitrary TDI $h$ is acted upon by the $U(1)$ automorphism, and only in the case, we will make an exception to~(\ref{eq:interaction_evolution})and define
$$
(\alpha_\phi(h))_S := \alpha_\phi(h_S).
$$
This is more convenient and $\alpha$ manifestly respects anchoring because it acts on-site.

To a cone $\Lambda = \ld_{\bm a,\theta,\varphi}$ we associate the half space $\Gamma_\Lambda := \{x \in \mathbb{R}^2 : f_\theta \cdot x \geq 0\}$, where $f_\theta$ is the unit vector obtained by rotating $e_\theta$ clockwise by $90$ degrees, see Figure~\ref{fig:halfplane}. Then by Lemma~\ref{lem:twist}, $\beta^{\Gamma_\Lambda}_{2 \pi}$ is generated by TDI $\tilde{k}^{\Gamma_\Lambda}$. Let $\rho^\Lambda_s$ be the family of automorphisms generated by TDI $\tilde{k}^{\Gamma_\Lambda}|_\Lambda$. Finally we put $\rho^\Lambda := \rho^\Lambda_{2 \pi}$. 

\begin{figure}
\includegraphics[width=0.5\textwidth]{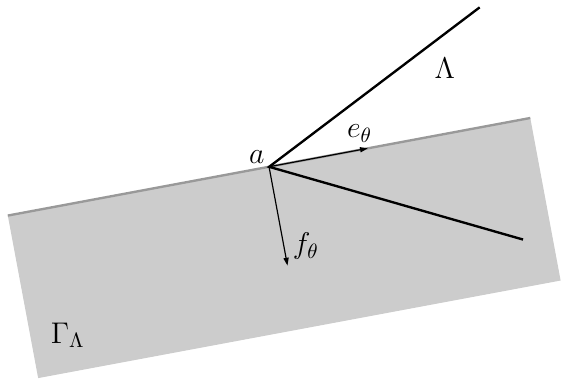}
\caption{The half plane $\Gamma_\Lambda$ associated with the cone $\Lambda$}
\label{fig:halfplane}
\end{figure}

Since $\tilde{k}^{\Gamma_\Lambda}$ is anchored on $\partial\Gamma_\Lambda$, the automorphism $\rho^\Lambda$ is generated by a TDI anchored on the axis of the cone $\Lambda$, which will be useful when we do perturbation theory. It is also possible to express $\rho^\Lambda$ using the interaction $\bar{q}$. Since restrictions commute with the on-site automorphism $\alpha_s^{\Gamma_\Lambda}$, we have that $\tilde{k}_s^{\Gamma_\Lambda}|_\Lambda = -\alpha_s^{\Lambda \cap \Gamma_\Lambda}(k^{\Gamma_\Lambda}|_\Lambda)$, and we see that
\begin{equation*}
\partial_s(\rho^\Lambda_s\circ\alpha^{\Lambda \cap \Gamma_\Lambda}_s) = (\rho^\Lambda_s\circ\alpha^{\Lambda \cap \Gamma_\Lambda}_s)\circ(-\delta_{k^{\Gamma_\Lambda}|_\Lambda} + \delta^{\Lambda \cap \Gamma_\Lambda}_{q}) = (\rho^\Lambda_s\circ\alpha^{\Lambda \cap \Gamma_\Lambda}_s)\circ \delta_{\bar q^{\Gamma_\Lambda}\vert_\Lambda}.
\end{equation*}
Since $\bar q$ is a constant interaction, this implies that $\rho_s^\Lambda = \exp\left(s\delta_{\bar q^{\Gamma_\Lambda}\vert_\Lambda}\right)\circ(\alpha_s^{\Lambda \cap \Gamma_\Lambda})^{-1}$, and in particular
\begin{equation}
\label{eq:rhobarq}
\rho^\Lambda = \exp\left(2\pi \delta_{\bar q^{\Gamma_\Lambda}\vert_\Lambda}\right).
\end{equation}
This expression will be more useful for algebraic manipulations.

The next lemma states that, for any $\Lambda$, the representation $\pi \circ \rho^{\Lambda}$ satisfies the super-selection criteria and that in addition, all these automorphisms belong to the same super-selection sector.

\begin{lem}
\label{lem:rho}
For all cones $\Lambda,\Lambda' \subset \Lambda_0$,
\begin{enumerate}
\item $\pi \circ \rho^{\Lambda} \in O_{\Lambda_0}$,
\item $ \pi \circ \rho^{\Lambda} \simeq \pi \circ \rho^{\Lambda'}$.
\end{enumerate}
Moreover there exists unitaries $V_{r,t} \in \mathcal{A}$ such that
$$
\beta_{2 \pi}^{AB} \circ \Ad [V_{r,t}] = \rho^{\Lambda_2(r)} \circ (\rho^{\Lambda_3(t)})^{-1}.
$$ 
\end{lem}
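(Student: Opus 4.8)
The plan is to build everything on two facts: that $\rho^\Lambda$ is generated by a TDI which is \emph{strictly supported in $\Lambda$} and anchored on the axis of $\Lambda$, and that threading a full $2\pi$ flux through a half-plane $\Gamma$ leaves $\omega$ invariant by~(\ref{qbar invariance}), so that $\pi\circ\beta^\Gamma_{2\pi}$ is a GNS representation of $\omega$ and hence $\pi\circ\beta^\Gamma_{2\pi}\simeq\pi$. For (i), I would write $\rho^\Lambda=\tau^{g}_{2\pi}$ with $g=\tilde k^{\Gamma_\Lambda}|_\Lambda$ (Lemma~\ref{lem:twist} and the definition of $\rho^\Lambda$); since restricting an interaction to $\Lambda$ leaves it supported on subsets of $\Lambda$ (Appendix~\ref{app:restrictions}), every term $g_S$ has $S\subset\Lambda$ and so commutes with $\caA_{\Lambda^c}$, whence the flow fixes $\caA_{\Lambda^c}$ pointwise, $\rho^\Lambda|_{\caA_{\Lambda^c}}=\id$, and a fortiori $\rho^\Lambda|_{\caA_{\Lambda_0^c}}=\id$ because $\Lambda\subset\Lambda_0$ --- which is exactly $\unit\in\caV_{\pi\circ\rho^\Lambda,\Lambda_0}$. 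That $\pi\circ\rho^\Lambda$ in addition satisfies the superselection criterion, hence lies in $O$ and then in $O_{\Lambda_0}$, is the standard fact that under approximate Haag duality an automorphism localized in a cone yields a representation in $O$; this is how the objects of $\caM$ are produced in~\cite{N2015,MTC}, and I would cite it.

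For the ``moreover'' identity I would first record the geometry: $f_\pi=(0,1)$ and $f_0=(0,-1)$, so $\Gamma_{\Lambda_2(r)}=\Gamma_{\Lambda_2}=AB$ and $\Gamma_{\Lambda_3(t)}=\Gamma_{\Lambda_3}=CD$ for all $r,t$ (the associated half-plane depends only on the cone axis), and $k^{CD}=-k^{AB}$ by Lemma~\ref{lem:k}(iii). From~(\ref{eq:rhobarq}) one has $\rho^{\Lambda_2(r)}=\exp\!\big(2\pi\,\delta_{\bar q^{AB}|_{\Lambda_2(r)}}\big)$, and using $\bar q^{AB}+\bar q^{CD}=q$, the commutation $[\delta_{q|_{\Lambda_3(t)}},\delta_{\bar q^{AB}|_{\Lambda_3(t)}}]=0$ (charge conservation, Lemma~\ref{lem:k}(ii)) and $\alpha^{\Lambda_3(t)}_{2\pi}=\id$, I would derive the key identity
$$
(\rho^{\Lambda_3(t)})^{-1}=\exp\!\big(2\pi\,\delta_{\bar q^{AB}|_{\Lambda_3(t)}}\big).
$$
By the derivation leading to~(\ref{eq:rhobarq}), $\rho^{\Lambda_2(r)}$ and $(\rho^{\Lambda_3(t)})^{-1}$ are then the time-$2\pi$ flows of the boundary TDI $\phi\mapsto-\alpha^{AB}_\phi(k^{AB})$ restricted to $\Lambda_2(r)$, respectively $\Lambda_3(t)$; these two restrictions, together with the part of $\tilde k^{AB}$ anchored on a bounded middle segment of the $x$-axis, reassemble $\tilde k^{AB}$ along the whole $x$-axis up to tails sticking out of the thin cones $\Lambda_2(r),\Lambda_3(t)$, which are summable because such sets have diameter comparable to their distance from the apex. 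The bounded middle piece generates an inner automorphism, and since the three pieces sit on well-separated sets their flows factor up to an inner correction controlled by the locality estimates of Appendix~\ref{app:interactions}; collecting all inner corrections into a single $V_{r,t}\in\caA$ and using $\beta^{AB}_{2\pi}=\tau^{\tilde k^{AB}}_{2\pi}$ (Lemma~\ref{lem:twist}) gives $\beta^{AB}_{2\pi}\circ\Ad[V_{r,t}]=\rho^{\Lambda_2(r)}\circ(\rho^{\Lambda_3(t)})^{-1}$.

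For (ii) I would use that $\pi\circ\rho^\Lambda\simeq\pi\circ\rho^{\Lambda'}$ is equivalent to $\pi\circ\big(\rho^\Lambda\circ(\rho^{\Lambda'})^{-1}\big)\simeq\pi$, so by transitivity it suffices to bring the composite $\rho^\Lambda\circ(\rho^{\Lambda'})^{-1}$ --- inserting auxiliary cones when needed --- into the form $\beta^\Gamma_{2\pi}\circ\Ad[W]$ with $W\in\caA$ and $\Gamma$ a half-plane, since then $\pi\circ\rho^\Lambda\circ(\rho^{\Lambda'})^{-1}=\pi\circ\beta^\Gamma_{2\pi}\circ\Ad[W]\simeq\pi\circ\beta^\Gamma_{2\pi}\simeq\pi$. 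I would assemble this from two moves: (a) if $\Lambda$ and $\Lambda'$ have the same axis direction, interpolating apex and opening along a path $\Lambda^{(u)}$ makes $\partial_u\big(\tilde k^{\Gamma_\Lambda}|_{\Lambda^{(u)}}\big)$ a summable interaction (the moving boundary of $\Lambda^{(u)}$ stays far from $\partial\Gamma_\Lambda$), so $\rho^\Lambda\circ(\rho^{\Lambda'})^{-1}$ is inner in $\caA$; (b) decomposing $\tilde k^{\Gamma_\Lambda}$ into its $\Lambda$-part and the rest exactly as in the ``moreover'' argument writes $\rho^\Lambda=\beta^{\Gamma_\Lambda}_{2\pi}\circ(\rho^{\Lambda^\ast})^{-1}\circ\Ad[W']$ with $W'\in\caA$ and $\Lambda^\ast$ an auxiliary cone pointing opposite to $\Lambda$ that captures the complementary axis ray, which together with (a) transports the sector of $\rho^\Lambda$ through $\pi\circ\beta^{\Gamma_\Lambda}_{2\pi}\simeq\pi$. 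Chaining these moves, with auxiliary cones outside $\Lambda_0$ as intermediaries, reduces $\pi\circ\rho^\Lambda\simeq\pi\circ\rho^{\Lambda'}$ for an arbitrary pair to the trivial relation; arranging this connection cleanly is where the combinatorial care sits.

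The step I expect to be the main obstacle is the bookkeeping of the non-finite-range, non-absolutely-convergent sums: every instance of ``factor the flow along separated pieces'', ``compare two TDIs agreeing up to anchoring tails'', and ``the bounded remainder is inner'' is meaningful only through the quantitative locality and anchoring estimates of Appendix~\ref{app:interactions}, and in each case one must check that the correction unitary genuinely lies in $\caA$ (not merely in a weak closure) and that the summability gained from the thin openings of the cones absorbs the stuck-out tails. Conceptually, the one feature that cannot be bypassed is that (ii) is a statement about superselection sectors --- the composite $\rho^\Lambda\circ(\rho^{\Lambda'})^{-1}$ is in general \emph{not} inner --- so it must be routed through the invariance~(\ref{qbar invariance}) of $\omega$ under full-half-plane flux insertion.
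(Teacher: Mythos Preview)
Your approach is essentially the paper's, with one genuine gap and some unnecessary detours.

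The gap is in (i). There is no ``standard fact'' that a cone-localized automorphism automatically yields a representation in $O$ under approximate Haag duality: being localized in $\Lambda$ gives $\pi\circ\rho^\Lambda|_{\caA_{\tilde\Lambda^c}}=\pi|_{\caA_{\tilde\Lambda^c}}$ only for cones $\tilde\Lambda\supset\Lambda$, whereas the superselection criterion demands unitary equivalence for \emph{every} cone $\tilde\Lambda$. This transportability is precisely what the proof of (ii) supplies once it is run for arbitrary cones (not only those in $\Lambda_0$, ``possibly ignoring the forbidden direction'' as the paper puts it): given any $\tilde\Lambda$, pick $\Lambda'\subset\tilde\Lambda$ and use $\pi\circ\rho^\Lambda\simeq\pi\circ\rho^{\Lambda'}$ together with $\rho^{\Lambda'}|_{\caA_{\tilde\Lambda^c}}=\id$. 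So (i) is a consequence of the argument for (ii), not an independent citation.

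For the ``moreover'' you have the right identity $(\rho^{\Lambda_3(t)})^{-1}=\exp\bigl(2\pi\,\delta_{\bar q^{AB}|_{\Lambda_3(t)}}\bigr)$, but the reassembly into ``three pieces on well-separated sets'' is both more work and more fragile than needed (the middle segment is not well-separated from the cones near the apexes). The paper's route is shorter: since $\Lambda_2(r)$ and $\Lambda_3(t)$ are disjoint, the composite is directly $\exp\bigl(2\pi\,\delta_{\bar q^{AB}|_{\Lambda_2(r)\Lambda_3(t)}}\bigr)$, equivalently the time-$2\pi$ flow of $\tilde k^{AB}|_{\Lambda_2(r)\Lambda_3(t)}$; then Lemma~\ref{lem:restriction} with $X=\partial(AB)$ and $\Gamma=\Lambda_2(r)\Lambda_3(t)$ shows $\tilde k^{AB}-\tilde k^{AB}|_\Gamma$ is summable (the linear opening of the cones around the $x$-axis is exactly the hypothesis of that lemma), and Lemma~\ref{lem:pert} produces $V_{r,t}\in\caA$ in one stroke.

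For (ii) your target $\rho^\Lambda\circ(\rho^{\Lambda'})^{-1}=\beta^\Gamma_{2\pi}\circ\Ad[W]$ and the appeal to~(\ref{qbar invariance}) to implement $\beta^\Gamma_{2\pi}$ unitarily in the GNS representation are exactly what the paper does. The paper obtains this directly for any pair of \emph{disjoint} cones by the same argument as the ``moreover'' (choosing $\Gamma$ so that $\partial\Gamma$ runs along the two cone axes, cf.\ Figure~\ref{fig:disjoint cones}), and then handles overlapping cones via a single intermediate cone $\Lambda''$ disjoint from both. Your moves (a) and (b) can be made to work but are redundant once the disjoint case is in hand.
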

\begin{proof}
We start with the last part of the lemma. Recalling the definiton of the cones, Figure~\ref{Fig:Cones}, we see that $\Gamma_{\Lambda_3(t)} = CD$ for all $t$ and so $\rho^{\Lambda_3(t)} = \exp\left(2\pi \delta_{\bar{q}^{CD}|_{\Lambda_3(t)}}\right)$ by~(\ref{eq:rhobarq}). Using Lemma~\ref{lem:k}(iii), we have 
$$
\bar{q}^{CD}|_{\Lambda_3(t)} = q|_{\Lambda_3(t)} - \bar{q}^{AB}|_{\Lambda_3(t)},
$$
and we conclude that
$$
(\rho^{\Lambda_3(t)})^{-1} = \exp\left(2\pi \delta_{\bar{q}^{AB}|_{\Lambda_3(t)}}\right).
$$
Moreover, $\rho^{\Lambda_2(r)} = \exp\left(2\pi \delta_{\bar{q}^{AB}|_{\Lambda_2(r)}}\right)$ and since the cones $\Lambda_2(r),\Lambda_3(t)$ are disjoint, we get
$$
\rho^{\Lambda_2(r)} \circ (\rho^{\Lambda_3(t)})^{-1} = \exp\left(2\pi \delta_{\bar{q}^{AB}|_{\Lambda_2(r)\Lambda_3(t)}}\right).
$$
Equivalently, this is an automorphism generated by the TDI $\tilde{k}^{AB}|_{\Lambda_2(r) \Lambda_3(t)}$. On the other hand,
$\beta_{2 \pi}^{AB}$ is  generated by TDI $\tilde{k}^{AB}$. The claim then follows from Lemma~\ref{lem:restriction}, used for $X = \partial (AB)$ and $\Gamma = \Lambda_2(r) \Lambda_3(t)$, and Lemma~\ref{lem:pert}.

We now turn to the claim (ii). By the same reasoning that we used for the specific cones above, we get that for any non-overlapping cones $\Lambda, \Lambda'$ there exists a region $\Gamma$ and a unitary $V_{\Lambda,\Lambda'}$ such that
$$
\beta^{\Gamma}_{2 \pi} \circ \mathrm{Ad}[V_{\Lambda,\Lambda'}] = \rho^{\Lambda} \circ(\rho^{\Lambda'})^{-1},
$$
see Figure~\ref{fig:disjoint cones}. 
\begin{figure}
\includegraphics[width=0.5\textwidth]{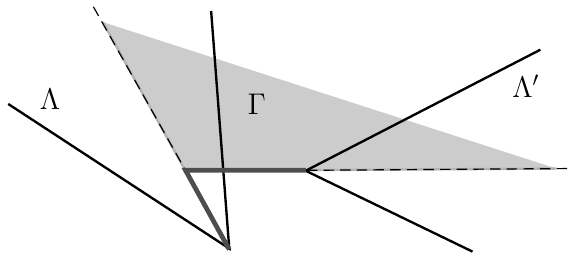}
\caption{The region $\Gamma$ corresponding to the disjoint cones $\Lambda,\Lambda'$. The unitary $V_{\Lambda,\Lambda'}$ is almost localized along the thicker grey line.}
\label{fig:disjoint cones}
\end{figure}
The invariance~(\ref{qbar invariance}) implies that $\beta^{\Gamma}_{\phi}$ is unitarily implementable in the GNS representation, namely there are unitaries $v_\phi^\Gamma$ such that $v_\phi^\Gamma \Omega = \Omega$ and
\begin{equation}\label{beta implementable}
\pi\circ \beta^{\Gamma}_{\phi} = \Ad\left[v_\phi^\Gamma\right]\circ\pi.
\end{equation}
It follows that 
\begin{equation*}
\pi\circ\rho^{\Lambda} \circ(\rho^{\Lambda'})^{-1}
= \Ad\left[v_\phi^\Gamma\right]\circ\pi\circ\mathrm{Ad}[V_{\Lambda,\Lambda'}]
= \Ad\left[v_\phi^\Gamma\pi(V_{\Lambda,\Lambda'})\right]\circ\pi
\end{equation*}
which is (ii). If $\Lambda, \Lambda'$ overlap, we find a cone $\Lambda''$, possibly ignoring the forbidden direction, that does not overlap with either. Then by the above, $\pi \circ \rho^\Lambda \simeq \pi \circ \rho^{\Lambda''} \simeq \pi \circ \rho^{\Lambda'}$, concluding the proof.

Finally, (i) holds by construction.
\end{proof}
We note that the proof provides an explicit intertwiner $V_{\rho^{\Lambda_3(t)},\Lambda_2(r)}$:
\begin{lem}
\label{lem:V}
The unitary $V_{\rho^{\Lambda_3(t)},\Lambda_2(r)} := v_{2\pi}^{AB}\pi(V_{r,t})\in\caU(\caH)$ is such that
\begin{equation*}
\left(\pi\circ\rho^{\Lambda_2(r)} (A)\right) V_{\rho^{\Lambda_3(t)},\Lambda_2(r)} = V_{\rho^{\Lambda_3(t)},\Lambda_2(r)} \left(\pi\circ\rho^{\Lambda_3(t)}(A)\right),\quad A\in\caA.
\end{equation*}
\end{lem}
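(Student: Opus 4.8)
The plan is to simply unwind the proof of Lemma~\ref{lem:rho}, which already produced both unitaries $V_{r,t}$ and $v_{2\pi}^{AB}$; no new analytic input is needed, the statement is a bookkeeping consequence of identities established there. First I would recall the two relevant facts. The last displayed identity in Lemma~\ref{lem:rho} reads $\beta_{2\pi}^{AB}\circ\Ad[V_{r,t}] = \rho^{\Lambda_2(r)}\circ(\rho^{\Lambda_3(t)})^{-1}$, which I rewrite as $\rho^{\Lambda_2(r)} = \beta_{2\pi}^{AB}\circ\Ad[V_{r,t}]\circ\rho^{\Lambda_3(t)}$. Second, the invariance~(\ref{qbar invariance}) furnishes the GNS implementer $v_{2\pi}^{AB}\in\caU(\caH)$ with $v_{2\pi}^{AB}\Omega=\Omega$ and $\pi\circ\beta_{2\pi}^{AB} = \Ad[v_{2\pi}^{AB}]\circ\pi$, that is, equation~(\ref{beta implementable}) for $\Gamma = AB$, $\phi = 2\pi$.

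Next I would compose the first identity with $\pi$ on the left and substitute: $\pi\circ\rho^{\Lambda_2(r)} = \pi\circ\beta_{2\pi}^{AB}\circ\Ad[V_{r,t}]\circ\rho^{\Lambda_3(t)} = \Ad[v_{2\pi}^{AB}]\circ\pi\circ\Ad[V_{r,t}]\circ\rho^{\Lambda_3(t)}$. Since $V_{r,t}\in\caA$ we have $\pi\circ\Ad[V_{r,t}] = \Ad[\pi(V_{r,t})]\circ\pi$, and the two adjoint actions combine as $\Ad[v_{2\pi}^{AB}]\circ\Ad[\pi(V_{r,t})] = \Ad[v_{2\pi}^{AB}\,\pi(V_{r,t})] = \Ad[V_{\rho^{\Lambda_3(t)},\Lambda_2(r)}]$ by the very definition of $V_{\rho^{\Lambda_3(t)},\Lambda_2(r)}$ in the statement. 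Hence $\pi\circ\rho^{\Lambda_2(r)} = \Ad[V_{\rho^{\Lambda_3(t)},\Lambda_2(r)}]\circ\pi\circ\rho^{\Lambda_3(t)}$. Evaluating both sides on $A\in\caA$ and multiplying on the right by the unitary $V_{\rho^{\Lambda_3(t)},\Lambda_2(r)}$ (it is unitary as a product of two unitaries) gives precisely $\bigl(\pi\circ\rho^{\Lambda_2(r)}(A)\bigr)V_{\rho^{\Lambda_3(t)},\Lambda_2(r)} = V_{\rho^{\Lambda_3(t)},\Lambda_2(r)}\bigl(\pi\circ\rho^{\Lambda_3(t)}(A)\bigr)$.

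Since the argument is a two-line computation, there is no genuine obstacle; the only points requiring a moment of care are that $V_{r,t}$ is exactly the unitary of $\caA$ constructed in the proof of Lemma~\ref{lem:rho} (so that $\pi(V_{r,t})$ is meaningful) and that $v^{AB}_{2\pi}$ is the implementer of $\beta^{AB}_{2\pi}$ coming from~(\ref{beta implementable}). I would also add a sentence noting that $V_{\rho^{\Lambda_3(t)},\Lambda_2(r)}$ is written in the style of the $\caV_{\sigma,\Lambda}$-notation of Section~\ref{sec:MTC} because $\pi\circ\rho^{\Lambda_2(r)}$ and $\pi\circ\rho^{\Lambda_3(t)}$ agree, up to tails, with $\pi$ on $\caA_{\Lambda_2(r)^c}$ and $\caA_{\Lambda_3(t)^c}$ respectively; this is not needed for the statement but explains the chosen notation for its later use.
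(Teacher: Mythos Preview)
Your proposal is correct and is precisely the argument the paper has in mind: Lemma~\ref{lem:V} is stated immediately after Lemma~\ref{lem:rho} with the remark ``the proof provides an explicit intertwiner'', and the computation you spell out is exactly the one appearing in the proof of part~(ii) of Lemma~\ref{lem:rho}, specialized to $\Lambda=\Lambda_2(r)$, $\Lambda'=\Lambda_3(t)$, $\Gamma=AB$.
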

In fact, more can be said. Indeed, the proof of Lemma~\ref{lem:pert} gives
\begin{equation}
\label{eq:Vrt}
V_{r,t} = \mathrm{T}\ep{i \int_0^{2\pi} G_s ds}, \quad G_s = \tau_s^{\tilde k^{AB}|_{\Lambda_2(r)\Lambda_3(t)}} \bigg(\sum_{S: S \cap (\Lambda_2(r) \Lambda_3(t))^c \neq \emptyset} \tilde{k}^{AB}_S\bigg).
\end{equation}
Since $\delta_{\tilde k^{AB}|_{\Lambda_2(r)\Lambda_3(t)}}$ acts trivially on the terms that are completely supported on the complement of $\Lambda_2(r) \Lambda_3(t)$, we have that
$$
G_s = \sum_{S: S \subset (\Lambda_2(r) \Lambda_3(t))^c} \tilde{k}^{AB}_S + \tilde{G}_s
$$
where $\tilde{G}_s$ is an observable that is almost localized at the apexes, $a_2(r), a_3(t)$, of $\Lambda_2(r)$ and $\Lambda_3(t)$. Specifically, there exists $f \in \mathcal{F}$ and a constant $C$, both independent of $r,t$, such that $\| \tilde{G}_s \|_f \leq C$ and $\|\tilde{G}_s(S)\| \leq  f(\mathrm{dist}(S, \{a_2(r), a_3(t)\})$.

\begin{lem}\label{vab}
Let $c_N=\ep{2\pi{i}\omega(Q_{AB_{N}})}$. Then
\begin{equation*}
v_{2\pi}^{AB}
=\slim_{N\to\infty} \bar{c}_N \pi\lmk \ep{2\pi i \bar{Q}_{AB_{N}}}\rmk
\end{equation*}
\end{lem}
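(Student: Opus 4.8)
The plan is to realise $v_{2\pi}^{AB}$ as a strong limit of the \emph{inner} implementers of the finite‑volume flux insertions, and to fix the overall phases using the eigenvalue relation \eqref{Qbar eigenvector}. Concretely, for finite $\Gamma$ the interaction $\bar q^\Gamma=q|_\Gamma-k^\Gamma$ has a genuine bounded self‑adjoint Hamiltonian $\bar Q_\Gamma=Q_\Gamma-K_\Gamma\in\caA$ (this is where I would invoke Lemma~\ref{lem:k}(i): $k^\Gamma$ is anchored in the \emph{finite} set $\partial\Gamma$, so $K_\Gamma=\sum_S k^\Gamma_S$ is norm‑convergent; see the discussion around \eqref{qbar finite}), hence $\beta^\Gamma_{2\pi}=\Ad\lmk\ep{2\pi\iu\bar Q_\Gamma}\rmk$. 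Applying this with $\Gamma=AB_N$ and composing with $\pi$ gives $\pi\circ\beta^{AB_N}_{2\pi}=\Ad[u_N]\circ\pi$ with $u_N:=\pi\lmk\ep{2\pi\iu\bar Q_{AB_N}}\rmk$, and I will show that $w_N:=\bar c_N u_N\to v_{2\pi}^{AB}$ strongly.

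First I would compute the action on the cyclic vector. Since $\pi$ is a continuous unital $*$‑homomorphism, $u_N=\ep{2\pi\iu\,\pi(\bar Q_{AB_N})}$; feeding $\pi(\bar Q_{AB_N})\Omega=\omega(Q_{AB_N})\Omega$ from \eqref{Qbar eigenvector} into the exponential series gives $u_N\Omega=\ep{2\pi\iu\,\omega(Q_{AB_N})}\Omega=c_N\Omega$, so $w_N\Omega=\Omega$, while $v_{2\pi}^{AB}\Omega=\Omega$ by the choice in \eqref{beta implementable}. Next I would establish the convergence $\norm{\beta^{AB_N}_{2\pi}(A)-\beta^{AB}_{2\pi}(A)}\to 0$ for every $A\in\caA_{\loc}$; composing with $\pi$ and using \eqref{beta implementable} then yields $\Ad[w_N]\pi(A)=\Ad[u_N]\pi(A)=\pi\lmk\beta^{AB_N}_{2\pi}(A)\rmk\to\pi\lmk\beta^{AB}_{2\pi}(A)\rmk=\Ad[v_{2\pi}^{AB}]\pi(A)$ in norm. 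Finally, setting $x_N:=(v_{2\pi}^{AB})^*w_N\in\caU(\caH)$, one has $x_N\Omega=\Omega$ and $\norm{x_N\pi(A)x_N^*-\pi(A)}\to 0$ for local $A$, hence $x_N\pi(A)\Omega=x_N\pi(A)x_N^*\Omega\to\pi(A)\Omega$; since $\pi(\caA_{\loc})\Omega$ is dense in $\caH$ and the $x_N$ are unitary, $x_N\to\unit$ strongly, i.e. $w_N\to v_{2\pi}^{AB}$ strongly, which is the assertion.

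The main obstacle is the finite‑to‑infinite volume convergence $\beta^{AB_N}_{2\pi}\to\beta^{AB}_{2\pi}$ on local observables. I would handle it by comparing generating interactions: charge conservation (Assumption~\ref{assum:3}) forces $\bigl(\delta^q_\Gamma(h)\bigr)_S=0$ whenever $S\subset\Gamma$, so $k^{AB_N}$ and $k^{AB}$ are each anchored on the respective boundaries and, near any fixed compact region, agree up to super‑polynomially small tails once $N$ is large (the extra components of $\partial(AB_N)$ recede to infinity), and likewise $q|_{AB_N}$ agrees with $q|_{AB}$ there. Thus $\bar q^{AB_N}$ and $\bar q^{AB}$ are super‑polynomially close on any bounded region, and convergence of the associated flows on $\caA_{\loc}$ follows from the standard Duhamel/Lieb--Robinson estimates recorded in Appendix~\ref{app:interactions}; I expect this comparison of the two anchored interactions, rather than the algebra in the previous paragraph, to be the one step needing genuine care.
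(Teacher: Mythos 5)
Your proof is correct and follows essentially the same route as the paper: fix the phases via the eigenvalue relation \eqref{Qbar eigenvector}, compare the two unitaries on the dense set $\pi(\caA)\Omega$, and invoke the norm convergence $\beta^{AB_N}_{2\pi}=\Ad\lmk\ep{2\pi\iu\bar Q_{AB_N}}\rmk\to\beta^{AB}_{2\pi}$ on $\caA$ together with cyclicity of $\Omega$. The only difference is that you sketch a justification of that finite-to-infinite-volume convergence (via anchoring of $k^\Gamma$ and comparison of the generating interactions), whereas the paper simply asserts it.
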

\begin{proof}
First we note that~(\ref{Qbar eigenvector}) implies that $\pi(\ep{2 \pi i \bar{Q}_{AB_N}}) \Omega = c_N\Omega$. Then for any $A\in\caA$, 
\begin{equation*}
\bar c_N\pi\lmk\ep{2\pi{i} \bar{Q}_{AB_{N}}}\rmk\pi(A)\Omega = \pi\left(\Ad\left[\ep{2\pi{i} \bar{Q}_{AB_{N}}}\right](A)\right) \Omega
\end{equation*}
while
\begin{equation*}
v_{2\pi}^{AB}\pi(A)\Omega = \pi\left(\beta_{2\pi}^{AB} (A)\right)\Omega
\end{equation*}
by~(\ref{beta implementable}). Hence
\begin{multline*}
\bar c_N\pi\lmk\ep{2\pi{i} \bar{Q}_{AB_{N}}}\rmk\pi(A)\Omega - v_{2\pi}^{AB}\pi(A)\Omega 
=\pi\left(\Ad\left[\ep{2\pi{i} \bar{Q}_{AB_{N}}}\right](A) - \beta_{2\pi}^{AB}(A)\right)\Omega
\end{multline*}
and the claim follows from the strong limit
\begin{equation*}
\beta_{2\pi}^{AB}= \slim_{N\to\infty}\Ad\left[\ep{2\pi{i} \bar{Q}_{AB_{N}}}\right]
\end{equation*}
and the cyclicity of $\Omega$ with respect to $\pi(\caA)$.
\end{proof}
\begin{rem}\label{neko}
Note from (\ref{beta implementable}) that $v_{2\pi}^{AB}$ can be arbitrarily approximated by elements
in $\cup_{\Lambda\in\caC}\pi(\caA_{\Lambda^c})'$.
In particular, by the approximate Haag duality, $v_{2\pi}^{AB}$ belongs to $\caB$.
More precisely, for each $\varepsilon>0$, we may choose a cone $\Lambda_\varepsilon\in\caC$ and 
a unitary $U_\varepsilon\in \caA$
such that $\pi(U_\varepsilon) v_{2\pi}^{AB}\in \pi\lmk\caA_{\Lambda_\varepsilon^c}\rmk'$
and $\lV U_\varepsilon-\unit\rV<\varepsilon$.
Note also that we may further choose a sequence of unitaries $U_\varepsilon^N\in \caA$
such that $\pi\lmk U_{\varepsilon}^N\ep{2\pi{i} \bar{Q}_{AB_{N}}}\rmk \in \pi\lmk\caA_{\Lambda_\varepsilon^c}\rmk'$
and $\lV U_{\varepsilon}^N- U_{\varepsilon}\rV\to 0$, $N\to\infty$.
Hence we obtain
\begin{align}
\pi(U_\varepsilon)v_{2\pi}^{AB}=\slim_{N\to\infty} \bar{c}_N \pi( U_{\varepsilon}^N)\pi\lmk \ep{2\pi i \bar{Q}_{AB_{N}}}\rmk
\end{align}
in $\pi\lmk\caA_{\Lambda_\varepsilon^c}\rmk'$.
\end{rem}

We are now ready to prove our first main result, namely that the braiding statistics $\theta(\rho,\rho)$ is nothing but the exponential of the Hall conductance, which in turn was defined via~(\ref{eq:Hall},\ref{def:J}) as the expectation value of the adiabatic curvature.

\begin{proof}[Proof of Theorem~\ref{thm:anyon}]
We claim that $\rho := \pi \circ \rho^{\Lambda_1}$ has the stated properties. By Lemma~\ref{lem:rho}, $\rho \in \mathcal{M}$, and it is simple because $\pi$ is irreducible. So it remains to prove the braiding relation $\theta(\rho, \rho) = \ep{ 2 \pi {i} \omega(J_0)}$.

Using item (iv) in Lemma~\ref{lem:theta} and Lemma~\ref{lem:rho}, we have for any $r \geq 0$, $\theta(\rho, \rho) = \theta(\pi \circ \rho^{\Lambda_1(r)}, \rho)$. In particular,
$$
\theta(\rho, \rho) = \lim_{r \to \infty} \theta\left(\pi \circ \rho^{\Lambda_1(r)}, \rho\right).
$$
 We now pick $V_{\rho, \Lambda_3(t)}$ such that 
$\Ad  V_{\rho, \Lambda_3(t)} \circ \rho = \pi \circ \rho^{\Lambda_3(t)}$ to get 
$$
\theta\left(\pi \circ \rho^{\Lambda_1(r)}, \rho\right) = \lim_{t \to \infty} \epsilon(\pi \circ \rho^{\Lambda_1(r)}, \pi \circ \rho^{\Lambda_3(t)}).
$$
By Lemma~\ref{lem:V},
$$
\epsilon(\pi \circ \rho^{\Lambda_1(r)}, \pi \circ\rho^{\Lambda_3(t)}) = \lim_{s \to \infty}  \pi(V_{s,t}^*) (v_{2 \pi}^{AB})^* T_{\rho^{\Lambda_1(r)}} ( v_{2 \pi}^{AB} \pi(V_{s,t})).
$$
Now 
\begin{align*}
T_{\rho^{\Lambda_1(r)}} ( v_{2 \pi}^{AB} \pi(V_{s,t})) 
=T_{\rho^{\Lambda_1(r)}} ( \pi\beta_{2\pi}^{AB}(V_{s,t}) v_{2 \pi}^{AB} ) 
=\pi(\rho^{\Lambda_1(r)} \beta_{2\pi}^{AB}(V_{s,t}))T_{\rho^{\Lambda_1(r)}} (v_{2 \pi}^{AB} ) 
\end{align*}
and the explicit expression~(\ref{eq:Vrt}) implies that $\lim_{r \to \infty} \rho^{\Lambda_1(r)}\beta_{2\pi}^{AB}(V_{s,t}) 
=\beta_{2\pi}^{AB} \lmk V_{s,t}\rmk$ uniformly in $s,t$. Hence,
$$
\theta(\rho, \rho) = \lim_{r \to \infty} (v_{2 \pi}^{AB})^*T_{\rho^{\Lambda_1(r)}}(v_{2 \pi}^{AB}) = \lim_{r \to \infty} \left\langle\Omega,(v_{2 \pi}^{AB})^* T_{\rho^{\Lambda_1(r)}}(v_{2 \pi}^{AB}) \Omega\right\rangle,
$$
where we used that $\theta(\rho, \rho)$ is a scalar in the last equality. With this, the weak continuity of $T_{\rho^{\Lambda_1(r)}}$ on each $\pi\lmk\caA_{\Lambda^c}\rmk'$, $\Lambda\in\caC$, Remark \ref{neko} and Lemma~\ref{vab}, we get 
$$
\theta(\rho, \rho) = \lim_{r \to \infty} \lim_{N \to \infty} \omega( \ep{-2\pi i \bar{Q}_{(AB)_N}} \rho^{\Lambda_1(r)} ( \ep{2\pi i \bar{Q}_{(AB)_N}})).
$$
Now we consider the automorphism $\theta_r := (\beta^{AD}_{2 \pi})^{-1} \circ \rho^{\Lambda_1(r)}$ of $\caA$, which is such that
\begin{equation}\label{beta rho away from cone}
\lim_{r\to\infty}\Vert (\beta^{\Gamma_{\Lambda_1}}_{2 \pi})^{-1}\circ\rho^{\Lambda_1(r)}(A) - (\beta^{\Gamma_{\Lambda_1}}_{2 \pi})^{-1}(A)\Vert = 0
\end{equation} 
holds for all all $A\in\caA$. Using~(\ref{eq:barQ}), we have that 
\begin{equation}
\label{eq:theta using theta t}
\theta(\rho, \rho) = \lim_{r \to \infty} \lim_{N \to \infty} \omega( \ep{-2\pi i \bar{Q}_{(AB)_N}} \theta_r ( \ep{2\pi i \bar{Q}_{(AB)_N}})).
\end{equation} 

Let now $v_{r,N}(\phi) = \ep{- {i}\phi \bar{Q}_{(AB)_N}} \theta_r ( \ep{ {i} \phi\bar{Q}_{(AB)_N}})$ and $v(\phi) = \lim_{r \to \infty} \lim_{N \to\infty}v_{r,N}(\phi)$ uniformly in $\phi$, which are such that $v_{r,N}(0) = 1$ and therefore $v(0) = 1$. Note that $\gamma^{0,r} = \theta_r$ is a possible choice in the decomposition~(\ref{beta decomposition}), so by Lemma~\ref{lem:J_0_reg} we have that 
\begin{equation}
\label{eq:theta_to_Hall}
\lim_{r \to \infty} \lim_{N \to\infty} (\theta_r(\bar{Q}_{(AB)_N}) - \bar{Q}_{(AB)_N}) = J_0,
\end{equation} 
and since
$$
\partial_\phi(\ep{-{i}\phi \bar{Q}_{(AB)_N}} \theta_r ( \ep{{i} \phi \bar{Q}_{(AB)_N}})) = {i} \ep{- {i} \phi \bar{Q}_{(AB)_N}} (\theta_r(\bar{Q}_{(AB)_N}) - \bar{Q}_{(AB)_N}) \theta_r ( \ep{{i} \phi \bar{Q}_{(AB)_N}}).
$$
we conclude from~(\ref{eq:theta_to_Hall}
) that
$$
\partial_\phi v_\phi = {i} \beta_{-\phi}^{AB}\lmk J_0\rmk v_\phi.
$$
the proof of 
Lemma~\ref{lem:Invariance} now implies that $\partial_\phi \omega(v_\phi) = {i} \omega(J_0) \omega(v_\phi)$ and therefore
\begin{equation*}
\theta(\rho, \rho) = \omega(v_{2 \pi}) = e^{2 \pi {i} \omega(J_0)},
\end{equation*}
which is the equation that we aimed to prove.
\end{proof}

\section{Quantization of Hall conductance}
\label{sec:quantization}
We finally prove Theorem~\ref{thm:hall}. We start with a Lemma that is a corollary of Lemma~\ref{lem:theta}.

\begin{lem}
Suppose that $\rho \simeq \pi$, then $\theta(\rho, \sigma) = 1$ for any $\sigma \in O_{\Lambda_0}$.
\end{lem}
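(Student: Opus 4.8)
The plan is to exploit that an object isomorphic to the monoidal unit $\pi$ braids trivially, together with the feature of the definition~(\ref{eq:theta}) that the second excitation is dragged off to spatial infinity before the braiding is read off. So I would start by writing $\rho=\Ad(V)\circ\pi$ for a unitary $V\in\caU(\caH)$, that is $V\in\mathrm{Hom}(\pi,\rho)$. Since $\theta(\rho,\cdot)$ is defined only for $\rho\in O_{\Lambda_1}$, we have $\rho|_{\caA_{\Lambda_1^c}}=\pi|_{\caA_{\Lambda_1^c}}$, which forces $V\in\pi(\caA_{\Lambda_1^c})'$. Approximate Haag duality (Assumption~\ref{assum:2}) then gives $V\in\caB$ and, more importantly, that for every $\varepsilon>0$ there is a fixed cone $\Lambda_V\in\caC$ with $\Lambda_1\subset\Lambda_V$ and an element $X_\varepsilon\in\pi(\caA_{\Lambda_V})''$ with $\norm{V-X_\varepsilon}<\varepsilon$; the cone $\Lambda_V$ can be chosen to avoid the forbidden direction since $\Lambda_1$ does.

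Next I would record the identity $\epsilon(\rho,\tau)=T_\tau(V)\,V^*$, valid for every $\tau\in O$ for which $\epsilon(\rho,\tau)$ is defined. This is naturality of the braiding in its first argument --- part of the braided $C^*$-tensor category structure of \cite{MTC} --- applied to $V\in\mathrm{Hom}(\pi,\rho)$: it reads $\epsilon(\rho,\tau)\,(V\otimes\mathrm{id}_\tau)=(\mathrm{id}_\tau\otimes V)\,\epsilon(\pi,\tau)$, and in the strict model used here one has $V\otimes\mathrm{id}_\tau=V$ and $\mathrm{id}_\tau\otimes V=T_\tau(V)$, while $\epsilon(\pi,\tau)=1$ because $T_\pi=\mathrm{id}$. (The same identity also drops out directly from $\epsilon(\rho,\tau)=\lim_{s\to\infty}V_{\tau,\Lambda_2(s)}^*\,T_\rho(V_{\tau,\Lambda_2(s)})$ once one notes $T_\rho=\Ad(V)$ on $\caB$.)

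Then I would push the second anyon to infinity. Fix $\sigma\in O_{\Lambda_0}$ and set $\sigma_t:=\Ad(V_{\sigma,\Lambda_3(t)})\circ\sigma\in O_{\Lambda_3(t)}$, so that $\sigma_t|_{\caA_{\Lambda_3(t)^c}}=\pi|_{\caA_{\Lambda_3(t)^c}}$ and hence $T_{\sigma_t}$ restricts to the identity on $\pi(\caA_{\Lambda_3(t)^c})$. Since $\Lambda_3(t)=\Lambda_3+t\,\bm e_0$ drifts off to the right while $\Lambda_V$ is a fixed cone pointing upward, there is $t_0$ with $\Lambda_V\subset\Lambda_3(t)^c$ for all $t\ge t_0$; for such $t$ the weak continuity of $T_{\sigma_t}$ on $\pi(\caA_{\Lambda_V})''$ gives $T_{\sigma_t}(X_\varepsilon)=X_\varepsilon$, whence $\norm{T_{\sigma_t}(V)-V}\le 2\varepsilon$ since $T_{\sigma_t}$ is a contraction. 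Therefore $\lim_{t\to\infty}T_{\sigma_t}(V)=V$, and
\[
\theta(\rho,\sigma)=\lim_{t\to\infty}\epsilon(\rho,\sigma_t)=\lim_{t\to\infty}T_{\sigma_t}(V)\,V^*=VV^*=1 .
\]
The same computation with any other admissible choice of $V_{\sigma,\Lambda_3(t)}$ again produces $1$, so this argument also shows that $\theta(\rho,\sigma)$ is well defined in this case.

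The hard part will be the first step: extracting from approximate Haag duality that $V$ is realized, up to arbitrarily small norm error, inside one cone $\Lambda_V\in\caC$ --- this is precisely what makes $T_{\sigma_t}$ act asymptotically trivially on $V$, and it is the only genuinely analytic ingredient. The remaining points are bookkeeping: matching the categorical naturality statement to the operator identity $\epsilon(\rho,\tau)=T_\tau(V)V^*$, and keeping in mind that here the second argument $\tau=\sigma_t$ is localized in $\Lambda_3(t)$ rather than in $\Lambda_0$, so one must appeal to the versions of the relevant statements of \cite{MTC} that hold for such shifted cones.
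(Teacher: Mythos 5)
Your proof is correct, but it takes a genuinely different route from the paper. The paper disposes of this lemma in two lines: writing $\rho=\Ad(V)\circ\pi$, it invokes the already-established invariance $\theta(\rho',\sigma')=\theta(\rho,\sigma)$ under unitary perturbations (Lemma~\ref{lem:theta}(iv), applied with $\tilde\rho=\id$) to get $\theta(\rho,\sigma)=\theta(\pi,\sigma)$, which is $1$ because $T_\pi=\id$ forces $\epsilon(\pi,\cdot)=1$. What you do instead is re-derive, from scratch, exactly the mechanism that underlies Lemma~\ref{lem:theta}(iv): the intertwiner $V$ lies in $\pi(\caA_{\Lambda_1^c})'$, approximate Haag duality places it (up to $\varepsilon$) in $\pi(\caA_{\Lambda_V})''$ for a fixed cone $\Lambda_V\in\caC$, and since $\Lambda_3(t)$ drifts away from $\Lambda_V$, weak continuity of $T_{\sigma_t}$ forces $T_{\sigma_t}(V)\to V$; combined with the naturality identity $\epsilon(\rho,\sigma_t)=T_{\sigma_t}(V)V^*$ this gives $\theta(\rho,\sigma)=VV^*=1$. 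This is precisely the localization argument the paper carries out in Appendix~B (compare Lemmas~\ref{usingApproxHaagDuality1}, \ref{VSigmaLambdaThreeTwoCommutatorLambda1CompPrimeLimit} and the proof of part~(iv) of Lemma~\ref{lem:theta}, where the relevant quantity is the commutator $[[V_{\sigma,\Lambda_3(s)}V_{\sigma,\Lambda_2(t)}^*,V]]\to 1$ rather than your $T_{\sigma_t}(V)\to V$, but the geometry and the Haag-duality approximation are the same). What the paper's route buys is brevity and reuse of established lemmas; what yours buys is self-containedness and a slightly cleaner scope, since you never need $\sigma$ to be of the automorphism form $\pi\circ\tilde\sigma$ required by the hypotheses of Lemma~\ref{lem:theta} — you verify existence and independence of the limit for this particular $\rho$ directly. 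The one place where you should be explicit is the identity $\epsilon(\rho,\tau)=T_\tau(V)V^*$: either cite naturality of the braiding from \cite{MTC} in the strict-model convention $1_\tau\otimes V=T_\tau(V)$, or derive it from Lemma~\ref{epsilonRhoPrimeSigma} with $\rho=\pi$, which gives $\epsilon(\Ad(V)\circ\pi,\sigma)=\lim_t[[V_{\sigma,\Lambda_2(t)}^*,V]]$ and reduces your computation to the same commutator estimate the paper uses.
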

\begin{proof}
By point (iv) in Lemma~\ref{lem:theta} we have 
$$
\theta(\rho, \sigma) = \theta(\pi, \sigma).
$$
The right hand side is manifestly equal to the unity.
\end{proof}
Now we are ready for the proof of the theorem.
\begin{proof}[Proof of Theorem~\ref{thm:hall}]
Consider the object $\rho$ constructed in Theorem~\ref{thm:anyon}. For any $n\in \mathbb{N}$, $\rho^{\otimes n}$ is an irreducible object in $\mathcal{M}$. By the assumption that there are finite number $p'$ of simple objects, there exists $p \leq p'$ such that $\rho^{\otimes p} \simeq \pi$. By the previous lemma we then have 
$$
\theta(\rho^{\otimes p}, \rho) = 1.
$$
On the other hand by Theorem~\ref{thm:anyon} and Lemma~\ref{lem:theta}(v),
$$
\theta(\rho^{\otimes p}, \rho) = \theta(\rho, \rho)^p = e^{2 \pi i  \omega(J_0) p}.
$$
The two equations and the definition~(\ref{eq:Hall}) then imply the stated quantization of the Hall conductance $\kappa$.
\end{proof}

\appendix

\section{Manipulating interactions}
\label{app:interactions}
We use one of the standard setups to manipulate interactions. We follow \cite{BDFJ}.
We consider the $l^\infty$-norm on $\bbZ^2$.
For $x\in \bbZ^2$ and $r>0$, $B_r(x)$ indicates the ball in $\bbZ^2$ centered at $x$ with radius $r$
with respect to this norm.
The diameter of a subset $S\subset \bbZ^2$ with respect to this norm is denoted by
$\mathrm{diam}(S)$.

\subsection{Interactions}
Let $\mathcal{F}$ be a class of strictly positive, non-decreasing functions $f : \mathbb{N}^+ \to \mathbb{R}^+$ that decay faster than any power, i.e. $\lim_{r \to \infty} f(r) r^p = 0$ for all $p >0$. An interaction $h : S \Subset \mathbb{Z}^2 \to h_S \in \mathcal{A}_S$ is a map associating a finite subset of $\mathbb{Z}^2$ with an operator in $\mathcal{A}_S$. We will only consider interactions for which
$$
\| h \|_f = \sup_{x \in \mathbb{Z}_2} \sum_{S \ni x} \frac{\|h_S\|}{f(1 + \mathrm{diam}(S))}
$$
is finite for some $f \in \mathcal{F}$, we will call the set of all such interactions by $\mathcal{J}$.  We will denote interactions with lower case letters and use upper case letter for their quadratures. For a set $X$ we  put
$$
H_X = \sum_{S : S \cap X \neq \emptyset} h_S,
$$
whenever the sum converges in the norm topology. The sum, in particular, converges for $X$ finite, provided $h \in \mathcal{J}$. If the sum exists for $X= \mathbb{Z}^2$, we call the interaction summable. A derivation $\delta^h_X$, associated with $h$ and a region $X$, acts as 
$$
\delta^h_X(A) = \sum_{S : S \cap X \neq \emptyset} i[h_S, A], \quad A \in \mathcal{A}_{\mathrm loc}.
$$
For $X = \mathbb{Z}^2$, we omit $X$ and write $\delta^h$. If $\delta^h$ is inner, we will call the interaction $h$ an inner interaction. Summable interactions are inner, but there are inner interactions that are not summable. If $H_X$ exists then $\delta^h_X(A) = i[H_X, A]$.

\subsection{Anchored interactions} We say that an interaction $h$ is anchored in a set $X \subset \mathbb{Z}^2$ if $h \in \mathcal{J}$ and $S \cap X = \emptyset$ implies that $h_S = 0$. If an interaction is anchored in a finite region $X$ then $H_X = H_{\mathbb{Z}^2}$ exists.

Anchoring can be readily connected to more standard forms of locality.
\begin{lem}
\label{lem:anchored_commutator}
Suppose that $h$ is anchored in $X$ and $A \in \mathcal{A}_Y$. Then
$$
\| \delta_{h}(A) \| \leq 2 f(\mathrm{dist}(X,Y)) |Y| \|h\|_f \|A\|,
$$
holds for any $f \in \mathcal{F}$.
\end{lem}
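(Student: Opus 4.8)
The plan is to estimate the norm of $\delta_h(A) = \sum_{S : S\cap X \neq \emptyset} i[h_S, A]$ by discarding, term by term, those commutators that vanish and bounding the rest. Since $h$ is anchored in $X$, only sets $S$ with $S\cap X \neq\emptyset$ contribute; moreover $[h_S, A] = 0$ whenever $S\cap Y = \emptyset$ because $h_S\in\caA_S$ and $A\in\caA_Y$ commute when their supports are disjoint. Hence the sum effectively runs over sets $S$ that meet \emph{both} $X$ and $Y$, and I would begin by writing
\begin{equation*}
\norm{\delta_h(A)} \leq \sum_{S : S\cap X\neq\emptyset,\, S\cap Y\neq\emptyset} \norm{[h_S, A]} \leq 2\norm{A} \sum_{S : S\cap X\neq\emptyset,\, S\cap Y\neq\emptyset} \norm{h_S}.
\end{equation*}

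Next I would organize the remaining sum by the point of $Y$ that $S$ meets: each such $S$ contains at least one $y\in Y$, so $\sum_{S\cap X\neq\emptyset,\, S\cap Y\neq\emptyset} \norm{h_S} \leq \sum_{y\in Y} \sum_{S\ni y,\, S\cap X\neq\emptyset} \norm{h_S}$. For a fixed $y$ and any $S$ appearing in the inner sum, $S$ contains $y$ and meets $X$, so $\mathrm{diam}(S) \geq \mathrm{dist}(y, X) \geq \mathrm{dist}(Y, X)$. Since $f$ is non-decreasing, $f(1 + \mathrm{diam}(S)) \geq f(\mathrm{dist}(X,Y))$, which lets me insert the weight from the definition of $\norm{h}_f$:
\begin{equation*}
\sum_{S\ni y,\, S\cap X\neq\emptyset} \norm{h_S} \leq f(\mathrm{dist}(X,Y)) \sum_{S\ni y} \frac{\norm{h_S}}{f(1+\mathrm{diam}(S))} \leq f(\mathrm{dist}(X,Y))\, \norm{h}_f.
\end{equation*}
Summing over the $\abs{Y}$ points of $Y$ and reinserting the factor $2\norm{A}$ yields exactly the claimed bound.

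The only mild subtlety — and it is hardly an obstacle — is the bookkeeping in the chain of inequalities: one must be careful that the same $S$ may be counted several times when decomposing over $y\in Y$, but since all terms $\norm{h_S}$ are non-negative this overcounting only weakens the inequality in the right direction, so it is harmless. One should also note that the argument implicitly uses $\mathrm{dist}(X,Y) \geq 1$ so that $f$ is evaluated on its domain $\mathbb{N}^+$; if $X$ and $Y$ intersect the statement is vacuous in the sense that $f(\mathrm{dist}(X,Y))$ is not defined, but in that case the bound is typically stated with the convention that it holds only for disjoint (indeed well-separated) $X, Y$, or one simply takes $f$ extended by $f(0):=f(1)$. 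Everything else is a routine application of the triangle inequality, the elementary commutator bound $\norm{[h_S,A]}\leq 2\norm{h_S}\norm{A}$, and monotonicity of $f$.
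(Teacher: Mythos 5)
Your argument is correct and essentially identical to the paper's own proof: restrict the sum to sets meeting both $X$ and $Y$, fan out over $y\in Y$, insert the weight $f(1+\mathrm{diam}(S))$, and use $\mathrm{diam}(S)\ge\mathrm{dist}(X,Y)$ together with $\|[h_S,A]\|\le 2\|h_S\|\|A\|$. One small slip, inherited from a typo in the paper's definition of $\mathcal{F}$ (``non-decreasing'' should read ``non-increasing'' for functions that decay faster than any power): the inequality your displayed estimate actually uses is $f(1+\mathrm{diam}(S))\le f(\mathrm{dist}(X,Y))$, which holds for a non-increasing $f$, not the reversed inequality you state in the sentence justifying it.
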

\begin{proof}
We have
$$
\delta_{h}(A) = \sum_{\substack{S : S \cap X \neq \emptyset, \\ S \cap Y \neq \emptyset}} i[h_S,A].
$$
So we get
\begin{align*}
\| \delta_{h}(A) \| &\leq 2\sum_{y \in Y} \sum_{\substack{S : y \in S, \\ S \cap X \neq \emptyset}} \| h_S\| \|A\| \\
			 & \leq 2\sum_{y \in Y} \sum_{\substack{S : y \in S, \\ S \cap X \neq \emptyset}}  \frac{\| h_S\|}{ f(1 + \mathrm{diam}(S))} f(1 + \mathrm{diam}(S)) \|A\|  \\
			 & \leq 2|Y| \|h\|_f f(1 + \mathrm{dist}(X,Y)) \|A\|,
\end{align*}
which is what we were supposed to prove. We note that the RHS might be infinite in which case the inequality is trivial.
\end{proof}

\subsection{Commutators} For interactions $h,h'$ we define their commutator $[h,h']$ as 
\begin{equation}\label{def of commutator}
[h,h']_S = \sum_{\substack{S_1, S_2: S_1 \cup S_2 = S, \\ S_1 \cap S_2 \neq \emptyset}} [h_{S_1}, h'_{S_2}].
\end{equation}
If $h, h' \in \mathcal{J}$ then $[h,h'] \in \mathcal{J}$, and 
\begin{equation}
\label{eq:commderivative}
-i \delta_{[h,h']} = \delta_h \delta_{h'} - \delta_{h'} \delta_h.
\end{equation}
Furthermore, if $h$ is anchored in $X$ then $[h,h']$ is anchored in $X$.

It would be convenient if $[h,h']$ was anchored in the intersection of the anchors of $h, h'$. Alas, this is not the case. As a partial substitute, we will use the following criteria to decide if a commutator is inner.
\begin{lem}
\label{lem:inner_commutator}
Let $h$ (resp. $h'$) be interactions anchored in $X$ (resp. $X'$). Assume that for all $f \in \mathcal{F}$,
$$
\sum_{x \in X} \sum_{x' \in X'} f( |x - x'|) < \infty.
$$
Then $[h,h']$ is summable.
\end{lem}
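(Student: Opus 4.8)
The plan is to bound the operator norms $\|[h,h']_S\|$ and sum them over all finite $S\subset\bbZ^2$. First I would expand $[h,h']_S = \sum_{S_1\cup S_2 = S,\, S_1\cap S_2\neq\emptyset}[h_{S_1},h'_{S_2}]$ and note that because $h$ is anchored in $X$ and $h'$ in $X'$, only pairs with $S_1\cap X\neq\emptyset$ and $S_2\cap X'\neq\emptyset$ contribute; in particular $S$ itself must meet both $X$ and $X'$. So it suffices to estimate
$$
\sum_{S}\|[h,h']_S\| \leq 2\sum_{\substack{S_1: S_1\cap X\neq\emptyset}}\ \sum_{\substack{S_2: S_2\cap X'\neq\emptyset,\\ S_1\cap S_2\neq\emptyset}}\|h_{S_1}\|\,\|h'_{S_2}\|.
$$

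The key step is to choose a clever function to split the decay. Fix $f\in\caF$; by the defining property of $\caF$ one can find $f_1,f_2\in\caF$ with $f_1(a+b)\, \text{-type submultiplicativity}$, or more simply pick $g\in\caF$ with $g(r)^3 \leq f(r)$ for large $r$ (such a $g$ exists since $\caF$ is closed under taking roots up to constants and $g$ still decays faster than any power). The idea: for a contributing pair, pick $x_1\in S_1\cap X$, $x_2\in S_2\cap X'$, and $y\in S_1\cap S_2$. Then $\mathrm{diam}(S_1)\geq \frac13(|x_1-y|)$-type bounds hold, and crucially $|x_1-x_2|\leq \mathrm{diam}(S_1)+\mathrm{diam}(S_2)$ since $S_1\cap S_2\neq\emptyset$. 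Hence $f(1+|x_1-x_2|)$ is controlled by $g(1+\mathrm{diam}(S_1))\,g(1+\mathrm{diam}(S_2))$ up to the monotonicity of $g$ and the submultiplicativity-up-to-constants that functions of faster-than-polynomial decay enjoy. Insert a factor $1 = \dfrac{g(1+\mathrm{diam}(S_1))\,g(1+\mathrm{diam}(S_2))}{g(1+\mathrm{diam}(S_1))\,g(1+\mathrm{diam}(S_2))}$, use $\|h_{S_1}\|/g(1+\mathrm{diam}(S_1)) \leq$ the summand of $\|h\|_g$, and similarly for $h'$.

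Concretely I would carry out the estimate as follows: sum first over $x_1\in X$ and $x_2\in X'$; for fixed $x_1,x_2$ the pairs $(S_1,S_2)$ with $x_1\in S_1$, $x_2\in S_2$, $S_1\cap S_2\neq\emptyset$ contribute at most
$$
\Big(\sum_{S_1\ni x_1}\frac{\|h_{S_1}\|}{g(1+\mathrm{diam}(S_1))}\Big)\Big(\sum_{S_2\ni x_2}\frac{\|h'_{S_2}\|}{g(1+\mathrm{diam}(S_2))}\Big)\, \sup\big[g(1+\mathrm{diam}(S_1))\,g(1+\mathrm{diam}(S_2))\big]
$$
where the supremum is over contributing configurations and is bounded, via the geometric observations above, by $C\, f(|x_1-x_2|)$ for a suitable constant $C$ depending only on $g,f$. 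The first two factors are each $\leq\|h\|_g$ and $\|h'\|_g$ respectively. This leaves $C\,\|h\|_g\,\|h'\|_g\sum_{x_1\in X}\sum_{x_2\in X'} f(|x_1-x_2|)$, which is finite by hypothesis. Thus $\sum_S\|[h,h']_S\|<\infty$, i.e. $[h,h']$ is summable. (That $[h,h']\in\caJ$ was already noted in the text, so only the summability of the full sum needs this argument.)

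The main obstacle I anticipate is the bookkeeping of the decay splitting: making precise the claim that for a pair of finite sets with $x_1\in S_1$, $x_2\in S_2$, $S_1\cap S_2\neq\emptyset$, one has $f(|x_1-x_2|)\lesssim g(1+\mathrm{diam}(S_1))\,g(1+\mathrm{diam}(S_2))$, which relies on $|x_1-x_2|\leq\mathrm{diam}(S_1)+\mathrm{diam}(S_2)$ together with a lemma that any $f\in\caF$ is dominated by $g(\cdot/2)^2$-type products for some $g\in\caF$ — a standard but slightly fiddly fact about faster-than-polynomial weights. Everything else is routine rearrangement of absolutely convergent (once we know it) double sums. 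One should double-check the factor-of-two from $\|[A,B]\|\leq 2\|A\|\|B\|$ and the overcounting when summing over $x_1,x_2$ rather than over $S$, but neither affects finiteness.
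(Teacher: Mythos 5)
Your argument is essentially the paper's proof: bound $\sum_S\|[h,h']_S\|$ by the double sum over anchored pairs $(S_1,S_2)$ with $S_1\cap S_2\neq\emptyset$, use $|x_1-x_2|\le\mathrm{diam}(S_1)+\mathrm{diam}(S_2)$, split the decay into a product of weights, and sum over $x_1\in X$, $x_2\in X'$ last so the hypothesis applies. The one point to fix is the order in which the two weight functions are chosen: you cannot ``fix $f\in\caF$ and then pick $g$ with $g^3\le f$,'' because such a $g$ need not satisfy $\|h\|_g<\infty$ or $\|h'\|_g<\infty$ --- membership in $\caJ$ only gives you \emph{some} $\tilde f\in\caF$ with both norms finite. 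The correct order (and the paper's) is to start from that $\tilde f$, define $g(r):=\max_{r_1+r_2=r+1}\tilde f(r_1)\tilde f(r_2)$, check $g\in\caF$, and then invoke the ``for all $f\in\caF$'' hypothesis with this particular $g$; with that reversal your concrete estimate chain goes through verbatim.
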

\noindent Note that the assumption above is to be understood as a constraint on the sets and not on the family $\caF$. It is satisfied in particular whenever $X,X'$ are two non-parallel strips of finite width.
\begin{proof}
By the definition~(\ref{def of commutator}) of the commutator of interactions, it suffices to show that 
\begin{equation*}
\caS = \sum_{S\cap X \neq\emptyset}\sum_{\substack{S'\cap X' \neq\emptyset \\ S\cap S'\neq\emptyset}}\Vert h_S\Vert \Vert h'_{S'}\Vert 
\end{equation*}
is convergent. Let $\tilde f$ be such that $\Vert h\Vert_{\tilde f}<\infty$, $\Vert h'\Vert_{\tilde f}<\infty$ and let
$$
g(r) := \max_{r_1 + r_2 = r+1} {\tilde f}(r_1) {\tilde f}(r_2) .
$$
Then $ g \in \mathcal{F}$ and we write
$$
\caS \leq \sum_{\substack{x \in X \\ x \in X'}} \sum_{\substack{S: x \in S, x' \in S' \\  S\cap S'\neq\emptyset}} \frac{\Vert h_S\Vert}{{\tilde f}(1 + \mathrm{diam}(S))} \frac{\Vert h'_{S'}\Vert}{{\tilde f}(1+ \mathrm{diam}(S'))} g(2 + \mathrm{diam}(S) + \mathrm{diam}(S')).
$$
By the geometry of $S, S'$ in the above sum, $\mathrm{diam}(S) + \mathrm{diam}(S') \geq |x- x'|$ so we get
$$
\caS \leq \|h\|_{\tilde f} \|h'\|_{\tilde f} \sum_{x \in X} \sum_{x \in X'} g(|x - x'|),
$$
which is finite by assumption.
\end{proof}

\subsection{Time-evolution}
For $A \in \mathcal{A}$ and a site $x \in \mathbb{Z}^2$, we define a decomposition of $A$
$$
A = \sum_{n=0}^\infty A_{x,n},
$$
where
$$
A_{x,n} := \mathbb{E}_{B_n(x)}[A] - \mathbb{E}_{B_{n-1}(x)}[A], 
$$
for $n \geq 1$ and $A_{x,0} := \mathbb{E}_{B_0(x)}[A]$.
For an automorphism $\beta$, and an interaction $h$ anchored in $X$, we define the time evolved interaction as
\begin{equation}
\label{eq:interaction_evolution}
\beta(h)_{B_k(x)} := \sum_{S : x \in S \cap X} \frac{1}{|S \cap X|} \beta(h_S)_{x,k},
\end{equation}
for $x \in X$ and $k \geq 0$. We define $\beta(h)_S = 0$ for any other $S$. 

For an interaction $h$, we denote $\tau^h_s$ the group of automorphisms generated by $\delta^h$. We will repeatedly use 
\begin{lem}\label{lem:anchoring}
Suppose that $h \in \mathcal{J}$ and that $h'$ is anchored in $X$. Then $\tau^{h}_s(h')$ is anchored in $X$ for all $s \geq 0$.
\end{lem}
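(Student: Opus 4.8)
The statement splits into two parts: that $\tau^h_s(h')$ is again an interaction in $\caJ$, and that it vanishes on every set disjoint from $X$. I would dispatch the second part first, since it is immediate from the definition~(\ref{eq:interaction_evolution}): by construction $\tau^h_s(h')_S$ is declared to be $0$ unless $S$ is a ball $B_k(x)$ with $x\in X$, and for such a ball $x\in B_k(x)\cap X$, so $S\cap X\neq\emptyset$. Hence $S\cap X=\emptyset$ forces $\tau^h_s(h')_S=0$, which is precisely the anchoring condition --- granted that $\tau^h_s(h')\in\caJ$.

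The substance is therefore the standard fact from \cite{BDFJ} that time evolution maps $\caJ$ into itself; the plan is to recall its proof in the anchored setting. One must produce $g\in\caF$ with $\norm{\tau^h_s(h')}_g<\infty$. The key input is the Lieb--Robinson bound for $\tau^h$, available because $h\in\caJ$: for each fixed $s$ there is $\tilde f_s\in\caF$ with
\[
\norm{\tau^h_s(A)-\bbE_{B_r(Y)}[\tau^h_s(A)]}\le \abs{Y}\,\norm{A}\,\tilde f_s(r)
\]
for every finite $Y$, every $A\in\caA_Y$ and every $r\ge 0$. I would apply this with $A=h'_S$ and $Y=S$: only sets $S$ with $x\in S$ contribute to $\tau^h_s(h')_{B_k(x)}$, so $S\subseteq B_{\mathrm{diam}(S)}(x)$, and expressing $\tau^h_s(h'_S)_{x,k}$ as a difference of two nested conditional expectations of $\tau^h_s(h'_S)$ yields a bound
\[
\norm{\tau^h_s(h'_S)_{x,k}}\le \norm{h'_S}\cdot\min\bigl(2,\ 2\abs{S}\,\tilde f_s(k-1-\mathrm{diam}(S))\bigr),
\]
reading $\tilde f_s$ of a non-positive argument as $1$. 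Then, using $\norm{\tau^h_s(h')_{B_k(x)}}\le\sum_{S\ni x}\norm{\tau^h_s(h'_S)_{x,k}}$ together with $\sum_{S\ni x}\norm{h'_S}<\infty$ (a consequence of $h'\in\caJ$ and the decay built into $\caF$), I would estimate
\[
\norm{\tau^h_s(h')}_g=\sup_{y\in\bbZ^2}\ \sum_{x\in X}\ \sum_{k\ge\abs{x-y}}\frac{\norm{\tau^h_s(h')_{B_k(x)}}}{g(1+2k)}
\]
by splitting the $k$-sum at $k=\mathrm{diam}(S)+1$ and choosing $g\in\caF$ that decays slowly enough relative to both $\tilde f_s$ and the weight function of $h'$ --- which is possible because for any member of $\caF$ one can find another member of $\caF$ that decays arbitrarily more slowly while still faster than every polynomial. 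With such a $g$, the three nested sums (over the generating set $S$, over centers $x\in X$, over radii $k$) converge uniformly in $y$, so $\tau^h_s(h')\in\caJ$.

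The only real labour is this final estimate: verifying that the geometric sums over $S$, over the possibly infinite family of centers $x\in X$ (for us $X$ is typically a strip such as the boundary of a half-plane), and over the radii $k$ converge uniformly in the base point $y$ once $g$ is taken subordinate to the Lieb--Robinson tail. I expect no conceptual obstacle --- it is the argument of \cite{BDFJ} specialized to anchored interactions --- but it is the step that requires the bookkeeping.
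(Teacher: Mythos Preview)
Your proposal is correct and matches the paper's approach exactly: the paper does not give an independent argument but simply refers to \cite[Lemma~5.2]{BDFJ}, whose proof is precisely the Lieb--Robinson tail estimate you sketch, while the anchoring condition is indeed immediate from the definition~(\ref{eq:interaction_evolution}). Your outline is a faithful unpacking of that citation.
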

The proof is in  \cite[Lemma~5.2.]{BDFJ}. 

\subsection{Quadratures}
For a series of interactions $h_j$, we define
$$
\left( \sum_j h_j \right)_S := \sum_j (h_j)_S,
$$
provided the sum exists in norm sense. Likewise, provided that the integral on the RHS exists in the Bochner sense, we put
$$
\left( \int h_t w(t) dt \right)_S =  \int (h_t)_S w(t) dt,
$$
for a family of interactions $h_t$ and weight function $w(t)$.

\subsection{Restrictions}
\label{app:restrictions}
For an interaction $h$ and a region $\Gamma$ we define 
$$
(h |_\Gamma)_S = \begin{cases}
					0&S \cap \Gamma^c \neq \emptyset  \\
					h_S& S \subset \Gamma .
				\end{cases}
$$
The automorphisms, $\tau^{h|_\Gamma}_s$, associated with $h |_\Gamma$ act strictly in $\Gamma$.

\subsection{Time dependent interactions}
Time dependent interaction (TDI) is a map $s \in I \subset \mathbb{R} \to h_s \in \mathcal{J}$,
with $I$ an interval of $\bbR$. Furthermore we require that 
\begin{enumerate}
\item the map $s \in I \to (h_s)_S$ is continuous for all $S \subset \mathbb{Z}^2$,
\item there exist $f \in \mathcal{F}$ such that 
$$
\sup_{s \in I} \| h_s \|_f < \infty.
$$
\end{enumerate}
Operations on interactions extend point-wise  to TDIs. The role of TDI's is to generate time evolution, to a TDI $h$ we associate a family of automorphisms $\tau^h_s$ that satisfies the equation 
$$
\partial_s \tau^h_s(A) = \tau^h_s(\delta_{h_s}(A)),\quad A\in {\caA_{\mathrm{loc}}}.
$$

Anchored interactions generate an automorphism that acts trivially far away from the anchoring region. This is quantified by the following lemma.
\begin{lem}
\label{lem:anchored_auto}
Let $h$ be a TDI anchored in $X$, $\tau^h_s$ the associated automorphism, and $A \in \mathcal{A}_Y$,
$$
\| \tau^h_1(A) - A \| \leq 2|Y| \sup_{s \in [0,1]} \| h_s\|_f f(\mathrm{dist}(X,Y)) \|A\|,
$$
holds for any $f \in \mathcal{F}$.
\end{lem}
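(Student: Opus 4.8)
The plan is to reduce the statement to the infinitesimal estimate already recorded in Lemma~\ref{lem:anchored_commutator}. First I would dispose of the trivial case: if $Y$ is infinite then $|Y| = \infty$ and the asserted bound is vacuous, so I may assume $Y \Subset \bbZ^2$. In particular $A \in \caA_{\mathrm{loc}}$, so the defining equation $\partial_s \tau^h_s(A) = \tau^h_s(\delta_{h_s}(A))$ for the automorphism family generated by the TDI $h$ applies literally to $A$, and the map $s \mapsto \tau^h_s(\delta_{h_s}(A))$ is norm-continuous by the TDI axioms, so the Bochner integral below is legitimate.

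Next I would write the difference as the integral of its derivative,
\[
\tau^h_1(A) - A = \int_0^1 \partial_s \tau^h_s(A)\, ds = \int_0^1 \tau^h_s\bigl(\delta_{h_s}(A)\bigr)\, ds .
\]
The crucial observation is that the derivation here acts on the \emph{un-evolved} observable $A$, sitting inside the automorphism $\tau^h_s$; consequently no Lieb--Robinson-type control of the spreading of the support of $A$ under the dynamics is needed, and the only locality input is how $\delta_{h_s}$ acts on an observable supported in $Y$. Since each $\tau^h_s$ is a $*$-automorphism, hence isometric, taking norms gives
\[
\norm{\tau^h_1(A) - A} \le \int_0^1 \norm{\delta_{h_s}(A)}\, ds .
\]

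Now I apply Lemma~\ref{lem:anchored_commutator} with the interaction $h_s$, which is anchored in $X$, and $A \in \caA_Y$: for every $f \in \caF$,
\[
\norm{\delta_{h_s}(A)} \le 2\, f(\mathrm{dist}(X,Y))\, |Y|\, \norm{h_s}_f\, \norm{A}.
\]
Inserting this into the integral and bounding $\norm{h_s}_f \le \sup_{s\in[0,1]}\norm{h_s}_f$ yields exactly the claimed inequality.

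I do not anticipate any genuine obstacle: the statement is essentially a one-line consequence of the generator equation together with Lemma~\ref{lem:anchored_commutator}. The only points demanding a modicum of care are the reduction to finite $Y$ (so that $A$ is local and the ODE for $s\mapsto\tau^h_s(A)$ is directly available) and the continuity in $s$ that makes the integral representation valid, both of which are immediate.
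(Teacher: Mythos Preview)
Your proof is correct and matches the paper's approach exactly: write $\tau^h_1(A)-A = \int_0^1 \tau^h_s(\delta_{h_s}(A))\,ds$ and invoke Lemma~\ref{lem:anchored_commutator}. The paper's proof is essentially the same one-liner, without your extra care about finite $Y$ and continuity in $s$.
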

\begin{proof}
By differentiating under integral we get
$$
\tau_1^h(A) - A = \int_0^1 \tau^h_s( \delta_{h_s}(A)) ds,
$$
and the statement follows from Lemma~\ref{lem:anchored_commutator}.
\end{proof}

\subsection{Perturbation theory}
\begin{lem}
\label{lem:pert}
Let $h, h'$ be two TDIs, and $\tau^h_s$, $\tau^{h'}_s$ the associated automorphisms. Suppose that $h- h'$ is inner, i.e. there exists a family $D_s \in \mathcal{A}$ such that
\begin{equation}
\label{eq:pert1}
\delta_{h_s}(A) - \delta_{h'_s}(A) = i[D_s,A],
\end{equation}
holds for all $A \in \mathcal{A}$. Then there exists a unitary $V_s \in \mathcal{A}$ such that 
$$
V_s \tau_h^s(A) = \tau_{h'}^s(A) V_s. 
$$
holds for all $A \in \mathcal{A}$.
\end{lem}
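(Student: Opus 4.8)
The plan is the standard interaction-picture (Dyson series) construction: I would exponentiate the implementing elements $D_s$ along the flow generated by $h$, obtaining a cocycle of unitaries that conjugates $\tau^h$ into $\tau^{h'}$. To set it up, I first normalize $D_s$. Since $\delta_{h_s}-\delta_{h'_s}$ is a difference of $*$-derivations it is again a $*$-derivation, so $i[D_s,\,\cdot\,]=i[D_s^*,\,\cdot\,]$ on $\caA_{\mathrm{loc}}$, i.e. $D_s-D_s^*$ is central; as $\caA$ is a UHF algebra, in particular simple, this forces $D_s-D_s^*\in\bbC\unit$, and replacing $D_s$ by $\tfrac12(D_s+D_s^*)$ (which does not change $i[D_s,\,\cdot\,]$) I may assume $D_s=D_s^*$. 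Assuming, as holds in every situation in which the lemma is applied, that $s\mapsto D_s$ is norm-continuous — there it is an explicit finite-diameter-weighted sum of interaction terms, dominated via $\norm{\cdot}_f$, hence automatically continuous — the family $G_s:=\tau^h_s(D_s)$ is a norm-continuous, uniformly bounded family of self-adjoint elements of $\caA$.

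I would then let $W_s\in\caA$ solve
$$
\partial_s W_s = i\,G_s\,W_s,\qquad W_0=\unit,
$$
given by the norm-convergent Dyson series $W_s=\mathrm{T}\ep{i\int_0^s G_u\,du}$; differentiating $W_s^*W_s$ and $W_sW_s^*$ and using $G_s=G_s^*$ shows $W_s$ is unitary. I set $V_s:=W_s^*\in\caA$; this is exactly the unitary recorded in~\eqref{eq:Vrt} once the concrete $D_s$ is substituted. The verification that $V_s$ does the job is then a short differentiation. Put $\Xi_s:=\Ad(V_s)\circ\tau^h_s$, so $\Xi_0=\id$, and for $A\in\caA_{\mathrm{loc}}$ use $\partial_sV_s=-i\,V_sG_s$ and $\partial_sV_s^*=i\,G_sV_s^*$ to get
\begin{align*}
\partial_s\Xi_s(A)
&=-i\,V_s[G_s,\tau^h_s(A)]V_s^*+V_s\,\tau^h_s(\delta_{h_s}(A))\,V_s^*\\
&=V_s\,\tau^h_s\big(\delta_{h_s}(A)-i[D_s,A]\big)\,V_s^*
=\Xi_s(\delta_{h'_s}(A)),
\end{align*}
where the first line collects the two cross terms, the second uses that $\tau^h_s$ is an automorphism, so $[G_s,\tau^h_s(A)]=\tau^h_s([D_s,A])$, and the last uses the hypothesis~\eqref{eq:pert1}. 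Thus $\Xi_s$ and $\tau^{h'}_s$ solve the same Cauchy problem $\partial_s\Phi_s(A)=\Phi_s(\delta_{h'_s}(A))$, $\Phi_0=\id$, on $\caA_{\mathrm{loc}}$. By uniqueness of its solution — part of the standard theory of TDI-generated dynamics recalled in Appendix~\ref{app:interactions} and \cite{BDFJ} (concretely, $s\mapsto\Xi_s\circ(\tau^{h'}_s)^{-1}$ has vanishing $s$-derivative, hence is constant $=\id$) — we get $\Xi_s=\tau^{h'}_s$ on $\caA_{\mathrm{loc}}$, and then on all of $\caA$ by density and boundedness. Unpacking $\Ad(V_s)\circ\tau^h_s=\tau^{h'}_s$ gives $V_s\tau^h_s(A)=\tau^{h'}_s(A)V_s$, as claimed.

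I expect the only genuine friction to be the regularity bookkeeping around $D_s$: the hypothesis merely asserts existence of the implementing elements, so one must either carry norm-continuity of $s\mapsto D_s$ as a (harmless) standing assumption or verify it from the way the $D_s$ arise in each application; and one must invoke the uniqueness theorem for the $h'$-dynamics, which is standard but ultimately rests on the Lieb--Robinson-type estimates underlying Appendix~\ref{app:interactions}. The exponentiation, the unitarity of $W_s$, and the identity $[\tau^h_s(D_s),\tau^h_s(A)]=\tau^h_s([D_s,A])$ are all routine, so essentially all the content is in the one displayed computation above together with the uniqueness step.
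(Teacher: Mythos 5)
Your proof is correct and follows essentially the same route as the paper: both construct the time-ordered exponential of the conjugated implementing elements $D_s$ and identify the resulting inner cocycle with $\tau^h_s\circ(\tau^{h'}_s)^{-1}$. The only cosmetic differences are that you evolve $D_s$ with $\tau^h$ rather than $\tau^{h'}$ and phrase the final step as uniqueness for the Cauchy problem, while the paper differentiates $\tau_h^s\circ(\tau_{h'}^s)^{-1}$ directly; your added remarks on self-adjointness and continuity of $D_s$ are harmless and correct.
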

\begin{proof}
We have 
$$
\partial_s (\tau_h^s \circ(\tau_{h'}^s)^{-1}(A)) = \tau_h^s(\delta_{h_s} - \delta_{h'_s}) (\tau_{h'}^s)^{-1}(A) = \tau_h^s \circ(\tau_{h'}^s)^{-1}[i\tau_{h'}^s(D_s),A],
$$
by the assumption. In other words, the family of automorphisms $\tau_h^s \circ (\tau_{h'}^s)^{-1}$ is generated by the family of self-adjoint elements $\tau_{h'}^s(D_s)\in \caA$. It follows immediately that $\tau_h^s \circ (\tau_{h'}^s)^{-1} = \Ad[V_s^*]$ where $V_s$ is the time-ordered exponential of $\tau_{h'}^t(D_t)$.
\end{proof}
The lemma will be mainly used in the context of localizing interactions.

\begin{lem}
\label{lem:restriction}
Let $h$ be a TDI anchored in a region $X$. Suppose that $\Gamma \subset \mathbb{Z}^2$ is a region such that there exists constants $C_1, C_2$ so that 
$$
\mathrm{dist}(\Gamma^c, X \cap B_{0,n}^c) \geq C_1 + C_2 n
$$
holds for all integers $n$ with $B_{0,n}:=B_0(n)$. 
Then the TDI $h - h|_{\Gamma}$ is inner. 
\end{lem}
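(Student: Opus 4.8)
\medskip
\noindent\textbf{Proof strategy.} The plan is to show directly that the interaction $g:=h-h|_{\Gamma}$ is \emph{summable}; since summable interactions are inner, this gives the conclusion, with $D_s:=\sum_{S}(g_s)_S$ implementing $\delta_{h_s}-\delta_{(h|_{\Gamma})_s}=\delta_{g_s}=i[D_s,\,\cdot\,]$ first on $\caA_{\mathrm{loc}}$ and then, by continuity, on all of $\caA$. The point is that $h$ itself need not be summable --- its anchor $X$ may be an infinite strip --- whereas the part of $h$ that ``leaks out'' of $\Gamma$ is much sparser: by the definition of the restriction, $g_S=h_S$ precisely for the sets $S$ with $S\cap X\neq\emptyset$ \emph{and} $S\cap\Gamma^c\neq\emptyset$, and the geometric hypothesis forces any such $S$ to have a large diameter once it sits far from the origin.

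\smallskip
First I would fix $f\in\caF$ with $M:=\sup_{s}\lV h_s\rV_f<\infty$ and record the geometric estimate. If $S\cap X\neq\emptyset$, $S\cap\Gamma^c\neq\emptyset$ and $x\in S\cap X$, then $\mathrm{diam}(S)\geq\mathrm{dist}(x,\Gamma^c)$; feeding $n=|x|-1$ into the hypothesis, for which $x\in X\cap B_{0,n}^c$, gives
\[
\mathrm{diam}(S)\;\geq\;\mathrm{dist}(x,\Gamma^c)\;\geq\;C_1+C_2(|x|-1)\;=:\;R(|x|),
\]
which grows linearly in $|x|$ (we use $C_2>0$). Hence every set contributing to $g$ contains a site $x\in X$ with $\mathrm{diam}(S)\geq R(|x|)$.

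\smallskip
The second, and main, step is a bookkeeping bound that must carry the $f$-weight throughout. For a fixed site $x$ and $R\geq0$, grouping subsets by their diameter,
\[
\sum_{\substack{S\ni x\\\mathrm{diam}(S)\geq R}}\lV h_S\rV
=\sum_{d\geq R}\Bigg(\sum_{\substack{S\ni x\\\mathrm{diam}(S)=d}}\frac{\lV h_S\rV}{f(1+\mathrm{diam}(S))}\Bigg)f(1+d)
\;\leq\;M\sum_{d\geq R}f(1+d)\;=:\;M\,G(R).
\]
Since $f$ decays faster than any power, $G(R)<\infty$ and $G(R)=O(R^{-p})$ for every $p>0$. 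Because $\{S\ni x:\,S\cap\Gamma^c\neq\emptyset\}\subseteq\{S\ni x:\,\mathrm{diam}(S)\geq R(|x|)\}$, summing over $x\in X\subseteq\bbZ^2$ and organising by $|x|=m$ (at most $8m+1$ sites) yields
\[
\sum_{S}\lV g_S\rV\;\leq\;\sum_{x\in X}M\,G\big(R(|x|)\big)\;\leq\;M\sum_{m\geq0}(8m+1)\,G\big(C_1+C_2(m-1)\big)\;<\;\infty,
\]
the series converging because $G$ decays super-polynomially while $R(m)$ is linear. The bound is uniform in $s$, so $D_s=\sum_S(g_s)_S$ is a well-defined self-adjoint element of $\caA$, depending norm-continuously on $s$ (each $(h_s)_S$ is continuous in $s$ and the sum converges uniformly in $s$); this is exactly the data required by Lemma~\ref{lem:pert}.

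\smallskip
The only genuine obstacle lies in the middle display: one must resist estimating $\sum_{S\ni x}\lV h_S\rV$ by counting the subsets of a ball, whose number grows like $\ep{c d^2}$ and swamps the merely super-polynomial decay of $f$. The remedy --- already used in the proof of Lemma~\ref{lem:inner_commutator} --- is to bound the $f$-weighted sum at each fixed diameter $d$ by $M$ and only afterwards sum the harmless series $\sum_d f(1+d)$. With that device in place, convergence is immediate from the linear lower bound $R(m)$ furnished by the hypothesis, and the remaining points (self-adjointness, continuity in $s$, and the passage from $\caA_{\mathrm{loc}}$ to $\caA$) are routine.
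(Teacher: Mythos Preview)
Your proof is correct and follows essentially the same route as the paper: both show summability of $h-h|_\Gamma$ by noting that the surviving terms must intersect both $X$ and $\Gamma^c$, invoking the geometric hypothesis to force a diameter growing linearly in the distance from the origin, and then controlling the resulting sum via the $f$-norm. Your organisation by anchor point $x\in X$ is slightly cleaner than the paper's triple sum over $(n,x,S)$, and your tail bound $G(R)=\sum_{d\geq R}f(1+d)$ is marginally weaker than the paper's direct $f(1+R)$ (which uses monotonicity of $f$), but the strategy and the key estimate are the same.
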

\begin{proof}
Since, by definition,
$$
(h - h\vert_{\Gamma})_S = \begin{cases}
						0 &	S \subset \Gamma,  \\
						 h_S & \mbox{otherwise},
					\end{cases}
$$
we get
$$
\sum_{S} \| (h - h\vert_{\Gamma})_S \| = \sum_{S \cap \Gamma^c \neq \emptyset} \|h_S\|. 
$$
Since $h$ is anchored in $X$ we can add a condition $S \cap X \neq \emptyset$ to the last sum. Then we bound it as 
$$
\sum_{\substack{S \cap \Gamma^c \neq \emptyset \\ S  \cap X \neq \emptyset}} \|h_S\| \leq \sum_{n=0}^\infty \sum_{x \in B_n \cap X} \sum_{\substack{S \ni x \\ S \cap \Gamma^c \neq \emptyset \\ S  \cap X\cap B_{n-1}^c \neq \emptyset}} \|h_S\|.
$$
Any set $S$ in the last sum is such that includes points in both $\Gamma^c$ and  $X \cap B_{0,n-1}^c$. The diameter of such set is bigger than $C_1 + C_2(n-1)$ by assumption. If $f$ is such that $\|h\|_f < \infty$, then 
\begin{align*}
\sum_{S} \| (h - h\vert_{\Gamma})_S \| &\leq \sum_{n=0}^\infty \sum_{x \in B_n \cap X} \sum_{\substack{S \ni x \\ S \cap \Gamma^c \neq \emptyset \\ S  \cap B_{n-1}^c \neq \emptyset}} \frac{\|h_S\|}{f(1 + \mathrm{diam(S))}} f(1 + C_1 + C_2(n-1)) \\ 
&\leq \|h\|_f \sum_{n=0}^\infty  (2n+1)^2 f(1 + C_1 + C_2(n-1)), 
\end{align*}
and the series is converent since $f$ decays faster than any inverse power.
\end{proof}

\section{Braiding statistics associated with winding}
\label{app:theta}

The goal of this appendix is to finish the proof of Lemma~\ref{lem:theta}. Throughout the appendix we assume that  assumptions \ref{assum:1}, \ref{assum:2} hold.

A technical tool that we will use is a Lemma that follows from approximate Haag duality, see \cite[Lemma 2.5]{MTC}. We will use the notation $(\ld_{\bm a,\theta,\varphi})_\epsilon:=\ld_{\bm a,\theta,\varphi+\epsilon}$.

\begin{lem}\label{usingApproxHaagDuality1}
    Let $\varepsilon > 0$, and $\delta > 0$, and let $\Lambda$ be a cone such that $\abs{\arg \Lambda} + 4 \varepsilon < 2 \pi$. 
    Let $A \in \pi(\caA_{\Lambda^c})'$. Then, under the assumption of approximate Haag duality, for all $r>R_{\abs{\arg  \Lambda},\varepsilon}$ there exists $A'_r \in \pi(\caA_{(\Lambda(-r))_{\varepsilon+\delta}})''$ such that $\norm{A - A'_r} \le 2 f_{\abs{\arg \Lambda},\varepsilon,\delta}(r) \norm{A}$.  Here $f_{\cdot}(r)$ is a decreasing function that vanishes in the limit $r \to \infty$. 
    
    Specifically, there exists a unitary, $\Tilde{U}_{r}$, depending on $\Lambda, \varepsilon, \delta$, such that  $A'_r = \Ad(\Tilde{U}_{r})A$ satisfies these conditions.
\end{lem}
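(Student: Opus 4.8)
The plan is to obtain this lemma as a direct, essentially geometric, consequence of the approximate Haag duality hypothesis, Assumption~\ref{assum:2}, just as in \cite[Lemma~2.5]{MTC}; the only work is bookkeeping with cones. The first step I would record is the elementary observation that shifting the apex of a cone backward along its axis enlarges it: since $\ld(-r) = \ld - r\bm e_\theta \supseteq \ld$, we have $\ld(-r)^c \subseteq \ld^c$, hence $\caA_{\ld(-r)^c} \subseteq \caA_{\ld^c}$ and therefore $\pi(\caA_{\ld^c})' \subseteq \pi(\caA_{\ld(-r)^c})'$. Consequently the given $A$ commutes with $\pi(\caA_{\ld(-r)^c})$ as well, and we have bought a ``buffer'' of width $r$ between the cone we will actually use and the region $\ld^c$ on whose complement $A$ is localized.

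Next I would feed the shifted cone $\ld(-r)$, together with the widening parameters $\varepsilon,\delta$, into the quantitative form of approximate Haag duality. That hypothesis provides, once the buffer $r$ exceeds the threshold $R_{\abs{\arg\ld},\varepsilon}$ determined by the opening angle and $\varepsilon$, a non-increasing function $f_{\abs{\arg\ld},\varepsilon,\delta}$ with $f_{\abs{\arg\ld},\varepsilon,\delta}(r)\to 0$ as $r\to\infty$ and a unitary $\tilde U_{r}\in\caU(\caH)$ (depending on $\ld,\varepsilon,\delta,r$) such that $\Ad(\tilde U_{r})$ maps the commutant $\pi(\caA_{\ld(-r)^c})'$ into the von Neumann algebra $\pi\big(\caA_{(\ld(-r))_{\varepsilon+\delta}}\big)''$ of the slightly widened cone, with $\norm{A-\Ad(\tilde U_{r})A}\le 2 f_{\abs{\arg\ld},\varepsilon,\delta}(r)\norm{A}$ (if approximate Haag duality is only stated as a bound $\norm{\tilde U_{r}-\unit}\le f(r)$, the same estimate follows from $\norm{A-\tilde U_{r}A\tilde U_{r}^{*}}\le 2\norm{\tilde U_{r}-\unit}\,\norm{A}$ after absorbing constants into $f$). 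Setting $A'_{r}:=\Ad(\tilde U_{r})A$ then yields an element of $\pi\big(\caA_{(\ld(-r))_{\varepsilon+\delta}}\big)''$ with the stated error, which is exactly the claim. The auxiliary $\delta$ is just angular slack, leaving room in the target cone beyond the $\varepsilon$-enlargement actually produced; the hypothesis $\abs{\arg\ld}+4\varepsilon<2\pi$ is what guarantees that all the cones appearing in the argument remain proper subsets of $\bbR^{2}$, so that approximate Haag duality is applicable to them.

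I expect the only genuine obstacle to be transcribing faithfully the precise quantitative statement of approximate Haag duality from \cite[Definition~1.1]{MTC}: the exact form of the threshold $R_{\abs{\arg\ld},\varepsilon}$, the exact decay function $f_{\abs{\arg\ld},\varepsilon,\delta}$, and the verification that the apex-shift construction above is compatible with the way the enlargement and the decay rate are parametrized there (in particular, that the dependence on $\abs{\arg\ld}$, $\varepsilon$ and $\delta$ is uniform in the apex position of $\ld$, which is what later lets the lemma be applied to translated cones). Everything else---the inclusion of commutants, the unitary error estimate---is routine.
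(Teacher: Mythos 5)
Your overall plan is the right one and coincides with the paper's, which gives no proof at all but simply points to \cite[Lemma 2.5]{MTC}: the statement is a direct unfolding of the quantitative form of approximate Haag duality. However, your rendering of that hypothesis is off in a way that matters for where the decay in $r$ comes from. In \cite[Definition 1.1]{MTC}, the duality applied to the cone $\Lambda$ \emph{itself} produces a single unitary $U_{\Lambda,\varepsilon}$ --- in general \emph{not} close to the identity --- with $\pi(\caA_{\Lambda^c})'\subset \Ad(U_{\Lambda,\varepsilon})\big(\pi(\caA_{(\Lambda(-R_{\abs{\arg\Lambda},\varepsilon}))_{\varepsilon}})''\big)$, together with, for each $\delta>0$ and $t\ge R_{\abs{\arg\Lambda},\varepsilon}$, a localized approximant $\tilde U_{\Lambda,\varepsilon,\delta,t}\in\pi(\caA_{(\Lambda(-t))_{\varepsilon+\delta}})''$ satisfying $\lV U_{\Lambda,\varepsilon}-\tilde U_{\Lambda,\varepsilon,\delta,t}\rV\le f_{\abs{\arg\Lambda},\varepsilon,\delta}(t)$. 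The unitary of the lemma is the composite $\tilde U_r:=\tilde U_{\Lambda,\varepsilon,\delta,r}\,U_{\Lambda,\varepsilon}^*$: setting $B=\Ad(U_{\Lambda,\varepsilon}^*)(A)$, one has $B\in\pi(\caA_{(\Lambda(-R))_{\varepsilon}})''\subset\pi(\caA_{(\Lambda(-r))_{\varepsilon+\delta}})''$ for $r\ge R$, conjugation by $\tilde U_{\Lambda,\varepsilon,\delta,r}$ preserves this algebra, and $\lV A-\Ad(\tilde U_r)(A)\rV=\lV\Ad(U_{\Lambda,\varepsilon})(B)-\Ad(\tilde U_{\Lambda,\varepsilon,\delta,r})(B)\rV\le 2\lV U_{\Lambda,\varepsilon}-\tilde U_{\Lambda,\varepsilon,\delta,r}\rV\,\lV A\rV\le 2f_{\abs{\arg\Lambda},\varepsilon,\delta}(r)\lV A\rV$.

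Two concrete corrections follow for your write-up. First, the duality must be applied to $\Lambda$, not to $\Lambda(-r)$: your opening ``buffer'' step ($\pi(\caA_{\Lambda^c})'\subseteq\pi(\caA_{\Lambda(-r)^c})'$) does no work, and if you literally feed $\Lambda(-r)$ into the duality you obtain an error $f(t)$ in the \emph{additional} shift $t$ measured from the apex of $\Lambda(-r)$, landing moreover in a cone larger than $(\Lambda(-r))_{\varepsilon+\delta}$; nothing in that application becomes small as $r\to\infty$. The decay in $r$ enters only through the approximation clause evaluated at shift $t=r$ for the original cone $\Lambda$. Second, your fallback ``if the duality is stated as $\lV\tilde U_r-\unit\rV\le f(r)$'' is not how the hypothesis reads, and closeness to the identity alone could not explain why $\Ad(\tilde U_r)(A)$ lands in $\pi(\caA_{(\Lambda(-r))_{\varepsilon+\delta}})''$; that localization is precisely what the two-unitary structure above provides (although, as it happens, $\lV\tilde U_r-\unit\rV=\lV\tilde U_{\Lambda,\varepsilon,\delta,r}-U_{\Lambda,\varepsilon}\rV\le f(r)$ does hold for the composite). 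With these repairs your argument is exactly \cite[Lemma 2.5]{MTC}.
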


\begin{lem}\label{VSigmaLambdaThreeTwoCommutatorPiCalALimit}
 Let $\sigma \in O_{\Lambda_0}$. Then $$\lim\limits_{s,t\to\infty}\norm{[V_{\sigma,\Lambda_3(s)}V_{\sigma,\Lambda_2(t)}^*,A]}=0$$ holds for all $A \in \pi(\caA)$.
\end{lem}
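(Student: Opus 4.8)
\emph{Proposal.} The plan is to reduce at once to $A=\pi(B)$ with $B$ a local observable and then to show that for such $A$ the commutator is in fact \emph{exactly zero} once $s$ and $t$ are large enough, so no genuine limiting argument beyond this reduction is needed.

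First I would note that $\pi$ is norm-contractive and $\caA_{\loc}$ is norm-dense in $\caA$, hence $\pi(\caA_{\loc})$ is norm-dense in $\pi(\caA)$; since $W_{s,t}:=V_{\sigma,\Lambda_3(s)}V_{\sigma,\Lambda_2(t)}^*$ is unitary, the map $A\mapsto[W_{s,t},A]$ is $2$-Lipschitz uniformly in $s,t$, so it suffices to prove the claim for $A=\pi(B)$ with $B\in\caA_X$, $X\subset\bbZ^2$ finite. The key geometric observation is that $\Lambda_3(s)=\Lambda_3+s\bm e_0$ consists of points with first coordinate $>s$, while $\Lambda_2(t)=\Lambda_2+t\bm e_\pi$ consists of points with first coordinate $<-t$; hence for fixed finite $X$ there are $s_0,t_0$ with $X\cap\Lambda_3(s)=\emptyset$ for all $s\ge s_0$ and $X\cap\Lambda_2(t)=\emptyset$ for all $t\ge t_0$, i.e. $B\in\caA_{\Lambda_3(s)^c}\cap\caA_{\Lambda_2(t)^c}$ for such $s,t$.

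For those $s,t$ the defining property of $\caV_{\sigma,\Lambda_3(s)}$ and $\caV_{\sigma,\Lambda_2(t)}$ gives $\Ad(V_{\sigma,\Lambda_3(s)})\bigl(\sigma(B)\bigr)=\pi(B)=\Ad(V_{\sigma,\Lambda_2(t)})\bigl(\sigma(B)\bigr)$, hence $V_{\sigma,\Lambda_3(s)}^*\pi(B)V_{\sigma,\Lambda_3(s)}=\sigma(B)=V_{\sigma,\Lambda_2(t)}^*\pi(B)V_{\sigma,\Lambda_2(t)}$; rearranging yields $W_{s,t}^*\,\pi(B)\,W_{s,t}=\pi(B)$, i.e. $[W_{s,t},\pi(B)]=0$. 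Combining with the density step, for $A\in\pi(\caA)$ and $\varepsilon>0$ I pick $B\in\caA_{\loc}$ with $\norm{A-\pi(B)}<\varepsilon/2$ and get $\norm{[W_{s,t},A]}\le 2\norm{A-\pi(B)}+\norm{[W_{s,t},\pi(B)]}<\varepsilon$ for all $s\ge s_0$, $t\ge t_0$, which is the asserted limit.

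I do not expect a real obstacle here: the only points needing care are the geometry of the receding cones $\Lambda_3(s),\Lambda_2(t)$ and the fact that the intertwining relations defining $\caV_{\sigma,\Lambda}$ hold on the nose rather than approximately, so the argument is elementary and in particular does not invoke approximate Haag duality — in contrast to the refinement for elements of $\pi(\caA_{\Lambda_1^c})'$, where Lemma~\ref{usingApproxHaagDuality1} must be brought in.
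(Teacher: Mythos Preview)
Your proposal is correct and follows essentially the same route as the paper: reduce to local $A$ by density and the uniform $2$-Lipschitz bound, then use the geometry of the receding cones to show the commutator vanishes exactly for large $s,t$. The only cosmetic difference is that the paper invokes \cite[Lemma~2.2]{MTC} to conclude $W_{s,t}\in\pi(\caA_{(\Lambda_3(s)\cup\Lambda_2(t))^c})'$, whereas you inline precisely the intertwining computation that proves that lemma; your remark that approximate Haag duality is not needed here is also in agreement with the paper.
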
 
\begin{proof}
    First, suppose that $A \in \pi(\caA_{loc})$. Then, there is some finite set on which $A$ is supported. So, as $\bigcup_{n \in \bbN}\Lambda_1(-n) = \bbR^2$, there exists some $n \in \bbN$ such that $A$ is supported in $\ld_1(-n)$, i.e., $A \in \pi(\caA_{\ld_1(-n)}) \subseteq \pi(\caA_{\ld_1(-n)})''$. For $s,t > n \tan(\frac{\pi}{8})$, $\ld_1(-n) \subset (\ld_3(s)\cup\ld_2(t))^c$, and so $\pi(\caA_{\ld_1(-n)})'' \subset (\pi(\caA_{(\ld_3(s)\cup\ld_2(t))^c})')'$, and therefore $A \in \pi(\caA_{\ld_1(-n)})'' \subset (\pi(\caA_{(\ld_3(s)\cup\ld_2(t))^c})')'$. By  \cite[Lemma~2.2]{MTC}, $V_{\sigma,\Lambda_3(s)}V_{\sigma,\Lambda_2(t)}^* \in \pi(\caA_{(\ld_3(s)\cup\ld_2(t))^c})'$, and so $[A,V_{\sigma,\Lambda_3(s)}V_{\sigma,\Lambda_2(t)}^*]=0$.
    
    We conclude that for all  $A \in \pi(\caA_{\mathrm{loc}})$, $\lim\limits_{s,t\to\infty} \norm{[V_{\sigma,\Lambda_3(s)}V_{\sigma,\Lambda_2(t)}^*,A]}=0$.
For $A \in \pi(\caA)$, the statement follows by density of $\pi(\caA_{loc})$ in $\pi(\caA)$.
\end{proof}

\begin{lem}\label{VSigmaLambdaThreeTwoCommutatorLambda1CompPrimeLimit}
 Let $\sigma \in O_{\Lambda_0}$. Then $$\lim\limits_{s,t\to\infty} \norm{[V_{\sigma,\Lambda_3(s)}V_{\sigma,\Lambda_2(t)}^*,A]}=0$$ holds for all $A \in \pi(\caA_{\Lambda_1^c})'$.
\end{lem}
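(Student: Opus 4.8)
The plan is to reduce to the situation of Lemma~\ref{VSigmaLambdaThreeTwoCommutatorPiCalALimit} by approximating a general $A\in\pi(\caA_{\Lambda_1^c})'$ by operators that are \emph{exactly} localized in a slightly enlarged cone around $\Lambda_1$, using approximate Haag duality in the form of Lemma~\ref{usingApproxHaagDuality1}, and then observing that such operators commute exactly with $V_{\sigma,\Lambda_3(s)}V_{\sigma,\Lambda_2(t)}^*$ once $s,t$ are large.

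Concretely, I would first fix $\varepsilon,\delta>0$ with $\varepsilon+\delta<\pi/4$; this guarantees $\abs{\arg\Lambda_1}+4\varepsilon<2\pi$ and will also make the geometric estimate below work. Applying Lemma~\ref{usingApproxHaagDuality1} to $\Lambda=\Lambda_1$ and $A$ produces, for every $r$ above some threshold, an operator $A'_r\in\pi(\caA_{(\Lambda_1(-r))_{\varepsilon+\delta}})''$ with $\norm{A-A'_r}\le 2f(r)\norm{A}$, where $f:=f_{\abs{\arg\Lambda_1},\varepsilon,\delta}$ is decreasing with $f(r)\to0$ as $r\to\infty$.

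The geometric core is the claim that for each fixed $r$ there is a constant $S_0(r)$, proportional to $r$, such that $(\Lambda_1(-r))_{\varepsilon+\delta}\cap(\Lambda_3(s)\cup\Lambda_2(t))=\emptyset$ whenever $s,t\ge S_0(r)$. To see this, note that a point of $(\Lambda_1(-r))_{\varepsilon+\delta}$ with coordinates $(x,y)$ satisfies $y+r>\abs{x}\cot(\pi/8+\varepsilon+\delta)$, while a point of $\Lambda_3(s)$ satisfies $x>s>0$ and $y<x\tan(\pi/8)$; a common point would force $x\bigl(\cot(\pi/8+\varepsilon+\delta)-\tan(\pi/8)\bigr)<r$, and since the bracket is positive for $\varepsilon+\delta<\pi/4$ this bounds $x$ by a constant multiple of $r$, contradicting $x>s\ge S_0(r)$. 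The case of $\Lambda_2(t)$ is identical via the reflection $x\mapsto-x$, under which $(\Lambda_1(-r))_{\varepsilon+\delta}$ is invariant and $\Lambda_2(t)$ becomes $\Lambda_3(t)$.

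Granting this, for $s,t\ge S_0(r)$ we have $\pi(\caA_{(\Lambda_1(-r))_{\varepsilon+\delta}})''\subseteq\pi(\caA_{(\Lambda_3(s)\cup\Lambda_2(t))^c})''$, whereas $V_{\sigma,\Lambda_3(s)}V_{\sigma,\Lambda_2(t)}^*\in\pi(\caA_{(\Lambda_3(s)\cup\Lambda_2(t))^c})'$, as used already in the proof of Lemma~\ref{VSigmaLambdaThreeTwoCommutatorPiCalALimit} via \cite[Lemma~2.2]{MTC}; hence these commute and $[V_{\sigma,\Lambda_3(s)}V_{\sigma,\Lambda_2(t)}^*,A'_r]=0$. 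Using that $V_{\sigma,\Lambda_3(s)}V_{\sigma,\Lambda_2(t)}^*$ is unitary, this gives, for $s,t\ge S_0(r)$,
$$\norm{[V_{\sigma,\Lambda_3(s)}V_{\sigma,\Lambda_2(t)}^*,A]}=\norm{[V_{\sigma,\Lambda_3(s)}V_{\sigma,\Lambda_2(t)}^*,A-A'_r]}\le 2\norm{A-A'_r}\le 4f(r)\norm{A}.$$
Taking $\limsup_{s,t\to\infty}$ bounds the limsup by $4f(r)\norm{A}$ for every large $r$, and letting $r\to\infty$ yields $0$, which is the assertion.

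I expect the main obstacle to be the bookkeeping in the geometric disjointness claim — keeping track of how the enlargement $\varepsilon+\delta$ and the backward shift by $r$ of $\Lambda_1$ interact with the receding apexes of $\Lambda_2(t)$ and $\Lambda_3(s)$ — and, relatedly, being careful that the limit in $s,t$ is taken only after $r$ has been fixed. The remainder is a routine combination of approximate Haag duality with the already established localization of $V_{\sigma,\Lambda_3(s)}V_{\sigma,\Lambda_2(t)}^*$.
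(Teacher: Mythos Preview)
Your proposal is correct and follows essentially the same route as the paper's proof: approximate $A\in\pi(\caA_{\Lambda_1^c})'$ by $A'_r\in\pi(\caA_{(\Lambda_1(-r))_{\varepsilon+\delta}})''$ via Lemma~\ref{usingApproxHaagDuality1}, observe that the enlarged cone is disjoint from $\Lambda_3(s)\cup\Lambda_2(t)$ for $s,t$ large relative to $r$, and conclude using \cite[Lemma~2.2]{MTC}. The only cosmetic difference is that the paper couples $r$ to $\min(s,t)$ directly (taking $r\sim\cot(\pi/8+\varepsilon+\delta)\min(s,t)$) rather than fixing $r$ first and then taking a $\limsup$; the two formulations are equivalent, and your more explicit geometric verification of the disjointness is a welcome addition.
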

\begin{proof}
    Let $A \in \pi(\caA_{\Lambda_1^c})'$.
    Pick $\varepsilon > 0$ and $\delta > 0$ such that $\abs{\arg \ld_1}+4\varepsilon < 2\pi$. For concreteness, pick $\varepsilon=\delta=10^{-3}$. Then, by Lemma \ref{usingApproxHaagDuality1}, for all $r > R_{\abs{\arg \ld_1},\varepsilon}$ there exists $A'_r \in \pi(\caA_{(\ld_1(-r))_{\varepsilon+\delta}})''$ such that $\norm{A-A'_r}\le 2 f_{\abs{\arg \ld_1},\varepsilon,\delta}(r)$.
    If $(\ld_1(-r))_{\varepsilon+\delta} \subset (\ld_3(s)\cup\ld_2(t))^c$, then $A'_r \in \pi(\caA_{(\ld_1(-r))_{\varepsilon+\delta}})'' \subseteq (\pi(\caA_{(\ld_3(s)\cup\ld_2(t))^c})')'$.
    By \cite[Lemma~2.2]{MTC}, $V_{\sigma,\Lambda_3(s)}V_{\sigma,\Lambda_2(t)}^* \in \pi(\caA_{(\ld_3(s)\cup\ld_2(t))^c})'$. Therefore, $A'_r$ commutes with $V_{\sigma,\Lambda_3(s)}V_{\sigma,\Lambda_2(t)}^*$.
    So, whenever $(\ld_1(-r))_{\varepsilon+\delta} \subset (\ld_3(s)\cup\ld_2(t))^c$,
    \begin{align*}
    \norm{[V_{\sigma,\Lambda_3(s)}V_{\sigma,\Lambda_2(t)}^*, A]}&=\norm{[V_{\sigma,\Lambda_3(s)}V_{\sigma,\Lambda_2(t)}^*, A'_r + (A-A'_r)]}\\
    &= \norm{[V_{\sigma,\Lambda_3(s)}V_{\sigma,\Lambda_2(t)}^*, A'_r]+[V_{\sigma,\Lambda_3(s)}V_{\sigma,\Lambda_2(t)}^*, (A-A'_r)]}\\
    &\le \norm{[V_{\sigma,\Lambda_3(s)}V_{\sigma,\Lambda_2(t)}^*, A'_r]}+\norm{[V_{\sigma,\Lambda_3(s)}V_{\sigma,\Lambda_2(t)}^*, (A-A'_r)]}\\
    &\le 0 + 2 \norm{V_{\sigma,\Lambda_3(s)}V_{\sigma,\Lambda_2(t)}^*} \norm{A-A'_r}\\
    &\le 4 f_{\abs{\arg \ld_1},\varepsilon,\delta}(r).
    \end{align*}
    Therefore, pick $r=\max(R_{\abs{\arg \ld_1}},\cot(\frac{\pi}{8} + \varepsilon+\delta) \cdot \min(t,s)  - 1)$, so that for sufficiently large $s,t$, $(\ld_1(-r))_{\varepsilon+\delta} \subset (\ld_3(s)\cup\ld_2(t))^c$ and also so that $r \to \infty$ as $\min(s,t) \to \infty$, so that this upper bound of $4 f_{\abs{\arg \ld_1},\varepsilon,\delta}(r)$ on $\norm{[V_{\sigma,\Lambda_3(s)}V_{\sigma,\Lambda_2(t)}^*, A]}$ goes to $0$ as $t,s \to \infty$. So, $\lim\limits_{s\to\infty}\lim\limits_{t\to\infty} \norm{[V_{\sigma,\Lambda_3(s)}V_{\sigma,\Lambda_2(t)}^*,A]}=0$.
\end{proof}

\begin{lem}\label{VStarVInSigmaPrime}
    Let $\sigma \in O_{\Lambda_0}$. For $i=a,b$ let $\Lambda_i$ be a cone, and $V_{\sigma,\Lambda_i} \in \caV_{\sigma,\Lambda_i}$. 
    Then, $V_{\sigma,\Lambda_b}^*V_{\sigma,\Lambda_a} \in \sigma(\caA_{(\Lambda_a \cup \Lambda_b)^c})'$.
\end{lem}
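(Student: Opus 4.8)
The plan is to verify directly that the unitary $W := V_{\sigma,\Lambda_b}^* V_{\sigma,\Lambda_a}$ commutes with every element of $\sigma(\caA_{(\Lambda_a\cup\Lambda_b)^c})$; since $\sigma(\caA_{(\Lambda_a\cup\Lambda_b)^c})$ is a unital $*$-subalgebra of $B(\caH)$, commuting with all of it is exactly the assertion $W\in\sigma(\caA_{(\Lambda_a\cup\Lambda_b)^c})'$. The one structural observation needed is the elementary set identity $(\Lambda_a\cup\Lambda_b)^c=\Lambda_a^c\cap\Lambda_b^c$, which implies that any $A\in\caA_{(\Lambda_a\cup\Lambda_b)^c}$ lies simultaneously in $\caA_{\Lambda_a^c}$ and in $\caA_{\Lambda_b^c}$. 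It is worth noting here only that $\caA_{(\Lambda_a\cup\Lambda_b)^c}$ is well defined as the inductive-limit algebra even though $\Lambda_a\cup\Lambda_b$ need not be a cone.

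Given such an $A$, I would invoke the defining property of $\caV_{\sigma,\Lambda_i}$, namely $\Ad(V_{\sigma,\Lambda_i})\circ\sigma\vert_{\caA_{\Lambda_i^c}}=\pi\vert_{\caA_{\Lambda_i^c}}$, for $i=a$ and $i=b$ separately. This yields two expressions for $\pi(A)$,
$$
V_{\sigma,\Lambda_a}\,\sigma(A)\,V_{\sigma,\Lambda_a}^* \;=\; \pi(A) \;=\; V_{\sigma,\Lambda_b}\,\sigma(A)\,V_{\sigma,\Lambda_b}^* .
$$
Multiplying this identity on the left by $V_{\sigma,\Lambda_b}^*$ and on the right by $V_{\sigma,\Lambda_a}$ and using unitarity rearranges it to $W\sigma(A)=\sigma(A)W$. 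Since $A\in\caA_{(\Lambda_a\cup\Lambda_b)^c}$ was arbitrary, $W$ lies in the commutant $\sigma(\caA_{(\Lambda_a\cup\Lambda_b)^c})'$, as claimed.

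I do not expect any real obstacle: the statement is essentially a bookkeeping consequence of the superselection criterion together with the convention $\Ad(V)\circ\sigma=\pi$ on the complement of the localizing cone. The only point requiring a little care is to keep the direction of that convention consistent throughout, so that one genuinely obtains two formulas for $\pi(A)$ (rather than for $\sigma(A)$) and the cancellation goes through cleanly.
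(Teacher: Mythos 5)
Your argument is correct and is essentially identical to the paper's proof: both use $(\Lambda_a\cup\Lambda_b)^c=\Lambda_a^c\cap\Lambda_b^c$ and apply the defining intertwining property of $\caV_{\sigma,\Lambda_a}$ and $\caV_{\sigma,\Lambda_b}$ to conclude $\Ad(V_{\sigma,\Lambda_b}^*V_{\sigma,\Lambda_a})\circ\sigma(A)=\sigma(A)$. No issues.
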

\begin{proof}
    This proof is very similar to that of  \cite[Lemma~2.2]{MTC}.
    Let $A \in \caA_{(\Lambda_a \cup \Lambda_b)^c}=\caA_{\Lambda_a^c}\cap\caA_{\Lambda_b^c}$. Then
    $$\Ad(V_{\sigma,\Lambda_b}^*V_{\sigma,\Lambda_a})\circ\sigma(A)=\Ad(V_{\sigma,\Lambda_b}^*)\circ\pi(A)=\sigma(A).$$
    So, for all $  A \in \caA_{(\Lambda_a \cup \Lambda_b)^c}$, $[V_{\sigma,\Lambda_b}^*V_{\sigma,\Lambda_a},\sigma(A)]=0$, i.e. $V_{\sigma,\Lambda_b}^*V_{\sigma,\Lambda_a} \in \sigma(\caA_{(\Lambda_a \cup \Lambda_b)^c})'$.
\end{proof}

\begin{lem}\label{VStarVInPiPrimeIfSigmaIsFromAut}
    Let $\Lambda$ be a cone such that $\Lambda \subseteq \Lambda_0$, and such that $\Lambda$ is disjoint from $\Lambda_3$.
    Let $\tilde{\sigma}\in \Aut(\caA)$ and assume that $\tilde{\sigma}|_{\caA_{\Lambda^c}} = \id_{\caA_{\Lambda^c}}$. Let $\sigma = \pi \circ \tilde{\sigma} \in O_{\Lambda}$. Then for all $s_1, s_2 \ge s \ge 0$ and $V_{\sigma,\Lambda_3(s_1)} \in \caV_{\sigma,\Lambda_3(s_1)}$ and $V_{\sigma,\Lambda_3(s_2)} \in \caV_{\sigma,\Lambda_3(s_2)}$, $V_{\sigma,\Lambda_3(s_2)}^*V_{\sigma,\Lambda_3(s_1)} \in \pi(\caA_{\Lambda_3(s)^c})'$.
\end{lem}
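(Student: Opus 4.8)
The plan is to deduce this from Lemma~\ref{VStarVInSigmaPrime} together with the observation that, because $\tilde\sigma$ is trivial off the cone $\Lambda$ and $\Lambda$ lies in the complement of every $\Lambda_3(s)$, the automorphism $\tilde\sigma$ restricts to an automorphism of $\caA_{\Lambda_3(s)^c}$; this turns the $\sigma$-commutant provided by Lemma~\ref{VStarVInSigmaPrime} into the desired $\pi$-commutant.

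First I would set $W := V_{\sigma,\Lambda_3(s_2)}^*V_{\sigma,\Lambda_3(s_1)}$. Since $\Lambda\subseteq\Lambda_0$ we have $\sigma\in O_{\Lambda_0}$ (indeed $\sigma|_{\caA_{\Lambda_0^c}}=\pi|_{\caA_{\Lambda_0^c}}$ follows from $\tilde\sigma|_{\caA_{\Lambda^c}}=\id$ and $\Lambda_0^c\subseteq\Lambda^c$), so Lemma~\ref{VStarVInSigmaPrime} applied with $\Lambda_a=\Lambda_3(s_1)$, $\Lambda_b=\Lambda_3(s_2)$ gives $W\in\sigma\big(\caA_{(\Lambda_3(s_1)\cup\Lambda_3(s_2))^c}\big)'$. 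A cone absorbs translations along its own axis, i.e.\ $\Lambda_3(s')=\Lambda_3(s)+(s'-s)\bm e_0\subseteq\Lambda_3(s)$ whenever $s'\ge s$; since $s_1,s_2\ge s$ this yields $\Lambda_3(s_1)\cup\Lambda_3(s_2)\subseteq\Lambda_3(s)\subseteq\Lambda_3$, hence $\caA_{\Lambda_3(s)^c}\subseteq\caA_{(\Lambda_3(s_1)\cup\Lambda_3(s_2))^c}$ and therefore $W\in\sigma(\caA_{\Lambda_3(s)^c})'$.

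It remains to identify $\sigma(\caA_{\Lambda_3(s)^c})$ with $\pi(\caA_{\Lambda_3(s)^c})$, i.e.\ to show $\tilde\sigma(\caA_{\Lambda_3(s)^c})=\caA_{\Lambda_3(s)^c}$. Because $\Lambda\cap\Lambda_3=\emptyset$ and $\Lambda_3(s)\subseteq\Lambda_3$, we have $\Lambda\subseteq\Lambda_3(s)^c$, so $\Lambda_3(s)^c$ is the disjoint union of $\Lambda$ and $\Lambda_3(s)^c\setminus\Lambda$, and $\caA_{\Lambda_3(s)^c}$ is generated by $\caA_\Lambda$ and $\caA_{\Lambda_3(s)^c\setminus\Lambda}$. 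On the second generator $\tilde\sigma$ acts as the identity, since $\caA_{\Lambda_3(s)^c\setminus\Lambda}\subseteq\caA_{\Lambda^c}$. And $\tilde\sigma(\caA_\Lambda)=\caA_\Lambda$: for $B\in\caA_\Lambda$ and $A\in\caA_{\Lambda^c}$ one has $[\tilde\sigma(B),A]=\tilde\sigma([B,A])=0$, so $\tilde\sigma(B)\in\caA_{\Lambda^c}'\cap\caA=\caA_\Lambda$, and running the same argument with $\tilde\sigma^{-1}$ (also trivial on $\caA_{\Lambda^c}$) gives the reverse inclusion. Hence both $\tilde\sigma$ and $\tilde\sigma^{-1}$ map the generators of $\caA_{\Lambda_3(s)^c}$ back into $\caA_{\Lambda_3(s)^c}$, so $\tilde\sigma(\caA_{\Lambda_3(s)^c})=\caA_{\Lambda_3(s)^c}$. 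Combining, $W\in\sigma(\caA_{\Lambda_3(s)^c})'=\pi\big(\tilde\sigma(\caA_{\Lambda_3(s)^c})\big)'=\pi(\caA_{\Lambda_3(s)^c})'$, which is the assertion.

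The only ingredient I would flag as not purely formal is the relative commutant identity $\caA_{\Lambda^c}'\cap\caA=\caA_\Lambda$ for infinite $\Lambda$; this is the standard fact for quasi-local (UHF) algebras and follows from the trace-preserving conditional expectations $\bbE_\Gamma$ already fixed in the setup, by averaging any $T\in\caA$ commuting with $\caA_{\Lambda^c}$ over the Haar measure on the unitaries supported in a finite subset $G\subseteq\Lambda^c$, which shows $\bbE_{\bbZ^2\setminus G}(T)=T$ for every such $G$ and hence $T\in\bigcap_{G}\caA_{\bbZ^2\setminus G}=\caA_\Lambda$. Apart from this I anticipate no real obstacle: the remainder is bookkeeping with cone inclusions and the definition of $\caV_{\sigma,\cdot}$.
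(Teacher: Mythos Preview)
Your proof is correct and follows essentially the same line as the paper's: apply Lemma~\ref{VStarVInSigmaPrime}, use the nesting $\Lambda_3(s_1)\cup\Lambda_3(s_2)\subseteq\Lambda_3(s)$, and then identify $\sigma(\caA_{\Lambda_3(s)^c})$ with $\pi(\caA_{\Lambda_3(s)^c})$ by arguing that $\tilde\sigma$ restricts to an automorphism of $\caA_{\Lambda_3(s)^c}$. The paper simply asserts that $\tilde\sigma|_{\caA_\Lambda}$ is an automorphism of $\caA_\Lambda$ without further comment, whereas you spell this out via the relative commutant identity $\caA_{\Lambda^c}'\cap\caA=\caA_\Lambda$; your extra care here is justified and the argument is fine.
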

\begin{proof}
    By Lemma~\ref{VStarVInSigmaPrime}, $V_{\sigma,\Lambda_3(s_2)}^*V_{\sigma,\Lambda_3(s_1)} \in \sigma(\caA_{(\Lambda_3(s_1) \cup \Lambda_3(s_2))^c})' \subseteq \sigma(\caA_{\Lambda_3(s)^c})'$. As $\tilde{\sigma}$ is an automorphism which is the identity when restricted to $\caA_{\Lambda^c}$, it is also an automorphism when restricted to $\caA_{\Lambda}$. And, as $\Lambda_3(s)^c \supseteq \Lambda$, $\tilde{\sigma}$ is also an automorphism when restricted to $\caA_{\Lambda_3(s)^c}$. As such, $\sigma(\caA_{\Lambda_3(s)^c})=\pi(\tilde{\sigma}(\caA_{\Lambda_3(s)^c}))=\pi(\caA_{\Lambda_3(s)^c}).$ So, $V_{\sigma,\Lambda_3(s_2)}^*V_{\sigma,\Lambda_3(s_1)} \in \sigma(\caA_{\Lambda_3(s)^c})' = \pi(\caA_{\Lambda_3(s)^c})'$.   
\end{proof}

\begin{lem}\label{braidingHomSpaceForDisjointRepsSupport}
    Let $\Lambda_a, \Lambda_b \in \caC$ be disjoint subsets of $\Lambda_0$, and let $\sigma_a \in O_{\Lambda_a}$ and $\sigma_b \in O_{\Lambda_b}$.
    Let $R \in \mathrm{Hom}(\sigma_a \otimes \sigma_b, \sigma_b \otimes \sigma_a)$.
    Then $R \in \pi(\caA_{(\Lambda_a \cup \Lambda_b)^c})'$.
\end{lem}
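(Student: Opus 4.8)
The plan is to show that both composite representations $\sigma_a\otimes\sigma_b$ and $\sigma_b\otimes\sigma_a$ act \emph{exactly} as $\pi$ on the subalgebra $\caA_{(\Lambda_a\cup\Lambda_b)^c}$, after which the conclusion is immediate from the intertwining property of $R$. This is the natural analogue, at the level of tensor products, of the proof of Lemma~\ref{VStarVInSigmaPrime}.

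First I would record the set-theoretic identity $\caA_{(\Lambda_a\cup\Lambda_b)^c}=\caA_{\Lambda_a^c}\cap\caA_{\Lambda_b^c}$, and extract from the localization hypotheses that $\sigma_a|_{\caA_{\Lambda_a^c}}=\pi|_{\caA_{\Lambda_a^c}}$ and $\sigma_b|_{\caA_{\Lambda_b^c}}=\pi|_{\caA_{\Lambda_b^c}}$, these being precisely the statements $\unit\in\caV_{\sigma_a,\Lambda_a}$ and $\unit\in\caV_{\sigma_b,\Lambda_b}$ that define $O_{\Lambda_a}$ and $O_{\Lambda_b}$. Since $\Lambda_a,\Lambda_b\subseteq\Lambda_0$ one has $O_{\Lambda_a},O_{\Lambda_b}\subseteq O_{\Lambda_0}$, so that $\sigma_a\otimes\sigma_b=T_{\sigma_a}\circ T_{\sigma_b}\circ\pi$ and $\sigma_b\otimes\sigma_a=T_{\sigma_b}\circ T_{\sigma_a}\circ\pi$ are indeed the representations between which $R$ intertwines.

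The core step is the computation, for $A\in\caA_{(\Lambda_a\cup\Lambda_b)^c}$: using $T_{\sigma_b}\circ\pi=\sigma_b$ and then $A\in\caA_{\Lambda_b^c}$ gives $T_{\sigma_b}(\pi(A))=\sigma_b(A)=\pi(A)$; applying $T_{\sigma_a}$ and using $T_{\sigma_a}\circ\pi=\sigma_a$ and then $A\in\caA_{\Lambda_a^c}$ gives
\[
(\sigma_a\otimes\sigma_b)(A)=T_{\sigma_a}\bigl(T_{\sigma_b}(\pi(A))\bigr)=T_{\sigma_a}(\pi(A))=\sigma_a(A)=\pi(A),
\]
and exchanging the roles of $a$ and $b$ yields $(\sigma_b\otimes\sigma_a)(A)=\pi(A)$ as well. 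Finally, $R\in\mathrm{Hom}(\sigma_a\otimes\sigma_b,\sigma_b\otimes\sigma_a)$ means $R\,(\sigma_a\otimes\sigma_b)(A)=(\sigma_b\otimes\sigma_a)(A)\,R$ for all $A\in\caA$; specializing to $A\in\caA_{(\Lambda_a\cup\Lambda_b)^c}$ gives $R\,\pi(A)=\pi(A)\,R$, i.e.\ $R\in\pi(\caA_{(\Lambda_a\cup\Lambda_b)^c})'$.

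I do not anticipate a genuine obstacle here: the only point deserving a word of care is the implicit use that $\pi(\caA)$ lies in the domain $\caB$ of $T_{\sigma_a}$ and $T_{\sigma_b}$, which is part of the construction in \cite{MTC} (every finite subset of $\bbZ^2$ is contained in a cone of $\caC$, so $\pi(\caA_{\mathrm{loc}})\subseteq\caB$ and then $\pi(\caA)\subseteq\caB$ by norm density). It is also worth remarking that the disjointness of $\Lambda_a$ and $\Lambda_b$ is nowhere used in this argument.
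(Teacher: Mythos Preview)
Your proof is correct and follows essentially the same approach as the paper's own proof, which simply asserts $(\sigma_a\otimes\sigma_b)(A)=\pi(A)=(\sigma_b\otimes\sigma_a)(A)$ for $A\in\caA_{(\Lambda_a\cup\Lambda_b)^c}$ and then invokes the intertwining relation. Your version is in fact more explicit in unpacking the computation through $T_{\sigma_a}$ and $T_{\sigma_b}$, and your closing observation that disjointness of $\Lambda_a,\Lambda_b$ is not actually used is correct.
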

\begin{proof}
    This proof is essentially the same as that of \cite[Lemma~4.2]{MTC}.
    For $A \in \caA_{(\Lambda_a \cup \Lambda_b)^c}$, $\sigma_a \otimes \sigma_b(A) = \pi(A) = \sigma_b \otimes \sigma_a(A)$. As for all $A \in \caA$, $R \cdot (\sigma_a \otimes \sigma_b)(A) = (\sigma_b \otimes \sigma_a)(A) \cdot R$, in particular, for all $A \in \caA_{(\Lambda_a \cup \Lambda_b)^c}$, $R \cdot \pi(A) = R \cdot (\sigma_a \otimes \sigma_b)(A) = (\sigma_b \otimes \sigma_a)(A) \cdot R = \pi(A) \cdot R$. So, $R \in \pi(\caA_{(\Lambda_a \cup \Lambda_b)^c})'$.
\end{proof}
In particular, $\epsilon(\sigma_a, \sigma_b) \in \pi(\caA_{(\Lambda_a \cup \Lambda_b)^c})'$.

\begin{lem}\label{epsilonOfRhoSigmaPrime}
    Suppose $\rho \in O_{\Lambda_1}$ and $\sigma \in O_{\Lambda_0}$, and $V \in \caU(\caH)$ such that $\sigma' = \Ad(V)\circ\sigma \in O_{\Lambda_0}$ as well. Then, $\epsilon(\rho,\sigma')=\Ad(V)(\epsilon(\rho,\sigma)) \cdot V \cdot T_\rho(V^*)$.
\end{lem}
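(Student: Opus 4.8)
The plan is to reduce the identity to the definition of the braiding and to use that $\epsilon(\rho,\cdot)$ is independent of the choice of the implementing unitaries $V_{\sigma,\Lambda_2(s)}\in\caV_{\sigma,\Lambda_2(s)}$. The key preliminary observation is that a valid implementing unitary for $\sigma'$ is obtained from one for $\sigma$ by right multiplication by $V^*$: if $V_{\sigma,\Lambda_2(s)}\in\caV_{\sigma,\Lambda_2(s)}$, then $V_{\sigma,\Lambda_2(s)}V^*\in\caU(\caH)$ and, using $\sigma'=\Ad(V)\circ\sigma$, for every $A\in\caA_{\Lambda_2(s)^c}$ one has
$$
\Ad\lmk V_{\sigma,\Lambda_2(s)}V^*\rmk\circ\sigma'(A)=\Ad\lmk V_{\sigma,\Lambda_2(s)}\rmk\circ\sigma(A)=\pi(A),
$$
so that $V_{\sigma,\Lambda_2(s)}V^*\in\caV_{\sigma',\Lambda_2(s)}$. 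I would therefore evaluate $\epsilon(\rho,\sigma')$ with the choice $V_{\sigma',\Lambda_2(s)}:=V_{\sigma,\Lambda_2(s)}V^*$.

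With this choice, I would substitute into $\epsilon(\rho,\sigma')=\lim_{s\to\infty}V_{\sigma',\Lambda_2(s)}^*\,T_\rho\lmk V_{\sigma',\Lambda_2(s)}\rmk$, use that $T_\rho$ is a $*$-homomorphism to factor $T_\rho\lmk V_{\sigma,\Lambda_2(s)}V^*\rmk=T_\rho\lmk V_{\sigma,\Lambda_2(s)}\rmk T_\rho(V^*)$, and insert $V^*V=\unit$ between the two factors. This puts the $s$-dependent quantity in the form
$$
V_{\sigma',\Lambda_2(s)}^*\,T_\rho\lmk V_{\sigma',\Lambda_2(s)}\rmk=\Ad(V)\lmk V_{\sigma,\Lambda_2(s)}^*\,T_\rho\lmk V_{\sigma,\Lambda_2(s)}\rmk\rmk\cdot V\,T_\rho(V^*).
$$
Since $V\,T_\rho(V^*)$ does not depend on $s$ and $\Ad(V)$ is norm continuous, passing to the limit and recalling that $\epsilon(\rho,\sigma)=\lim_{s\to\infty}V_{\sigma,\Lambda_2(s)}^*\,T_\rho\lmk V_{\sigma,\Lambda_2(s)}\rmk$ yields
$$
\epsilon(\rho,\sigma')=\Ad(V)(\epsilon(\rho,\sigma))\cdot V\,T_\rho(V^*),
$$
which is the claim.

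The only point that is not entirely formal is that $V^*$ must lie in $\caB$, the domain of $T_\rho$, so that the factorization $T_\rho\lmk V_{\sigma,\Lambda_2(s)}V^*\rmk=T_\rho\lmk V_{\sigma,\Lambda_2(s)}\rmk T_\rho(V^*)$ is legitimate; I expect this minor check to be the only potential obstacle rather than anything substantive. It follows from the hypotheses: well-definedness of $\epsilon(\rho,\sigma)$ and of $\epsilon(\rho,\sigma')$ forces $V_{\sigma,\Lambda_2(s)}$ and $V_{\sigma,\Lambda_2(s)}V^*$ into $\caB$, and since $\caB$ is a norm-closed $*$-algebra it then also contains $V^*=V_{\sigma,\Lambda_2(s)}^*\lmk V_{\sigma,\Lambda_2(s)}V^*\rmk$. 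Alternatively, $\sigma$ and $\sigma'$ both restrict to $\pi$ on $\caA_{\Lambda_0^c}$, so $V\in\pi(\caA_{\Lambda_0^c})'$, and approximate Haag duality, Assumption~\ref{assum:2}, places this commutant inside $\caB$. Everything else is the short computation above.
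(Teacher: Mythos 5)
Your proposal is correct and follows essentially the same route as the paper's proof: the same choice $V_{\sigma',\Lambda_2(s)}=V_{\sigma,\Lambda_2(s)}V^*$, the same factorization of $T_\rho$ with $V^*V=\unit$ inserted, and the same justification that $V\in\pi(\caA_{\Lambda_0^c})'\subseteq\caB$ so the splitting is legitimate. Your second argument for $V^*\in\caB$ is exactly the one the paper gives.
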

\begin{proof}
    Picking $V_{\sigma',\Lambda_2(t)}=V_{\sigma,\Lambda_2(t)}V^*$,
    \begin{align*}
        \epsilon(\rho,\sigma')&=\lim_{t\to\infty}V_{\sigma',\Lambda_2(t)}^* T_\rho(V_{\sigma',\Lambda_2(t)})\\
        &=\lim_{t\to\infty}V V_{\sigma,\Lambda_2(t)}^* T_\rho(V_{\sigma,\Lambda_2(t)})T_\rho(V^*)\\
        &=\lim_{t\to\infty}\Ad(V)(V_{\sigma,\Lambda_2(t)}^* T_\rho(V_{\sigma,\Lambda_2(t)})) \cdot V \cdot T_\rho(V^*)\\
        &=\Ad(V)(\epsilon(\rho,\sigma))\cdot V \cdot T_\rho(V^*).
    \end{align*}
    That $V_{\sigma,\Lambda_2(t)}, V \in \caB$ follows from $V_\sigma,\Lambda_2(t) \in \pi(\caA_{(\Lambda_2(t) \cup \Lambda_0)^c})' = \pi(\caA_{\Lambda_0^c})'$ and $V \in \pi(\caA_{(\Lambda_0\cup \Lambda_0)^c})' = \pi(\caA_{\Lambda_0^c})'$ and $\pi(\caA_{\Lambda_0^c})' \subseteq \caB$, so the second equation splitting $T_\rho(V_{\sigma',\Lambda_2(t)})$ into $T_\rho(V_{\sigma,\Lambda_2(t)})T_\rho(V^*)$ is legitimate.
\end{proof}

\begin{lem}
    Let $\Lambda \subset \Lambda_0$ be disjoint from $(\Lambda_3(s_\Lambda))_{2\cdot 10^{-3}}$ for some $s_\Lambda$. Let $\rho \in O_{\Lambda_1}$ and $\sigma \in O_{\Lambda}$. Let $\sigma$ be of the form $\sigma = \pi \circ \tilde{\sigma}$ for some $\tilde{\sigma} \in \Aut(\caA)$ such that $\tilde{\sigma}|_{\caA_{\Lambda^c}} = \id_{\caA_{\Lambda^c}}$.
    For $s > 0$, let $V_{\sigma,\Lambda_3(s)} \in \caV_{\sigma,\Lambda_3(s)}$. Let $\sigma_{\Lambda_3(s)} := \Ad(V_{\sigma,\Lambda_3(s)})\circ \sigma$.
    Then, $\lim\limits_{s\to\infty} \epsilon(\rho,\sigma_{\Lambda_3(s)})$ exists, and is independent of the choice of $V_{\sigma,\Lambda_3(s)} \in \caV_{\sigma,\Lambda_3(s)}$.
\end{lem}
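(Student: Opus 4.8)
The plan is to turn the statement into a Cauchy estimate for the net $s\mapsto \epsilon(\rho,\sigma_{\Lambda_3(s)})$ and to close that estimate with approximate Haag duality. Since $\Lambda\subset\Lambda_0$ and $\Lambda_3(s)\subset\Lambda_3\subset\Lambda_0$ we have $\sigma\in O_{\Lambda_0}$ and $\sigma_{\Lambda_3(s)}\in O_{\Lambda_3(s)}\subset O_{\Lambda_0}$, so $\epsilon(\rho,\sigma)$ and $\epsilon(\rho,\sigma_{\Lambda_3(s)})$ are both defined; moreover, by Lemma~\ref{VStarVInSigmaPrime} (with the pair of cones $\Lambda_3(s)$, $\Lambda_0$) each $V_{\sigma,\Lambda_3(s)}$ lies in $\pi(\caA_{\Lambda_0^c})'\subset\caB$, so $T_\rho$ is a $*$-homomorphism on the products appearing below. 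Applying Lemma~\ref{epsilonOfRhoSigmaPrime} with $V=V_{\sigma,\Lambda_3(s)}$ I would first record the identity
\[
\epsilon(\rho,\sigma_{\Lambda_3(s)})=V_{\sigma,\Lambda_3(s)}\,\epsilon(\rho,\sigma)\,T_\rho\big(V_{\sigma,\Lambda_3(s)}^*\big).
\]

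Next, for $s_\Lambda\le s_1\le s_2$ put $W:=V_{\sigma,\Lambda_3(s_1)}^*V_{\sigma,\Lambda_3(s_2)}\in\caU(\caH)$, so that $V_{\sigma,\Lambda_3(s_1)}=V_{\sigma,\Lambda_3(s_2)}W^*$. Using the displayed identity, the multiplicativity of $T_\rho$, and $\|\epsilon(\rho,\sigma)\|=1$ (it is a norm-limit of unitaries), one gets, after conjugating by the unitaries $V_{\sigma,\Lambda_3(s_2)}$ and $T_\rho(V_{\sigma,\Lambda_3(s_2)}^*)$,
\[
\big\|\epsilon(\rho,\sigma_{\Lambda_3(s_1)})-\epsilon(\rho,\sigma_{\Lambda_3(s_2)})\big\|=\big\|W^*\epsilon(\rho,\sigma)T_\rho(W)-\epsilon(\rho,\sigma)\big\|\le\|T_\rho(W)-W\|+\big\|[\,\epsilon(\rho,\sigma),W\,]\big\|.
\]
Choice-independence would be established by exactly the same estimate with $W:=V_{\sigma,\Lambda_3(s)}^*V'_{\sigma,\Lambda_3(s)}$ for two implementing unitaries in $\caV_{\sigma,\Lambda_3(s)}$.

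The crux is that $W$ is localized far out along $\Lambda_3$. By Lemma~\ref{VStarVInSigmaPrime}, $W\in\sigma(\caA_{\Lambda_3(s_1)^c})'$ (as $\Lambda_3(s_1)\cup\Lambda_3(s_2)=\Lambda_3(s_1)$), and since $s_1\ge s_\Lambda$ forces $\Lambda\subset\Lambda_3(s_1)^c$, the automorphism $\tilde\sigma$ fixes $\caA_{\Lambda_3(s_1)^c}$, whence $\sigma(\caA_{\Lambda_3(s_1)^c})=\pi(\caA_{\Lambda_3(s_1)^c})$ and in fact $W\in\pi(\caA_{\Lambda_3(s_1)^c})'$ --- this is the mechanism of Lemma~\ref{VStarVInPiPrimeIfSigmaIsFromAut}. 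I would then apply approximate Haag duality, Lemma~\ref{usingApproxHaagDuality1}, with $\varepsilon=\delta=10^{-3}$ and back-shift $r=\lfloor s_1/2\rfloor$, to obtain $W'_r\in\pi\big(\caA_{(\Lambda_3(s_1-r))_{2\cdot 10^{-3}}}\big)''$ with $\|W-W'_r\|\le 2 f(r)$, where $f$ is the fixed decreasing function (of $|\arg\Lambda_3|$ and $10^{-3}$) vanishing at infinity. Choosing $r\simeq s_1/2$ rather than pushing it to $s_1$ is essential: the localization cone $(\Lambda_3(s_1-r))_{2\cdot 10^{-3}}$ then has its apex on the positive $x$-axis at distance $\ge s_\Lambda$ from the origin and half-opening angle $\pi/8+2\cdot 10^{-3}<3\pi/8$, so it is at once a cone in $\caC$, contained in $(\Lambda_3(s_\Lambda))_{2\cdot 10^{-3}}$ and therefore disjoint from $\Lambda$, and disjoint from $\Lambda_1$. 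Two consequences close the argument. First, $T_\rho$ restricts to the identity on $\pi(\caA_{\Lambda'})''$ for every $\Lambda'\in\caC$ disjoint from $\Lambda_1$ (because $T_\rho\circ\pi=\rho$, $\rho|_{\caA_{\Lambda_1^c}}=\pi|_{\caA_{\Lambda_1^c}}$, and $T_\rho$ is weakly continuous on $\pi(\caA_{\Lambda'})''$); hence $T_\rho(W'_r)=W'_r$ and $\|T_\rho(W)-W\|\le 2\|W-W'_r\|\le 4 f(r)$. Second, $\epsilon(\rho,\sigma)\in\pi(\caA_{(\Lambda_1\cup\Lambda)^c})'$ (the computation of Lemma~\ref{braidingHomSpaceForDisjointRepsSupport}, which only uses that $\rho\otimes\sigma$ and $\sigma\otimes\rho$ restrict to $\pi$ on $\caA_{(\Lambda_1\cup\Lambda)^c}$), and since $(\Lambda_3(s_1-r))_{2\cdot 10^{-3}}\subset(\Lambda_1\cup\Lambda)^c$ the element $W'_r$ lies in $\pi(\caA_{(\Lambda_1\cup\Lambda)^c})''$, whose commutant contains $\epsilon(\rho,\sigma)$; thus $[W'_r,\epsilon(\rho,\sigma)]=0$ and $\|[\epsilon(\rho,\sigma),W]\|\le 2\|W-W'_r\|\le 4 f(r)$. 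Altogether $\|\epsilon(\rho,\sigma_{\Lambda_3(s_1)})-\epsilon(\rho,\sigma_{\Lambda_3(s_2)})\|\le 8 f(\lfloor s_1/2\rfloor)\to 0$, so the net is Cauchy in $B(\caH)$ and converges in norm, and the twin estimate shows the limit does not depend on the choice of the $V_{\sigma,\Lambda_3(s)}$.

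The step I expect to be the main obstacle is the geometric bookkeeping inside the application of approximate Haag duality: verifying simultaneously, for all large $s_1$ and $r\simeq s_1/2$, that $(\Lambda_3(s_1-r))_{2\cdot 10^{-3}}$ belongs to $\caC$, recedes to infinity, and is disjoint from both $\Lambda_1$ and $\Lambda$ --- it is precisely here that the hypothesis "$\Lambda$ disjoint from $(\Lambda_3(s_\Lambda))_{2\cdot 10^{-3}}$" and the $2\cdot 10^{-3}$ fattening are consumed --- together with keeping track of mutually compatible thresholds $s_1\ge\max\big(2 s_\Lambda,\,2 R_{|\arg\Lambda_3|,10^{-3}}\big)$ for all the estimates to apply at once. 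Once this is in place, the rest is a rearrangement of the identities of Section~\ref{sec:MTC} combined with the norm bounds furnished by approximate Haag duality.
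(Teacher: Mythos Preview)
Your argument is correct and follows essentially the same route as the paper: rewrite $\epsilon(\rho,\sigma_{\Lambda_3(s)})$ via Lemma~\ref{epsilonOfRhoSigmaPrime}, reduce the Cauchy estimate to controlling $W=V_{\sigma,\Lambda_3(s_1)}^*V_{\sigma,\Lambda_3(s_2)}$, localize $W$ in $\pi(\caA_{\Lambda_3(s_1)^c})'$ through Lemma~\ref{VStarVInPiPrimeIfSigmaIsFromAut}, approximate it by an element of $\pi(\caA_{(\Lambda_3(s_1/2))_{2\cdot10^{-3}}})''$ via approximate Haag duality, and use disjointness from $\Lambda_1\cup\Lambda$ to kill both the $T_\rho$-action and the commutator with $\epsilon(\rho,\sigma)$. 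The only cosmetic differences are your two-term norm splitting (yielding $8f$ instead of the paper's $4f$) and your more direct handling of choice-independence; one small wording slip is that $\tilde\sigma$ does not ``fix'' $\caA_{\Lambda_3(s_1)^c}$ pointwise but rather restricts to an automorphism of it, which is what you actually need and what Lemma~\ref{VStarVInPiPrimeIfSigmaIsFromAut} provides.
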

\begin{proof}
We start by showing that the limit exists. This will be done by showing that the sequence is Cauchy.
    By Lemma~\ref{epsilonOfRhoSigmaPrime} , $\epsilon(\rho,\sigma_{\Lambda_3(s)}) = V_{\sigma, \Lambda_3(s)} \epsilon(\rho, \sigma) T_\rho(V_{\sigma, \Lambda_3(s)}^*)$.
Using that $T_\rho(V_{\sigma, \Lambda_3(s)}^*)$ is unitary, for $s_1, s_2 > s > 0$, \begin{align*}
        \Vert\epsilon(\rho,\sigma_{\Lambda_3(s_2)}) &- \epsilon(\rho,\sigma_{\Lambda_3(s_1)})\Vert\\
        &=\norm{V_{\sigma, \Lambda_3(s_2)} \epsilon(\rho, \sigma) T_\rho(V_{\sigma, \Lambda_3(s_2)}^*) - V_{\sigma, \Lambda_3(s_1)} \epsilon(\rho, \sigma) T_\rho(V_{\sigma, \Lambda_3(s_1)}^*)}\\
        &=\norm{V_{\sigma, \Lambda_3(s_1)}^*V_{\sigma, \Lambda_3(s_2)} \epsilon(\rho, \sigma) T_\rho(V_{\sigma, \Lambda_3(s_2)}^*)T_\rho(V_{\sigma, \Lambda_3(s_1)}) -  \epsilon(\rho, \sigma)}\\
        &=\norm{V_{\sigma, \Lambda_3(s_1)}^*V_{\sigma, \Lambda_3(s_2)} \epsilon(\rho, \sigma) T_\rho(V_{\sigma, \Lambda_3(s_2)}^* V_{\sigma, \Lambda_3(s_1)}) -  \epsilon(\rho, \sigma)}.
    \end{align*}
    Let $V_{\sigma,s_2,s_1} = V_{\sigma, \Lambda_3(s_2)}^* V_{\sigma, \Lambda_3(s_1)}$ so the above becomes $$\norm{\epsilon(\rho,\sigma_{\Lambda_3(s_2)}) - \epsilon(\rho,\sigma_{\Lambda_3(s_1)})}=\norm{V_{\sigma,s_2,s_1}^* \epsilon(\rho,\sigma) T_\rho(V_{\sigma,s_2,s_1}) - \epsilon(\rho,\sigma)}.$$

    By Lemma~\ref{VStarVInPiPrimeIfSigmaIsFromAut}, $V_{\sigma,s_2,s_1} \in \pi(\caA_{\Lambda_3(s)^c})'$. 
    By Lemma~\ref{braidingHomSpaceForDisjointRepsSupport}, $\epsilon(\rho,\sigma)\in \pi(\caA_{(\Lambda_1 \cup \Lambda)^c})'$.
    By Lemma~\ref{usingApproxHaagDuality1}, for $s > 2R_{\abs{\arg \Lambda_3}, \varepsilon}$, setting $V_{\sigma,s_2,s_1,s} = \Ad(\tilde{U}_{r})(V_{\sigma,s_2,s_1})$, $V_{\sigma,s_2,s_1,s} \in \pi(\caA_{(\Lambda_3(s-\frac{s}{2}))_{\varepsilon+\delta}})''$, and $\norm{V_{\sigma,s_2,s_1}-V_{\sigma,s_2,s_1,s}}\le 2 f_{\abs{\arg \Lambda_3},\varepsilon,\delta}(\frac{s}{2})$. For concreteness, pick $\varepsilon=\delta=10^{-3}$.
    For $s > \max(2 s_\Lambda, 2 R_{\abs{\arg \Lambda_3}, 10^{-3}})$, $(\Lambda_3(\frac{s}{2}))_{2\cdot 10^{-3}}\subseteq (\Lambda_3(s_\Lambda))_{2\cdot 10^{-3}}$ and is therefore disjoint from $\Lambda$, and $(\Lambda_3(\frac{s}{2}))_{2\cdot 10^{-3}}\subseteq (\Lambda_3)_{2\cdot 10^{-3}}$ and is therefore disjoint from $\Lambda_1$, and so $(\Lambda_3(\frac{s}{2}))_{2\cdot 10^{-3}}$ is disjoint from $\Lambda_1 \cup \Lambda$.
    Therefore $[V_{\sigma,s_2,s_1,s}^*, \epsilon(\rho,\sigma)]=0$, and we decompose
    $$
        V_{\sigma,s_2,s_1}^* \epsilon(\rho,\sigma) = \epsilon(\rho,\sigma) V_{\sigma,s_2,s_1,s}^* + (V_{\sigma,s_2,s_1}-V_{\sigma,s_2,s_1,s})^*\, \epsilon(\rho,\sigma).
$$
    As $(\Lambda_3(\frac{s}{2}))_{2\cdot 10^{-3}} \in \caC$, $T_\rho$ is weak-continuous on $\pi(\caA_{(\Lambda_3(\frac{s}{2}))_{2\cdot 10^{-3}}})''$, and as $(\Lambda_3(\frac{s}{2}))_{2\cdot 10^{-3}} \subseteq \Lambda_1^c$, $T_\rho$ is the identity on $\pi(\caA_{(\Lambda_3(\frac{s}{2}))_{2\cdot 10^{-3}}})$, so together we get that it is also the identity on $\pi(\caA_{(\Lambda_3(\frac{s}{2}))_{2\cdot 10^{-3}}})''$, and so 
    $T_\rho(V_{\sigma,s_2,s_1,s})=V_{\sigma,s_2,s_1,s}$. We get,
    \begin{align*}
        T_\rho(V_{\sigma,s_2,s_1})&=
        T_\rho(V_{\sigma,s_2,s_1,s}+(V_{\sigma,s_2,s_1}-V_{\sigma,s_2,s_1,s}))\\
        &=V_{\sigma,s_2,s_1,s}+T_\rho(V_{\sigma,s_2,s_1}-V_{\sigma,s_2,s_1,s}).
    \end{align*}
    Therefore, using again that $[V_{\sigma,s_2,s_1,s}^*, \epsilon(\rho,\sigma)]=0$,
    \begin{multline*} V_{\sigma,s_2,s_1}^* \epsilon(\rho,\sigma) T_\rho(V_{\sigma,s_2,s_1}) =
    \epsilon(\rho,\sigma) 
    +(\epsilon(\rho,\sigma) V_{\sigma,s_2,s_1,s}^*)T_\rho(V_{\sigma,s_2,s_1}-V_{\sigma,s_2,s_1,s}) \\
    + (V_{\sigma,s_2,s_1}-V_{\sigma,s_2,s_1,s})^*\, \epsilon(\rho,\sigma) T_\rho(V_{\sigma,s_2,s_1}).
    \end{multline*}
    So 
    \begin{align*}
    \norm{V_{\sigma,s_2,s_1}^* \epsilon(\rho,\sigma) T_\rho(V_{\sigma,s_2,s_1}) - \epsilon(\rho,\sigma)} &\le \norm{\epsilon(\rho,\sigma) V_{\sigma,s_2,s_1,s}^*} \norm{T_\rho(V_{\sigma,s_2,s_1}-V_{\sigma,s_2,s_1,s})} \\ & \quad + \norm{(V_{\sigma,s_2,s_1}-V_{\sigma,s_2,s_1,s})^*} \norm{\epsilon(\rho,\sigma) T_\rho(V_{\sigma,s_2,s_1})} \\ 
    &\le 2 \norm{\epsilon(\rho,\sigma)} \norm{V_{\sigma,s_2,s_1}-V_{\sigma,s_2,s_1,s}}\\
    &\le 2 \cdot 2  f_{\abs{\arg \Lambda_3},10^{-3},10^{-3}}(\frac{s}{2}),
    \end{align*}
     which goes to $0$ as $s \to \infty$. Therefore, the sequence $(\epsilon(\rho,\sigma_{\Lambda_3(s)}))_{s \in \bbN}$ is Cauchy, 
and the sequence converges, i.e.,
    $$\theta(\rho,\sigma) = \lim_{s\to\infty}\epsilon(\rho,\sigma_{\Lambda_3(s)})$$
    exists.
    
Inspecting the proof, we showed that for any choice of $V_{\sigma,\Lambda_3(s)}$, we have
$$
\| \theta(\rho,\sigma)  - \epsilon(\rho,\sigma_{\Lambda_3(s)}) \| \leq 4 f_{\abs{\arg \Lambda_3},10^{-3},10^{-3}}(\frac{s}{2}).
$$
We will use this to show that the limit is independent of the choice of  $V_{\sigma,\Lambda_3(s)} \in \caV_{\sigma,\Lambda_3(s)}$.

Let $V_{\sigma,\Lambda_3(s)}, V'_{\sigma,\Lambda_3(s)}$, where for each $s$, $V_{\sigma,\Lambda_3(s)},V'_{\sigma,\Lambda_3(s)} \in \caV_{\sigma,\Lambda_3(s)}$, be two choices. Now consider a third choice, a  sequence $V''_{\sigma,\Lambda_3(s)}$ which for $s < s'$ has $V''_{\sigma,\Lambda_3(s)} = V'_{\sigma,\Lambda_3(s)}$, but for $s \ge s'$ has $V''_{\sigma,\Lambda_3(s)} = V_{\sigma,\Lambda_3(s)}$. By the above bound the limit point of the sequence, which is $\theta(\rho, \sigma)$, has a distance bounded by $4 f_{\abs{\arg \Lambda_3},10^{-3},10^{-3}}(\frac{s}{2}) $ from the limit point of the sequence corresponding to the choice $V'_{\sigma,\Lambda_3(s)}$.

So, the limit exists and is independent of the choice of $V_{\sigma,\Lambda_3(s)}$, as desired.
\end{proof}

\begin{lem}\label{thetaRhoSigmaIsHomRhoRho}
 Let $\rho \in O_{\Lambda_1}$ and $\sigma \in O_{\Lambda_0}$. Suppose that $\sigma$ be of the form $\sigma = \pi \circ \tilde{\sigma}$ for some $\tilde{\sigma} \in \Aut(\caA)$ such that $\tilde{\sigma}|_{\caA_{\Lambda^c}} = \id_{\caA_{\Lambda^c}}$. Then, $\theta(\rho,\sigma) \in \mathrm{Hom}(\rho,\rho)$.
\end{lem}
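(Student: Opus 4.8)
The plan is to exploit that, by definition~(\ref{eq:theta}) together with the preceding lemma, $\theta(\rho,\sigma)$ is the \emph{norm} limit $\lim_{s\to\infty}\epsilon(\rho,\sigma_{\Lambda_3(s)})$, where $\sigma_{\Lambda_3(s)}:=\Ad(V_{\sigma,\Lambda_3(s)})\circ\sigma\in O_{\Lambda_3(s)}$, and that each $\epsilon(\rho,\sigma_{\Lambda_3(s)})$ lies in $\mathrm{Hom}\big(\rho\otimes\sigma_{\Lambda_3(s)},\,\sigma_{\Lambda_3(s)}\otimes\rho\big)$ by \cite{MTC}. Since $\rho$ and left/right multiplication by the bounded operator $\theta(\rho,\sigma)$ are norm continuous, it suffices to verify $\theta(\rho,\sigma)\,\rho(A)=\rho(A)\,\theta(\rho,\sigma)$ for every $A\in\caA_{\mathrm{loc}}$; so I would fix such an $A$, supported in a finite set $F$.

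First I would observe that for $s$ large enough that $F\subset\Lambda_3(s)^c$ the intertwining relation collapses. Indeed $\sigma_{\Lambda_3(s)}\in O_{\Lambda_3(s)}$ gives $T_{\sigma_{\Lambda_3(s)}}(\pi(A))=\sigma_{\Lambda_3(s)}(A)=\pi(A)$, whence $(\rho\otimes\sigma_{\Lambda_3(s)})(A)=T_\rho(\pi(A))=\rho(A)$, while $(\sigma_{\Lambda_3(s)}\otimes\rho)(A)=T_{\sigma_{\Lambda_3(s)}}(T_\rho(\pi(A)))=T_{\sigma_{\Lambda_3(s)}}(\rho(A))$. Thus the intertwining relation reads
\[
\epsilon(\rho,\sigma_{\Lambda_3(s)})\,\rho(A)=T_{\sigma_{\Lambda_3(s)}}\big(\rho(A)\big)\,\epsilon(\rho,\sigma_{\Lambda_3(s)}),
\]
and the whole lemma reduces to proving $\lim_{s\to\infty}\big\|T_{\sigma_{\Lambda_3(s)}}(\rho(A))-\rho(A)\big\|=0$: granted that, letting $s\to\infty$ and using $\epsilon(\rho,\sigma_{\Lambda_3(s)})\to\theta(\rho,\sigma)$ in norm yields $\theta(\rho,\sigma)\rho(A)=\rho(A)\theta(\rho,\sigma)$, and density of $\caA_{\mathrm{loc}}$ in $\caA$ finishes the proof.

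To handle that limit I would first localise $\rho(A)$. Choose a cone $\Lambda_F\in\caC$ with axis in the direction $\tfrac\pi2$, small half-opening (say $\tfrac\pi4$), and apex far enough in the $-\bm e_{\pi/2}$ direction that $\Lambda_1\cup F\subset\Lambda_F$; such a $\Lambda_F$ exists and avoids the forbidden direction. Since $\Lambda_F^c\subset\Lambda_1^c$, for $C\in\caA_{\Lambda_F^c}$ we have $T_\rho(\pi(C))=\rho(C)=\pi(C)$ and $\pi(A)\pi(C)=\pi(C)\pi(A)$; applying the $\ast$-homomorphism $T_\rho$ then gives $\rho(A)\in\pi(\caA_{\Lambda_F^c})'$. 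With $\varepsilon=\delta=10^{-3}$, Lemma~\ref{usingApproxHaagDuality1} applied to $\Lambda_F$ yields, for $r$ large, some $A'_r\in\pi\big(\caA_{(\Lambda_F(-r))_{\varepsilon+\delta}}\big)''$ with $\|\rho(A)-A'_r\|\le 2 f_{\abs{\arg\Lambda_F},\varepsilon,\delta}(r)\,\|\rho(A)\|$. The cone $(\Lambda_F(-r))_{\varepsilon+\delta}$ still lies in $\caC$, and --- because $\Lambda_1$ points ``upward'' while $\Lambda_3(s)$ recedes along the positive $x$-axis --- for each fixed $r$ it is disjoint from $\Lambda_3(s)$ once $s$ is large; for such $s$ we have $(\Lambda_F(-r))_{\varepsilon+\delta}\subset\Lambda_3(s)^c$, so $T_{\sigma_{\Lambda_3(s)}}$, being the identity on $\pi(\caA_{\Lambda_3(s)^c})$ and weakly continuous on $\pi\big(\caA_{(\Lambda_F(-r))_{\varepsilon+\delta}}\big)''$, fixes $A'_r$. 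Using that $T_{\sigma_{\Lambda_3(s)}}$ is norm contractive, $\big\|T_{\sigma_{\Lambda_3(s)}}(\rho(A))-\rho(A)\big\|\le 2\|\rho(A)-A'_r\|$, hence $\limsup_{s\to\infty}\big\|T_{\sigma_{\Lambda_3(s)}}(\rho(A))-\rho(A)\big\|\le 4 f_{\abs{\arg\Lambda_F},\varepsilon,\delta}(r)\|\rho(A)\|$ for every $r$; letting $r\to\infty$ forces the limit to be $0$.

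I expect the third paragraph to be the genuine obstacle: organising the two nested scales --- the approximate-Haag-duality radius $r$ and the winding parameter $s$ --- so that for each fixed $r$ one may first send $s\to\infty$, and carefully checking the (elementary but fiddly) planar geometry guaranteeing that a fixed cone containing $\Lambda_1\cup F$, together with its backward translates and small angular enlargements, stays disjoint from $\Lambda_3(s)$ for all sufficiently large $s$. The remaining ingredients --- the collapse of the tensor products for local $A$, the localisation $\rho(A)\in\pi(\caA_{\Lambda_F^c})'$, and the final passage to the limit --- are routine.
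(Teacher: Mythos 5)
Your argument is correct, and its skeleton (reduce to $A\in\caA_{\mathrm{loc}}$, collapse the tensor products so that the intertwining relation reads $\epsilon(\rho,\sigma_{\Lambda_3(s)})\rho(A)=T_{\sigma_{\Lambda_3(s)}}(\rho(A))\,\epsilon(\rho,\sigma_{\Lambda_3(s)})$, then show $T_{\sigma_{\Lambda_3(s)}}(\rho(A))\to\rho(A)$ in norm via approximate Haag duality) coincides with the paper's. Where you genuinely diverge is in the execution of that last step, and your route is simpler. The paper localizes $\rho(A)$ by writing $T_\rho(\pi(A))=\Ad(V_{\rho,K_{\Lambda_1(-r)}}^*)(\pi(A))$ for a carefully constructed auxiliary cone $K_{\Lambda_1(-r)}$ (chosen distal from $\Lambda_1(-r)$, clockwise of it, with explicit angular data), concludes $\rho(A)\in\pi(\caA_{C_r^c})'$ for the join $C_r=\Lambda_1(-r)\vee K_{\Lambda_1(-r)}$, and then applies Lemma~\ref{usingApproxHaagDuality1} to $C_r$ with the translation parameter tied to the winding parameter ($t_2=t\tan(\pi/8)-r$), so a single limit $t\to\infty$ suffices. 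You instead observe that $\rho(A)\in\pi(\caA_{\Lambda_F^c})'$ for \emph{any} cone $\Lambda_F\supset\Lambda_1\cup\supp A$ by the one-line computation $\rho(A)\pi(C)=\rho(AC)=\rho(CA)=\pi(C)\rho(A)$, apply approximate Haag duality to $\Lambda_F$, and decouple the two scales (first $s\to\infty$ at fixed $r$, giving $\limsup_s\le 4f(r)\|\rho(A)\|$, then $r\to\infty$). This bypasses the paper's most delicate geometric bookkeeping entirely; the only geometry you need — that the fixed, slightly widened, backward-translated cone $(\Lambda_F(-r))_{\varepsilon+\delta}$ is eventually disjoint from $\Lambda_3(s)$ — is the elementary check you correctly flag, and it does hold (for half-opening $\pi/4+2\cdot10^{-3}<\pi/2-\pi/8$ the two cones separate once $s$ exceeds roughly the apex depth of $\Lambda_F(-r)$). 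Both arguments are sound; yours trades the paper's explicit single-limit construction for a cleaner localization plus an interchange-of-limits argument via $\limsup$.
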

\begin{proof}
    The task is to show that for all $A \in \caA$, $\theta(\rho,\sigma) \rho(A) = \rho(A) \theta(\rho,\sigma)$. In fact, by density, it is enough to show it for $A \in \mathcal{A}_\loc$.
Let $\sigma_{\Lambda_3(t)} := \Ad(V_{\sigma,\Lambda_3(t)})\circ \sigma$. For all $t$, $\epsilon(\rho,\sigma_{\Lambda_3(t)}) \in \mathrm{Hom}(\rho \otimes \sigma_{\Lambda_3(t)}, \sigma_{\Lambda_3(t)} \otimes \rho)$. Pick $r \in \bbN$ such that $A \in \caA_{\Lambda_1(-r)}$. 
    For $t > \cot(\frac{\pi}{2}-\frac{\pi}{8}) r$, $\Lambda_1(-r) \subset \Lambda_3(t)^c$, and so $T_{\sigma_{\Lambda_3(t)}}|_{\pi(\caA_{\Lambda_1(-r)})} = \id$.
    Therefore, $\rho \otimes \sigma_{\Lambda_3(t)} (A) = T_\rho \circ T_{\sigma_{\Lambda_3(t)}}\circ \pi(A) = T_\rho\circ\pi(A)=\rho(A)$, and we get
    \begin{align*}
     \epsilon(\rho,\sigma_{\Lambda_3(t)}) \cdot \rho(A) &= \epsilon(\rho,\sigma_{\Lambda_3(t)}) \cdot (\rho \otimes \sigma_{\Lambda_3(t)}) (A)\\
     &= (\sigma_{\Lambda_3(t)} \otimes \rho)(A) \cdot \epsilon(\rho,\sigma_{\Lambda_3(t)})\\
     &= (T_{\sigma_{\Lambda_3(t)}}\circ T_\rho \circ \pi (A)) \cdot \epsilon(\rho,\sigma_{\Lambda_3(t)})\\
     &= T_{\sigma_{\Lambda_3(t)}}(\rho(A)) \cdot \epsilon(\rho,\sigma_{\Lambda_3(t)}).
    \end{align*}
    And so,
    \begin{align*}
        \theta(\rho,\sigma) \cdot \rho(A) &= \lim_{t\to\infty}\epsilon(\rho,\sigma_{\Lambda_3(t)}) \cdot \rho(A)\\
        &= \lim_{t\to\infty} T_{\sigma_{\Lambda_3(t)}}(\rho(A)) \cdot \epsilon(\rho,\sigma_{\Lambda_3(t)}).
    \end{align*}
    As for all $t$, $\epsilon(\rho,\sigma_{\Lambda_3(t)})$ is a unitary, and $\theta(\rho,\sigma) = \lim_{t\to\infty}\epsilon(\rho,\sigma_{\Lambda_3(t)})$, we conclude that $\theta(\rho,\sigma) \cdot \rho(A) \cdot \theta(\rho,\sigma)^* = \lim_{t\to\infty} T_{\sigma_{\Lambda_3(t)}}(\rho(A))$, and in particular that the limit on the right hand side exists.

    Now to conclude the proof we need to show that this limit is equal to $\rho(A)$. We will use many cones below, and we summarize their position in Figure~\ref{AppendixCones}.
    As $A \in \caA_{\Lambda_1(-r)}$, $\rho(A)=T_\rho\circ \pi(A)=\Ad( V_{\rho,K_{\Lambda_1(-r)}}^*)(\pi(A))$, where $K_{\Lambda_1(-r)}$ can be chosen to be any cone in $\caC$ which is
    \begin{enumerate}
        \item distal from $\Lambda_1(-r)$ with forbidden direction that of $\caC$ (for the definition of distal see \cite{MTC}, we will only use that such a cone exists) and
        \item clockwise between $\Lambda_1(-r)$ and the forbidden direction.
    \end{enumerate}
    For $C_r = \Lambda_1(-r) \vee \Lambda_1 \vee K_{\Lambda_1(-r)} = \Lambda_1(-r) \vee K_{\Lambda_1(-r)}$
    (the smallest cone including both $\Lambda_1(-r)$ and $K_{\Lambda_1(-r)}$ ), $\pi(\caA_{\Lambda_1(-r)})\subseteq \pi(\caA_{C_r}) \subseteq \pi(\caA_{C_r^c})'$, we have $ V_{\rho,K_{\Lambda_1(-r)}}^*  \in \pi(\caA_{(\Lambda_1 \cup K_{\Lambda_1(-r)})^c})' \subseteq \pi(\caA_{C_r^c})'$, and so $\rho(A)=T_{\rho,\Lambda_1(-r)}(\pi(A))=\Ad(V_{\rho,K_{\Lambda_1(-r)}}^*)(\pi(A)) \in \pi(\caA_{C_r^c})'$.
    Now we want to use approximate Haag duality to find elements of $\caB$ which approximate this and on which $T_{\sigma_{\Lambda_3(t)}}$ acts as the identity $T_{\sigma_{\Lambda_3(t)}}|_{\pi(\caA_{\Lambda_3(t)^c})}=\id_{\pi(\caA_{\Lambda_3(t)^c})}$. So, we want to pick $K_{\Lambda_1(-r)}$ so that we can find expanded versions of the corresponding $C_r$, to get arbitrarily good (as $t \to \infty$) approximations to $\rho(A)$ there, and where these expanded versions of $C_r$ are both elements of $\caC$ and subsets of $\Lambda_3(t)^c$. So, we want to pick an interval of directions which is a little bit clockwise of the interval of directions for $\Lambda_1(-r)$, and a basepoint, so that even after moving it back and widening it a little, it will still be disjoint from $\Lambda_1(-r) = \Lambda_{-r e_{\frac{\pi}{2}}, \frac{16\pi}{32}, \frac{4\pi}{32}}$. Choose the interval of directions for it to be $(\frac{9\pi}{32}-\frac{\pi}{32}, \frac{9\pi}{32}+\frac{\pi}{32})$. Then, for the basepoint, start with the basepoint of $\Lambda_1(-r)$ (where the only intersection would be the common basepoint), and move it forwards from there by enough to make $K_{\Lambda_1(-r)}$ distal from $\Lambda_1(-r)$.
    Specifically, let $\vec{x}_{r} = (-r) \bm{e}_{\frac{\pi}{2}} + (R_{2\frac{\pi}{32},\varepsilon}+2) \bm{e}_{\frac{9\pi}{32}}$, for $\varepsilon = \frac{\pi}{64}$, and 
    let $K_{\Lambda_1(-r)} = \Lambda_{\vec{x}_r,\frac{9\pi}{32},\frac{\pi}{32}}$.
    To check that $K_{\Lambda_1(-r)}$ is distal from $\Lambda_1(-r)$, pick $\varepsilon=\frac{\pi}{64}$ and see that as $\frac{\pi}{64} < (\frac{\pi}{2}-\frac{\pi}{8})-(\frac{9\pi}{32}+\frac{\pi}{32})$, that the range of directions for $(\Lambda_{\vec{x}_r,\frac{9\pi}{32},\frac{\pi}{32}})_{\varepsilon}$ and $\Lambda_1(-r)=\Lambda_{-r e_{\frac{\pi}{2}}, \frac{\pi}{2}, \frac{\pi}{8}}$ are disjoint, and so $(\Lambda_{\vec{x}_r,\frac{9\pi}{32},\frac{\pi}{32}} - R_{2 \\\frac{\pi}{32},\varepsilon} \vec{e}\bm{e}_{\frac{9\pi}{32}})_{\varepsilon} = \Lambda_{(-r \bm{e}_{\frac{\pi}{2}} +2 \bm{e}_{\frac{9\pi}{32}}),\frac{9\pi}{32},\frac{\pi}{32}+\varepsilon}$ is disjoint from $\Lambda_1(-r)$. From this, and that $(K_{\Lambda_1(-r)})_{\varepsilon}$ and $(\Lambda_1(-r))_\varepsilon$ are disjoint element of $\caC$, we have that $K_{\Lambda_1(-r)}$ is distal from $\Lambda_1(-r)$ with forbidden direction $(\frac{3\pi}{2}-\frac{\pi}{4},\frac{3\pi}{2}+\frac{\pi}{4})$. It is also clockwise from $\Lambda_1(-r)$ with respect to the forbidden direction. Therefore it is a valid choice for $K_{\Lambda_1(-r)}$.

\begin{center}
\begin{figure}
\includegraphics[width=0.5\textwidth]{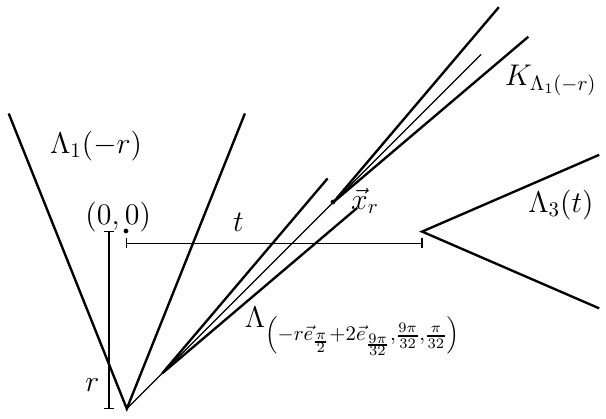}
\caption{The cones used in the proof of Lemma~\ref{thetaRhoSigmaIsHomRhoRho}}
\label{AppendixCones}
\end{figure}
\end{center}   
    With this choice of $K_{\Lambda_1(-r)}$, $C_r = \Lambda_1(-r) \vee K_{\Lambda_1(-r)} = \Lambda_{-r \bm{e}_{\frac{\pi}{2}}, \frac{7\pi}{16},\frac{3\pi}{16}}$.

    As $\rho(A) \in \pi(\caA_{C_r^c})'$, by Lemma~\ref{usingApproxHaagDuality1} , for $X_{t_2} = \Ad(\Tilde{U}_{{t_2}})(\rho(A))$ we have that, for $t_2 > R_{\abs{\arg C_r},\varepsilon}$, $X_{t_2} \in \pi(\caA_{(C_r)_{\varepsilon+\delta} - t_2 \bm{e}_{C_r}})''$ and $\norm{X_{t_2} - \rho(A)} < 2 \norm{\rho(A)} f_{\abs{\arg C_r},\varepsilon,\delta}(t_2)$.
    For $\varepsilon+\delta < \frac{\pi}{2}$, $((C_r)_{\varepsilon+\delta} - t_{2} \bm{e}_{C_r}) \in \caC$, and so $T_{\sigma_{\Lambda_3(t)}}$ is strongly continuous on $\pi(\caA_{((C_r)_{\varepsilon+\delta} - t_2 \bm{e}_{C_r})})''$. To have $((C_r)_{\varepsilon+\delta} - t_2 \bm{e}_{C_r}) \subset \Lambda_3(t)^c$, we need $\varepsilon+\delta < \frac{\pi}{8}$, and $t > \cot(\frac{\pi}{4}-(\varepsilon+\delta)) \cdot (r + t_2 \cdot (\sin(\frac{7\pi}{16})) - t_2 \cdot \cos(\frac{7\pi}{16})$ (this condition is obtained from the base point of $\Lambda_3(t)$ being to the right of the line which extends the right edge of the cone $((C_r)_{\varepsilon+\delta} - t_2 \bm{e}_{C_r})$). So, it suffices that $\varepsilon+\delta < \frac{\pi}{8}$ and $t \ge \cot(\frac{\pi}{8}) \cdot (r + t_2)$. So, we can set $t_2 = t \tan(\frac{\pi}{8}) - r$. Now having $((C_r)_{\varepsilon+\delta} - t_2 \bm{e}_{C_r}) \subset \Lambda_3(t)^c$, we have that $T_{\sigma_{\Lambda_3(t)}}$ is the identity on $\pi(\caA_{((C_r)_{\varepsilon+\delta} - t_2 \bm{e}_{C_r})})$, and so by the weak continuity is the identity on $\pi(\caA_{((C_r)_{\varepsilon+\delta} - t_2 \bm{e}_{C_r})})''$, and so $T_{\sigma_{\Lambda_3(t)}}(X_{t_2})=X_{t_2}$.
    Because both $\rho(A)$ and $X_{t_2}$ are elements of $\caB$, we have
    \begin{align*}
        T_{\sigma_{\Lambda_3(t)}}(\rho(A)) &= T_{\sigma_{\Lambda_3(t)}}(X_{t_2} + (\rho(A) - X_{t_2}))\\
        &= \rho(A) - (\rho(A) - X_{t_2}) + T_{\sigma_{\Lambda_3(t)}}(\rho(A) - X_{t_2}).
    \end{align*}
    Therefore, 
    \begin{align*}
    \norm{T_{\sigma_{\Lambda_3(t)}}(\rho(A)) - \rho(A)}  &\le \norm{(\rho(A) - X_{t_2}}+\norm{T_{\sigma_{\Lambda_3(t)}}(\rho(A) - X_{t_2})} \\ 
    &\le 4 \norm{\rho(A)} f_{\abs{\arg C_r},\varepsilon,\delta}(t_2),
    \end{align*}
     which goes to $0$ as $t$, and therefore $t_2$, goes to infinity.
\end{proof}

\begin{lem}\label{epsilonRhoPrimeSigma}
    Suppose $\rho \in O_{\Lambda_1}$, $\sigma \in O_{\Lambda_0}$, and $V \in \caU(\caH)$ is such that $\rho'=\Ad(V)\circ\rho \in O_{\Lambda_1}$.
    Then, $\epsilon(\rho',\sigma)=\lim_{t\to\infty}[[V_{\sigma,\Lambda_2(t)}^*,V]] \cdot \Ad(V)(\epsilon(\rho,\sigma))$.
\end{lem}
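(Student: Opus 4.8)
The plan is to reduce the claim to the defining formula for the braiding together with a functoriality property of the map $\sigma\mapsto T_\sigma$. By definition,
\begin{equation*}
\epsilon(\rho',\sigma)=\lim_{t\to\infty}V_{\sigma,\Lambda_2(t)}^*\,T_{\rho'}\!\left(V_{\sigma,\Lambda_2(t)}\right),
\end{equation*}
so the first task is to express $T_{\rho'}$ through $T_\rho$. First I would note that, since $\rho,\rho'\in O_{\Lambda_1}$, for every $A\in\caA_{\Lambda_1^c}$ we have $\pi(A)=\rho'(A)=\Ad(V)(\rho(A))=\Ad(V)(\pi(A))$, hence $V\in\pi(\caA_{\Lambda_1^c})'$; as $\Lambda_1\in\caC$, approximate Haag duality gives $\pi(\caA_{\Lambda_1^c})'\subseteq\caB$ (cf.\ Remark~\ref{neko}), so $V\in\caB$ and $\Ad(V)$ restricts to a $*$-automorphism of $\caB$.

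Next I would identify $T_{\rho'}=\Ad(V)\circ T_\rho$ on $\caB$. Indeed, $\Ad(V)\circ T_\rho$ is a $*$-homomorphism of $\caB$, it satisfies $\left(\Ad(V)\circ T_\rho\right)\circ\pi=\Ad(V)\circ\rho=\rho'$, and it is weakly continuous on each $\pi(\caA_\Lambda)''$, $\Lambda\in\caC$, because $T_\rho$ is and conjugation by a fixed unitary is weakly continuous on $B(\caH)$. By the uniqueness of $T_{\rho'}$ recalled in Section~\ref{sec:MTC}, the two maps coincide, and therefore
\begin{equation*}
\epsilon(\rho',\sigma)=\lim_{t\to\infty}V_{\sigma,\Lambda_2(t)}^*\,V\,T_\rho\!\left(V_{\sigma,\Lambda_2(t)}\right)V^*.
\end{equation*}

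Finally I would replace $T_\rho(V_{\sigma,\Lambda_2(t)})$ by $V_{\sigma,\Lambda_2(t)}\,\epsilon(\rho,\sigma)$ inside this limit. This is legitimate because $\norm{V_{\sigma,\Lambda_2(t)}^*T_\rho(V_{\sigma,\Lambda_2(t)})-\epsilon(\rho,\sigma)}\to 0$ is precisely the convergence defining $\epsilon(\rho,\sigma)$, and since left and right multiplication by the unitaries $V_{\sigma,\Lambda_2(t)}^*V$ and $V^*$ preserves the norm, the difference between the two product sequences tends to $0$; as the left-hand side converges, so does the modified sequence, to the same limit. Expanding,
\begin{align*}
\epsilon(\rho',\sigma)&=\lim_{t\to\infty}V_{\sigma,\Lambda_2(t)}^*\,V\,V_{\sigma,\Lambda_2(t)}\,\epsilon(\rho,\sigma)\,V^*\\
&=\lim_{t\to\infty}\left(V_{\sigma,\Lambda_2(t)}^*VV_{\sigma,\Lambda_2(t)}V^*\right)V\epsilon(\rho,\sigma)V^*\\
&=\lim_{t\to\infty}[[V_{\sigma,\Lambda_2(t)}^*,V]]\,\Ad(V)(\epsilon(\rho,\sigma)),
\end{align*}
which is the asserted identity. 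I do not expect a genuine obstacle here; the only delicate points are the membership $V\in\caB$---needed both to make sense of $\Ad(V)$ as an endomorphism of $\caB$ and to invoke the uniqueness characterization of $T_{\rho'}$---and the bookkeeping that justifies performing the replacement of $T_\rho(V_{\sigma,\Lambda_2(t)})$ inside the limit.
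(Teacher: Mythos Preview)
Your proof is correct and follows essentially the same approach as the paper's: both start from the definition of $\epsilon(\rho',\sigma)$, use $T_{\rho'}=\Ad(V)\circ T_\rho$, and then perform the same algebraic rearrangement to extract the group commutator $[[V_{\sigma,\Lambda_2(t)}^*,V]]$. You are in fact more careful than the paper at this spot, explicitly verifying $V\in\caB$ and invoking the uniqueness characterization of $T_{\rho'}$, whereas the paper simply writes $T_{\rho'}=\Ad(V)\circ T_\rho$ without comment.
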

\begin{proof}
    \begin{align*}
        \epsilon(\rho',\sigma)&=\lim_{t\to\infty}V_{\sigma,\Lambda_2(t)}^* T_{\rho'}(V_{\sigma,\Lambda_2(t)})\\
        &=\lim_{t\to\infty}V_{\sigma,\Lambda_2(t)}^* \Ad(V)\circ T_{\rho}(V_{\sigma,\Lambda_2(t)})\\
        &=\lim_{t\to\infty}V_{\sigma,\Lambda_2(t)}^* V T_{\rho}(V_{\sigma,\Lambda_2(t)}) V^*\\
        &=\lim_{t\to\infty}(V_{\sigma,\Lambda_2(t)}^* V V_{\sigma,\Lambda_2(t)} V^*) V (V_{\sigma,\Lambda_2(t)}^* T_{\rho}(V_{\sigma,\Lambda_2(t)})) V^*\\
        &=\lim_{t\to\infty} [[V_{\sigma,\Lambda_2(t)}^*,V]] \Ad(V)(V_{\sigma,\Lambda_2(t)}^* T_{\rho}(V_{\sigma,\Lambda_2(t)}))\\
        &=\lim_{t\to\infty} [[V_{\sigma,\Lambda_2(t)}^*,V]]\cdot \Ad(V)(\epsilon(\rho,\sigma))
    \end{align*}
\end{proof}

\subsection*{Acknowledgements}
M.C. and M.F. were supported in part by the NSF under grant DMS-2407290.  S.B. was supported by NSERC of Canada.
Y.O. was supported by JSPS KAKENHI Grant Number 19K03534 and 22H01127.
She was also supported by JST CREST Grant Number JPMJCR19T2.
Part of this work was done during the visit of M.F. in Kyoto with the support of CREST. M.F. is grateful for hospitality at the Research Institute for Mathematical Sciences at Kyoto University.


\begin{thebibliography}{10}

\bibitem{Arovas}
D.~Arovas, J.~R.~Schrieffer and F.~Wilczek.
\newblock Fractional statistics and the quantum Hall effect. 
\newblock {\em Phys. Rev. Lett.}, 53(7):722, 1984.

\bibitem{HigherBerry}
A.~Artymowicz, A.~Kapustin, and N.~Sopenko. 
\newblock{Quantization of the higher Berry curvature and the higher Thouless pump.}
\newblock {\em Commun.~Math.~Phys.}, 405(8):191, 2024.

\bibitem{AvronSeiler}
J.E. Avron and R.~Seiler.
\newblock Quantization of the Hall conductance for general, multiparticle Schr{\"o}dinger Hamiltonians. 
\newblock {\em Phys. Rev. Lett.}, 54(4):259, 1985.

\bibitem{ASS90}
J.E. Avron, R.~Seiler, and B.~Simon.
\newblock Quantum {Hall} effect and the relative index for projections.
\newblock {\em Phys. Rev. Lett.}, 65(17):2185--2188, 1990.

\bibitem{QHE1}
S.~Bachmann, A.~Bols, W.~De~Roeck, and M.~Fraas.
\newblock Quantization of conductance in gapped interacting systems.
\newblock {\em Ann. H. Poincar\'{e}}, 19(3):695--708, 2018.

\bibitem{RationalIndex}
S.~Bachmann, A.~Bols, W.~De~Roeck, and M.~Fraas.
\newblock Rational indices for quantum ground state sectors.
\newblock {\em J. Math. Phys.  Jan 1}, 62(1), 2021.
  2021.

\bibitem{MBIndex}
S.~Bachmann, A.~Bols, W.~De~Roeck, and M.~Fraas.
\newblock A many-body index for quantum charge transport.
\newblock {\em Commun. Math. Phys.}, 375(2):1249--1272, 2020.

\bibitem{BBDFanyon}
S.~Bachmann, A.~Bols, W.~De~Roeck, and M.~Fraas.
 \newblock Many-body Fredholm index for ground-state spaces and Abelian anyons.
\newblock {\em Phys. Rev. B}, 101(8):085138, 2020.

\bibitem{BDFJ}
S.~Bachmann, A.~Bols, W.~De~Roeck, M.~Fraas and T.~Jappens.
\newblock A classification of G-charge Thouless pumps in 1D invertible states.
\newblock {\em Commun. Math. Phys.}, 405(7):157, 2024.

\bibitem{Adiabatic}
S.~Bachmann, W.~De Roeck and M.~Fraas.
\newblock The adiabatic theorem and linear response theory for extended quantum systems.
\newblock {\em Commun. Math. Phys.}, 361, 997--1027, 2018.

\bibitem{ExactLR}
S.~Bachmann, W.~De~Roeck, M.~Fraas, and M.~Lange.
\newblock Exactness of linear response in the quantum Hall effect.
\newblock {\em arXiv preprint arXiv:2006.13301}, 2020.

\bibitem{BolsVadnerkar}
A.~Bols and S.~Vadnerkar.
\newblock Classification of the anyon sectors of Kitaev's quantum double model.
\newblock{arXiv preprint arXiv:2310.19661} 2023.

\bibitem{TQO1}
S.~Bravyi, M.~Hastings and S.~Michalakis.
\newblock Topological quantum order: stability under local perturbations.
\newblock {\em J. Math. Math. Phys.}, 51(9), 2010.

\bibitem{AnyonDHR}
D.~Buchholz, K.~Fredenhagen. 
\newblock Locality and the structure of particle states. 
\newblock {\em Commun. Math. Phys.}, 84, 1--54, 1982.

\bibitem{CNN} M.~Cha, P.~Naaijkens, B.~Nachtergaele. 
\newblock On the stability of charges in infinite quantum spin systems. 
\newblock {\em Commun. Math. Phys.}, 373(1), 219--264, 2020.

\bibitem{DHR}
S.~Doplicher, R.~Haag, J.E.~Roberts. 
\newblock Local observables and particle statistics I \& II. 
\newblock {\em Commun. Math. Phys.} 23, 199--230, 1971 \& 35, 49--85, 1974.

\bibitem{FRS}
K.~Fredenhagen,K.-H.~Rehren and B.~Schroer. 
\newblock Superselection sectors with braid group statistics and exchange algebras: I. General theory. 
\newblock {\em Commun. Math. Phys.} 125, 201--226, 1989.

\bibitem{FrohlichKerler}
J.~Fr\"ohlich and T.~Kerler.
\newblock Universality in quantum Hall systems.
\newblock {\em Nucl. Phys. B}, 354(2-3), 369--417, 1991.

\bibitem{FrohlichSchweigert}
J.~Fr\"ohlich, B.~Pedrini, C.~Schweigert and J.~Walcher.
\newblock Universality in quantum Hall systems: coset construction of incompressible states
\newblock {\em J. Stat. Phys.}, 103:527--567, 2001.

\bibitem{FrohlichReview}
J.~Fr\"ohlich, U.M.~Studer and E.~Thiran,
\newblock Quantum Theory of Large Systems of Non-Relativistic Matter. Proceedings of 
\newblock {\em 1994 Les Houches Summer School}, 1994.

\bibitem{GMP} A.~Giuliani, V.~Mastropietro, and M.~Porta. 
\newblock Universality of the Hall conductivity in interacting electron systems. 
\newblock {\em Commun. Math. Phys.}, 349(1), 1107--1161, 2017.

\bibitem{Haag}
R.~Haag. 
\newblock {\em Local quantum physics: Fields, particles, algebras.}
\newblock Springer Science \& Business Media, 2012.

\bibitem{HalvorsonMueger}
H.~Halvorsonand and M. M\"uger. 
\newblock {\em Algebraic quantum field theory.}
\newblock arXiv preprint math-ph/0602036, 2006.

\bibitem{HastingsMichalakis}
M.B. Hastings and S.~Michalakis.
\newblock Quantization of {Hall} conductance for interacting electrons on a
  torus.
\newblock {\em Commun. Math. Phys.}, 334:433--471, 2015.

\bibitem{kapustin2020hall}
A.~Kapustin and N.~Sopenko.
\newblock Hall conductance and the statistics of flux insertions in gapped
  interacting lattice systems.
\newblock {\em J. Math. Phys.}, 61(10):101901, 2020.

\bibitem{KleinSeiler}
M.~Klein and R.~Seiler.
\newblock Power-law corrections to the Kubo formula vanish in quantum Hall systems.
\newblock {\em Commun. Math. Phys.}, 128:141--60, 1990.

\bibitem{QDouble}
A.~Kitaev.
\newblock Fault-tolerant quantum computation by anyons.
\newblock {\em Ann. of Physics}, 303(1):2--30, 2003.

\bibitem{KitaevHoneycomb}
A.~Kitaev.
\newblock Anyons in an exactly solved model and beyond.
\newblock {\em Ann. of Physics}, 321:2--111, 2006.

\bibitem{Kitzling}
K.~Klitzing, G.~Dorda and M.~Pepper.
\newblock New Method for High-Accuracy Determination of the Fine-Structure Constant Based on Quantized Hall Resistance
\newblock {\em Phys. Rev. Lett.}, 45(6):494--497, 1980.

\bibitem{Laughlin}
R.B. Laughlin.
\newblock {Quantized Hall conductivity in two dimensions}.
\newblock {\em Phys. Rev. B}, 23(10):5632, 1981.

\bibitem{LaughlinWavefunction}
R.B.~Laughlin.
\newblock Superconducting ground state of noninteracting particles obeying fractional statistics.
\newblock{Phys. Rev. Lett.} 60(25):2677, 1988.

\bibitem{YoungTQO}
A,~Lucia, A.~Moon and A.~Young.
\newblock Stability of the spectral gap and ground state indistinguishability for a decorated AKLT model.
\newblock {\em Ann. Henri Poincar\'e}, 25, 3603--3648, 2024.

\bibitem{LundholmReview}
D.~Lundholm. 
\newblock Properties of 2D anyon gas. 
\newblock {\em arXiv preprint arXiv:2303.09544}, 2023.

\bibitem{TQO2}
S.~Michalakis and J.P.~Swolak.
\newblock Stability of Frustration-Free Hamiltonians.
\newblock {\em Commun. Math. Phys.}, 322, 277--302, 2013.

\bibitem{MonacoTeufel}
D.~Monaco and S.~Teufel.
\newblock Adiabatic currents for interacting fermions on a lattice.
\newblock{Rev. Math. Phys.} 31(03):1950009, 2019.

\bibitem{NaaijkensEndomorphisms} P.~Naaijkens. 
\newblock Localized endomorphisms in Kitaev's toric code on the plane.
\newblock {\em Rev. Math. Phys.}, 23(4): 347--373, 2011.

\bibitem{N2015} P.~Naaijkens. 
\newblock Kitaev's quantum double model from a local quantum physics point of view.
\newblock {\em Advances in algebraic quantum field theory}, 365-395, 2015.

\bibitem{Nbook} 
P.~ Naaijkens. 
\newblock Quantum Spin Systems on Infinite Lattices. 
\newblock {\em Lecture Notes in Physics}, 933, 2017.

\bibitem{BrunoClustering}
B.~Nachtergaele and R.~Sims.
\newblock Lieb-Robinson bounds and the exponential clustering theorem.
\newblock {\em Commun. Math. Phys.}, 265(1):119--130, 2006.

\bibitem{LRYoshiko}
B.~Nachtergaele, Y.~Ogata, and R.~Sims.
\newblock Propagation of correlations in quantum lattice systems.
\newblock {\em J. Stat. Phys}, 124(1):1--13, 2006.

\bibitem{MTC}
Y.~Ogata.
\newblock A derivation of braided C*-tensor categories from gapped ground
  states satisfying the approximate Haag duality.
\newblock {\em J. Math. Phys.}, 63(1):011902, 2022.

\bibitem{Rehren}
K.-H.~Rehren.
\newblock Braid Group Statistics and their Superselection Rules.
\newblock In {\em Proceedings of the Convegno Internationale `Algebraic Theory of Superselection Sectors and Field Theory'} World Scientific, 1990.

\bibitem{SopenkoThesis}
N.~Sopenko.
\newblock Topological invariants of gapped quantum lattice systems.
\newblock {\em Caltech PhD Thesis}, 2023.

\bibitem{T}
H.~Tasaki.
\newblock The Lieb-Schultz-Mattis Theorem: A Topological Point of View.
\newblock In R.L. Frank, A. Laptev, M. Lewin, and R. Seiringer eds., {\em The Physics and Mathematics of Elliott Lieb} vol.~2, pp.~405--446, European Mathematical Society Press, 2022.

\bibitem{Tsui}
D.~C.~Tsui, H.~L.~Stormer, A.~C.~Gossard. 
\newblock Two-dimensional magnetotransport in the extreme quantum limit. 
\newblock {\em Phys. Rev. Lett.}, 48 (22): 1559, 1982.

\bibitem{WenNiu}
X.-G.~Wen and Q.~Niu.
\newblock Ground-state degeneracy of the fractional quantum Hall states in the presence of a random potential and on high-genus Riemann surfaces.
\newblock{Phys. Rev. B}, 41(13):9377, 1990.




\end{thebibliography}
\end{document}